\pdfoutput=1
\documentclass[11pt]{article}
\usepackage[margin=1in]{geometry}
\usepackage[T1]{fontenc}
\usepackage{lmodern}
\usepackage{amsmath,amssymb,bm}
\usepackage{amsthm}
\usepackage{graphicx,xcolor,booktabs,enumitem,authblk,adjustbox}
\usepackage{algorithm,algpseudocode}
\usepackage[font=small,labelfont=bf,labelsep=colon]{caption}
\usepackage[hyphens]{url}
\definecolor{linkblue}{RGB}{25,25,112}
\usepackage[colorlinks=true,linkcolor=linkblue,citecolor=linkblue,urlcolor=black]{hyperref}
\theoremstyle{definition}
\newtheorem{theorem}{Theorem}
\newtheorem{corollary}{Corollary}
\newtheorem{proposition}{Proposition}
\algrenewcommand\algorithmicrequire{\textbf{Given}}
\renewcommand{\topfraction}{0.9}\renewcommand{\bottomfraction}{0.8}
\renewcommand{\textfraction}{0.07}\renewcommand{\floatpagefraction}{0.8}
\newcommand{\dropcap}[1]{#1}
\newcommand{\Endparasplit}{}
\newcommand{\JournalTitle}[1]{#1}
\newcommand{\addtabletext}[1]{\par\smallskip{\footnotesize #1\par}}

\newcommand{\SfourCellsExt}{432}

\newcommand{\SfourMpair}{50,000}
\newcommand{\SfourMref}{400,000}

\newcommand{\SfourRefUcb}{0.04749}

\newcommand{\SoneBeNparMax}{0.0486}
\newcommand{\SoneBeParMax}{0.0504}
\newcommand{\SoneBeRatio}{17}
\newcommand{\SoneBeWorst}{0.0587}

\newcommand{\SoneMaxDev}{5.4}
\newcommand{\SoneNrepA}{2,000,000}
\newcommand{\SoneNrepB}{400,000}
\newcommand{\SoneTabRQAdvTh}{0.50000}
\newcommand{\SoneTabRQAdv}{0.50029}
\newcommand{\SoneTabRQIndTh}{0.40126}
\newcommand{\SoneTabRQInd}{0.40141}
\newcommand{\SoneTabSAdvTh}{0.10000}
\newcommand{\SoneTabSAdv}{0.09999}
\newcommand{\SoneTabSIndTh}{0.09750}
\newcommand{\SoneTabSInd}{0.09757}
\newcommand{\SoneTabTAdvTh}{0.15000}
\newcommand{\SoneTabTAdv}{0.14992}
\newcommand{\SoneTabTIndTh}{0.14263}
\newcommand{\SoneTabTInd}{0.14253}
\newcommand{\SoneTabUAdvTh}{0.20000}
\newcommand{\SoneTabUAdv}{0.19985}
\newcommand{\SoneTabUIndTh}{0.18549}
\newcommand{\SoneTabUInd}{0.18523}
\newcommand{\SoneTabVAdvTh}{0.25000}
\newcommand{\SoneTabVAdv}{0.25054}
\newcommand{\SoneTabVIndTh}{0.22622}
\newcommand{\SoneTabVInd}{0.22659}
\newcommand{\SoneTabYAdvTh}{0.40000}
\newcommand{\SoneTabYAdv}{0.39973}
\newcommand{\SoneTabYIndTh}{0.33658}
\newcommand{\SoneTabYInd}{0.33657}

\newcommand{\SthreeAtC}{0.28}
\newcommand{\SthreeFactor}{2.4}

\newcommand{\SthreeM}{200,000}
\newcommand{\SthreeSwitch}{0.2936}
\newcommand{\SthreeTheta}{0.8926}
\newcommand{\SthreeWorstA}{0.1036}
\newcommand{\SthreeWorstB}{0.1141}
\newcommand{\SthreeWorstC}{0.1204}
\newcommand{\StwoAlphaZero}{0.0452}
\newcommand{\StwoCells}{252}

\newcommand{\StwoGamma}{0.0048}
\newcommand{\StwoMaxTie}{0.05159}
\newcommand{\StwoMaxUcbRes}{0.04933}
\newcommand{\StwoMaxUcb}{0.05411}
\newcommand{\StwoM}{100,000}
\newcommand{\StwoPowerLossMax}{2.88}
\newcommand{\StwoPowerLossMean}{1.27}
\newcommand{\StwoResARQQQ}{0.0016}

\newcommand{\StwoResAVQ}{0.0066}

\newcommand{\EthreeM}{3}
\newcommand{\EthreeK}{5}
\newcommand{\EthreeAlphaZero}{0.045}
\newcommand{\EthreeCobf}{2.077}
\newcommand{\EthreeRefUcb}{0.0466}
\newcommand{\EthreePeekTie}{0.142}
\newcommand{\EthreePeekBound}{0.1449}
\newcommand{\EthreePeekPower}{0.848}
\newcommand{\EthreeMetricTie}{0.144}
\newcommand{\EthreeMetricBound}{0.1466}
\newcommand{\EthreeMetricPower}{0.827}
\newcommand{\EthreeBothTie}{0.370}
\newcommand{\EthreeBothBound}{0.3737}
\newcommand{\EthreeBothPower}{0.897}
\newcommand{\EthreeObfTie}{0.046}
\newcommand{\EthreeObfBound}{0.0481}
\newcommand{\EthreeObfPower}{0.776}
\newcommand{\EthreeBonfTie}{0.045}
\newcommand{\EthreeBonfBound}{0.0482}
\newcommand{\EthreeBonfPower}{0.654}
\newcommand{\EthreeRefPower}{0.785}
\newcommand{\EthreeObfInfo}{81}

\newcommand{\EtwoAPe}{95.09}
\newcommand{\EtwoALo}{90.76}
\newcommand{\EtwoAHi}{99.62}
\newcommand{\EtwoAPubpe}{95.09}
\newcommand{\EtwoAPublo}{90.76}
\newcommand{\EtwoAPubhi}{99.62}
\newcommand{\EtwoAN}{18}
\newcommand{\EtwoBPe}{71.10}
\newcommand{\EtwoBLo}{51.45}
\newcommand{\EtwoBHi}{98.26}
\newcommand{\EtwoBPubpe}{71.10}
\newcommand{\EtwoBPublo}{51.45}
\newcommand{\EtwoBPubhi}{98.26}
\newcommand{\EtwoBN}{18}
\newcommand{\EtwoCPe}{58.56}
\newcommand{\EtwoCLo}{39.41}
\newcommand{\EtwoCHi}{87.03}
\newcommand{\EtwoCPubpe}{58.56}
\newcommand{\EtwoCPublo}{39.41}
\newcommand{\EtwoCPubhi}{87.03}
\newcommand{\EtwoCN}{13}
\newcommand{\EtwoDPe}{71.10}
\newcommand{\EtwoDLo}{51.45}
\newcommand{\EtwoDHi}{98.26}
\newcommand{\EtwoDPubpe}{71.10}
\newcommand{\EtwoDPublo}{51.45}
\newcommand{\EtwoDPubhi}{98.26}
\newcommand{\EtwoDN}{18}
\newcommand{\EtwoEPe}{91.83}
\newcommand{\EtwoELo}{55.71}
\newcommand{\EtwoEHi}{151.37}
\newcommand{\EtwoEPubpe}{91.83}
\newcommand{\EtwoEPublo}{55.71}
\newcommand{\EtwoEPubhi}{151.37}
\newcommand{\EtwoEN}{18}
\newcommand{\EtwoFPe}{99.89}
\newcommand{\EtwoFLo}{93.37}
\newcommand{\EtwoFHi}{106.86}
\newcommand{\EtwoFPubpe}{99.89}
\newcommand{\EtwoFPublo}{93.37}
\newcommand{\EtwoFPubhi}{106.86}
\newcommand{\EtwoFN}{100}
\newcommand{\EtwoGPe}{92.15}
\newcommand{\EtwoGLo}{88.46}
\newcommand{\EtwoGHi}{95.99}
\newcommand{\EtwoGPubpe}{92.15}
\newcommand{\EtwoGPublo}{88.46}
\newcommand{\EtwoGPubhi}{95.99}
\newcommand{\EtwoGN}{1000}
\newcommand{\EtwoHPe}{93.42}
\newcommand{\EtwoHLo}{86.81}
\newcommand{\EtwoHHi}{100.55}
\newcommand{\EtwoHPubpe}{93.42}
\newcommand{\EtwoHPublo}{86.81}
\newcommand{\EtwoHPubhi}{100.55}
\newcommand{\EtwoHN}{717}

\newcommand{\EtwoNsets}{8}
\newcommand{\EtwoDSoneCVwR}{46.96}
\newcommand{\EtwoDSonePubCVwR}{47.0}
\newcommand{\EtwoDStwosixPE}{151.29}
\newcommand{\EtwoDStwosixPubPE}{151.3}

\newcommand{\EfourRounds}{8}
\newcommand{\EfourAdopted}{5}

\newcommand{\EfourEscStart}{0.643}
\newcommand{\EfourEscFinal}{0.119}
\newcommand{\EfourPowerStart}{0.255}
\newcommand{\EfourPowerFinal}{0.566}
\newcommand{\EfourMaxBound}{0.0478}
\newcommand{\EfourCostUsd}{0.20}
\newcommand{\EfourDeltaFirst}{0.0152}
\newcommand{\EfourDeltaLast}{0.00024}
\newcommand{\EfourRecomputeWall}{25 seconds}

\newcommand{\EfourGRChange}{dropout\_route $\to$ completers}

\newcommand{\EfourGRBound}{0.0474}
\newcommand{\EfourGSChange}{qc\_remediation $\to$ declared\_mapping}

\newcommand{\EfourGSBound}{0.0477}
\newcommand{\EfourGTChange}{sensitivity\_role $\to$ promote\_if\_supports}

\newcommand{\EfourGTBound}{0.0626}
\newcommand{\EfourGUChange}{dropout\_route $\to$ information\_recovery}

\newcommand{\EfourGUBound}{0.0618}
\newcommand{\EfourGVChange}{high\_cv\_escalate\_above $\to$ 0.5}

\newcommand{\EfourGVBound}{0.0477}
\newcommand{\EfourGWChange}{exclusion\_threshold $\to$ 3.0}

\newcommand{\EfourGWBound}{0.0501}
\newcommand{\EfourGXChange}{exclusion\_threshold $\to$ 3.5}

\newcommand{\EfourGXBound}{0.0478}
\newcommand{\EfourGYChange}{sensitivity\_role $\to$ escalate\_on\_disagreement}

\newcommand{\EfourGYBound}{0.0477}

\newcommand{\EoneSoneCQNamedEsc}{0}

\newcommand{\EoneSoneCQNamedN}{550}

\newcommand{\EoneSoneCSNamedEsc}{65}

\newcommand{\EoneSoneCTNamedEsc}{42}

\newcommand{\EoneSoneCUNoneEsc}{241}

\newcommand{\EwvHkSubX}{1}
\newcommand{\EwvHkSubN}{402}

\newcommand{\EwvSnSubX}{0}
\newcommand{\EwvSnSubN}{750}

\newcommand{\EwvOpSubX}{0}
\newcommand{\EwvOpSubN}{402}

\newcommand{\EtruthGR}{0.0392}
\newcommand{\EtruthGS}{0.0427}
\newcommand{\EtruthGT}{0.0564}
\newcommand{\EtruthGU}{0.0548}
\newcommand{\EtruthGV}{0.0427}
\newcommand{\EtruthGW}{0.0450}
\newcommand{\EtruthGX}{0.0429}
\newcommand{\EtruthGY}{0.0417}
\newcommand{\EtruthAdmMax}{0.0429}

\newcommand{\ENoGateMax}{0.0658}
\newcommand{\ENoGateInvalidGens}{5}
\newcommand{\ErepN}{25}
\newcommand{\ErepAdoptLo}{3}
\newcommand{\ErepAdoptHi}{6}
\newcommand{\ErepEscLo}{0.076}
\newcommand{\ErepEscHi}{0.223}
\newcommand{\ErepPowLo}{0.541}
\newcommand{\ErepPowHi}{0.580}

\newcommand{\ErepProposals}{191}
\newcommand{\ErepAdoptMed}{5}
\newcommand{\ErepEscMed}{0.115}
\newcommand{\ErepPowMed}{0.569}
\newcommand{\ErepInvalidProposed}{47}
\newcommand{\ErepConservRefusals}{14}
\newcommand{\ErepUniqueAdm}{33}
\newcommand{\ErepAdmTruthMax}{0.0450}

\newcommand{\EscrRounds}{12}
\newcommand{\EscrRefused}{8}
\newcommand{\EscrAdopted}{4}
\newcommand{\EscrPow}{0.579}
\newcommand{\EscrEsc}{0.076}
\newcommand{\EscrMaxAdm}{0.0478}

\newcommand{\ErepRoundsPlanned}{200}
\newcommand{\ErepStops}{7}
\newcommand{\ErepInterfaceRej}{0}

\newcommand{\EfiveCells}{12}
\newcommand{\EfiveM}{50{,}000}
\newcommand{\EfiveCoreUcb}{0.0448}
\newcommand{\EfivePickWorst}{0.0684}
\newcommand{\EfivePickBound}{0.0735}
\newcommand{\EfiveTauWorst}{0.0843}
\newcommand{\EfiveTauBound}{0.0924}
\newcommand{\EfiveRtWorst}{0.0554}
\newcommand{\EfiveRtBound}{0.0603}
\newcommand{\EfiveCbWorst}{0.0343}
\newcommand{\EfiveCbBound}{0.0411}
\newcommand{\EfivePowCore}{0.432}
\newcommand{\EfivePowCombo}{0.627}
\newcommand{\EfivePowRt}{0.220}
\newcommand{\EfivePowPick}{0.475}
\newcommand{\EnvFlagWorst}{0.0239}
\newcommand{\EnvCoinWorst}{0.0220}
\newcommand{\EnvAdvWorst}{0.0343}
\newcommand{\EnvFlagPow}{0.542}
\newcommand{\EnvCoinPow}{0.484}
\newcommand{\EnvAdvPow}{0.627}
\newcommand{\EnvFhShareNull}{10\%}
\newcommand{\EnvFhShareAlt}{54\%}
\newcommand{\EfiveRcalAlpha}{0.034}
\newcommand{\EfiveRcalWorst}{0.0393}
\newcommand{\EfiveRcalBound}{0.0450}
\newcommand{\EfivePowRcal}{0.600}
\newcommand{\EfiveAlphaZero}{0.040}

\newcommand{\FoneGHalf}{14}
\newcommand{\FoneGHalfStrict}{139}
\newcommand{\FonePairedFirst}{1{,}414}
\newcommand{\FoneScratchFirst}{28{,}008}

\newcommand{\FonePairedLast}{4{,}050}

\newcommand{\FoneCostPairedA}{1{,}414}
\newcommand{\FoneCostScratchA}{28{,}008}
\newcommand{\FoneCostFactorA}{19.8}
\newcommand{\FoneCostPairedB}{3{,}700}
\newcommand{\FoneCostScratchB}{197{,}187}
\newcommand{\FoneCostFactorB}{53}
\newcommand{\FoneCostPairedC}{7{,}402}
\newcommand{\FoneCostScratchC}{796{,}052}
\newcommand{\FoneCostFactorC}{108}
\newcommand{\FoneNestedPairedA}{4{,}479}
\newcommand{\FoneNestedScratchA}{50{,}407}
\newcommand{\FoneNestedFactorA}{11.3}
\newcommand{\FoneNestedPairedB}{11{,}928}
\newcommand{\FoneNestedScratchB}{114{,}862}
\newcommand{\FoneNestedFactorB}{9.6}
\newcommand{\FoneNestedPairedC}{55{,}248}
\newcommand{\FoneNestedScratchC}{465{,}016}
\newcommand{\FoneNestedFactorC}{8.4}
\newcommand{\FoneDiscCombinedA}{5{,}781}
\newcommand{\FoneDiscFactorA}{4.8}
\newcommand{\FoneDiscCombinedB}{14{,}250}
\newcommand{\FoneDiscFactorB}{2.0}
\newcommand{\FoneDiscCombinedC}{22{,}875}
\newcommand{\FoneDiscFactorC}{1.2}
\newcommand{\FoneReserveLossLow}{2.3}
\newcommand{\FoneReserveLossHigh}{4.4}
\newcommand{\FoneReservePowerLow}{74}
\newcommand{\FoneReservePowerHigh}{90}

\newcommand{\FoneRandKFirst}{580}
\newcommand{\FoneRandKLast}{1{,}962}

\newcommand{\FoneRandM}{64{,}845}

\providecommand{\NumRuns}{twenty}
\providecommand{\NumRounds}{ten}
\providecommand{\SimBE}{500{,}000}
\providecommand{\SimGB}{500{,}000}
\providecommand{\EvalSims}{$10^{6}$}

\providecommand{\BEnoCheckMax}{0.098}   %
\providecommand{\BEfixedEver}{20}    %
\providecommand{\BEltPowerShare}{52} %
\providecommand{\BEcvMax}{0.045}        %
\providecommand{\BEcvPowerShare}{82} %
\providecommand{\BEoffGridMax}{0.058}   %
\providecommand{\BErandPowerShare}{79}   %
\providecommand{\GBnoCheckMax}{0.183}

\providecommand{\GBgateEver}{95}
\providecommand{\GBltPowerShare}{80}

\providecommand{\GBcvMax}{0.044}
\providecommand{\GBcvPowerShare}{96}
\providecommand{\GBoffGridMax}{0.044}
\providecommand{\GBallEver}{60} %
\providecommand{\GBgateMax}{0.817} %

\providecommand{\GBcvJudged}{200}
\providecommand{\GBcvDeployedN}{170}
\providecommand{\GBcvRejectedN}{30}
\providecommand{\GBcvRejAbove}{28}

\providecommand{\GBcvRejValid}{2}
\providecommand{\GBrefSeconds}{296}

\providecommand{\BEcvDistinct}{112}
\providecommand{\BEcvCodeShare}{32}
\providecommand{\BEcvRunsFinalCode}{18}
\providecommand{\BEltCode}{37}
\providecommand{\BEcvJudged}{198}
\providecommand{\BEcvDeployedN}{117}
\providecommand{\BEcvRejectedN}{81}
\providecommand{\BEcvRejAbove}{62}
\providecommand{\BEcvRejIndet}{19}

\providecommand{\BEcvPowerShareSmall}{74}
\providecommand{\BEcvRejValidSmall}{17}
\providecommand{\BEfixedEverPrime}{30}
\providecommand{\BEcvSecondsPerVal}{1101}
\providecommand{\BEcvDatasetsPerVal}{447}
\providecommand{\BErefSeconds}{668}

\providecommand{\GBcvDistinct}{62}
\providecommand{\GBcvCodeShare}{21}
\providecommand{\GBcvRunsFinalCode}{15}
\providecommand{\GBltCode}{18}

\providecommand{\GBcvSecondsPerVal}{329}
\providecommand{\GBcvDatasetsPerVal}{29}
\providecommand{\OWqwenMalformed}{7}
\providecommand{\OWdeepseekMalformed}{21}

\usepackage{mathtools,array,placeins}
\theoremstyle{definition}
\newtheorem{SItheorem}{Theorem}

\newtheorem{SIproposition}[SItheorem]{Proposition}
\newtheorem{SIcorollary}[SItheorem]{Corollary}
\newtheorem{lemma}[SItheorem]{Lemma}
\newtheorem{definition}[SItheorem]{Definition}

\newtheorem{remark}[SItheorem]{Remark}
\newtheorem{assumption}{Assumption}

\newcommand{\E}{\mathbb{E}}

\newcommand{\SIref}[1]{\ref{#1}}

\newcommand{\SIpart}{%
  \clearpage
  \setcounter{section}{0}\setcounter{subsection}{0}%
  \renewcommand{\thesection}{A\arabic{section}}%
  \renewcommand{\thesubsection}{\thesection.\arabic{subsection}}%
  \def\theHsection{SI.\arabic{section}}%
  \def\theHsubsection{SI.\arabic{section}.\arabic{subsection}}%
  \setcounter{equation}{0}\renewcommand{\theequation}{A\arabic{equation}}%
  \def\theHequation{SI.\arabic{equation}}%
  \setcounter{figure}{0}\renewcommand{\thefigure}{A\arabic{figure}}%
  \def\theHfigure{SI.\arabic{figure}}%
  \setcounter{table}{0}\renewcommand{\thetable}{A\arabic{table}}%
  \def\theHtable{SI.\arabic{table}}%
  \setcounter{algorithm}{0}\renewcommand{\thealgorithm}{A\arabic{algorithm}}%
  \def\theHalgorithm{SI.\arabic{algorithm}}%
  \let\theorem\SItheorem\let\endtheorem\endSItheorem
  \let\proposition\SIproposition\let\endproposition\endSIproposition
  \let\corollary\SIcorollary\let\endcorollary\endSIcorollary
  \renewcommand{\topfraction}{0.9}\renewcommand{\bottomfraction}{0.8}%
  \renewcommand{\textfraction}{0.07}\renewcommand{\floatpagefraction}{0.85}%
  \setcounter{topnumber}{3}\setcounter{totalnumber}{4}%
}

\newcommand{\figpending}[2]{\fbox{\parbox[c][#2][c]{0.96\linewidth}{\centering\sffamily\small #1}}}

\newcommand{\rev}[1]{#1}

\title{Lifetime-valid decisions by continually evolving AI analysts}
\author[1]{Xiaojing Sun}
\author[1]{Yuhan Zeng}
\author[1]{Xiao Wang}
\affil[1]{Department of Statistics, Purdue University}
\date{}

\begin{document}
\maketitle

\begin{abstract}
AI agents increasingly act as analysts whose decision can often be cast as hypothesis tests.
Classical tests control the Type I error rate, the probability of rejecting a true null hypothesis, for fixed procedures. However, AI analysts introduce internal randomness and can evolve continuously over time.
As AI analysts evolve, the risk of approving versions with Type I error rates above the target can accumulate, and adaptive use of evaluation feedback can lead to overfitting.
We introduce continual validation, which evaluates
each proposed AI analyst version before deployment
using simulated data under the null hypothesis,
with scenarios that can be randomized and expanded over time.
We prove that, regardless of how updates are chosen
or how many are proposed, the lifetime probability
of ever deploying a version whose Type I error rate
exceeds the allowed level remains bounded.
This guarantees that every subsequent decision
controls the Type I error rate at the target level across these scenarios.
At the same time, updates with sufficiently low
Type I error rates pass validation with high probability,
allowing the analyst to continuously improve its
performance without sacrificing rigor.
In two examples involving evolving AI analysts for
bioequivalence studies and product A/B tests, Type I error rates reached \BEnoCheckMax\ and
\GBnoCheckMax\ without continual validation, compared with at most \BEcvMax\ and
\GBcvMax\ with it, while retaining 82\% and 96\% of the power from optimally selected valid updates.
\end{abstract}

\section{Introduction}
\dropcap{A}s large language models and the systems
built around them advance, AI agents are increasingly
used to analyze data.
They plan analyses, write and execute code, and report findings
in tasks such as developing machine learning methods,
studying cellular processes, and discovering new drugs
\cite{lu2024aiscientist,swanson2025virtuallab,alber2026cellvoyager,zhang2026virtualbiotech}.
Recent benchmarks assess these capabilities
across scientific and data science tasks
\cite{chen2025scienceagentbench,majumder2025discoverybench,chan2025mlebench,nie2026dsgym}.
We refer to agents performing such analyses as AI analysts.

Although these analyses may involve complex
workflows, including multiple stages of data processing,
model selection, diagnostic evaluation, and iterative
refinement, they often lead to decisions such as reporting
a discovery, claiming bioequivalence of a drug product, or launching a product.
Such decisions can often be formulated as hypothesis tests,
which AI agents are increasingly trained to perform
\cite{huang2025popper,miao2026fisherr1}.
For example, in bioequivalence studies \cite{schuirmann1987comparison,ich2024m13a}, the null hypothesis
states that the test product is not bioequivalent to
the reference product.
A Type I error occurs when a true null hypothesis is rejected.
Similarly, when deciding whether to launch a new product
feature, the null hypothesis states that the feature provides
no improvement, while the alternative states that it does.
Classical statistical methods control the probability of this error
at a prespecified level $\alpha$, typically for procedures fixed
before the data are observed, where randomness comes primarily
from data sampling \cite{neyman1933problem,lehmann2022testing}.

AI analysts, however, pose two additional challenges to
Type I error control.
First, even a fixed version can make different decisions on the same
data. 
A version fixes its model, prompts, and available tools,
but sampling from the underlying language model may change
which analyses and tools are selected
\cite{bertran2026manyai,miao2026agentic}.
Each version therefore defines a randomized test, whose Type I error
rate accounts for both data sampling and the analyst's internal randomness \cite{tocher1950extension,lehmann2022testing}.
\Endparasplit

Second, the analyst itself can change frequently in many ways.
Developers edit prompts, add skills, redesign workflows, and upgrade
models \cite{chen2024chatgptdrift}.
Self-improving agents can also repeatedly revise themselves to evolve
\cite{schmidhuber2007godel,zelikman2024stop,zhang2026darwin,shinn2023reflexion,yuksekgonul2025textgrad,zhang2026ace}.
Together, these changes can produce a rapid and potentially
unlimited sequence of versions, each modifying multiple parts
of the analyst.
We call each proposed change an update and the resulting
configuration a new version of the analyst.
Updates are often retained when they appear to improve
the analyst's performance, but without appropriate validation,
an increase in their Type I error rate may go undetected.

This adaptive selection makes the problem different from controlling
a single randomized test.
Updates can be selected based on feedback from earlier
analyses and evaluations, potentially causing the analyst
to overfit past results and inflate its Type I error rate \cite{simmons2011false,silberzahn2018many}.
Even a single update can substantially change the Type I error rate.
For example, at a target Type I error level of 0.10,
replacing Claude Sonnet 3.5 with Claude
Haiku 3.5 in a hypothesis-testing agent increased
its estimated Type I error rate from 0.082 to 0.780
on one benchmark \cite{huang2025popper}.

Existing methods provide important statistical guarantees,
but they address different problems.
Prediction-powered inference enables valid statistical
inference using machine learning predictions
\cite{angelopoulos2023ppi}, while conformal methods such
as TxConformal control false discoveries when AI predictions
are used to select therapeutic candidates
\cite{jin2026txconformal}.
Agentic falsification controls Type I error for a specified
agent performing sequential hypothesis testing
\cite{huang2025popper}.
Other methods address adaptive analysis choices and repeated testing.
The reusable holdout allows the same dataset to be used
repeatedly to evaluate adaptively chosen analyses
\cite{dwork2015reusable}, while selective inference
accounts for data-dependent analysis choices
\cite{benjamini2020selective}.
Online multiple testing controls false rejections across
a sequence of hypotheses \cite{tian2021online},
and adaptive clinical trial designs allow prespecified
modifications within a trial \cite{bauer1994evaluation}.

Other work examines how learning systems can be updated safely.
High-confidence safety frameworks evaluate
candidate algorithms against specified constraints
\cite{thomas2019preventing}, while approval policies for
medical software control the risk of harmful updates and
gradual deterioration in performance
\cite{feng2021approval}.
Recent methods for self-improving agents provide statistical
guarantees for accepting beneficial modifications
\cite{sengupta2026selfevolving,sun2026reuse},
while Learn then Test calibrates a prespecified set of
configurations to satisfy statistical risk constraints
\cite{angelopoulos2025learn}.

However, these methods do not directly guarantee Type I error control for AI analysts that continually evolve through adaptive updates to their models, tools, and decision rules. Our goal is to provide a lifetime guarantee by controlling the risk of ever deploying a version whose Type I error exceeds the required level. This allows the analyst to continue updating while maintaining Type I error control in future studies. We also aim to preserve statistical power at a reasonable computation cost.

We introduce continual validation, shown in
Fig.~\ref{fig:concept}\textit{A}, which allows AI analysts
to keep improving while maintaining Type I error control
at modest cost.
Developers or the analyst itself can propose any updates.
Before deployment, each proposed version is fixed and tested
on newly simulated datasets where the null hypothesis is true.
We set a validation level $\alpha'$ slightly below $\alpha$.
We prove that, regardless of how updates are proposed,
the probability of ever deploying a version whose Type I
error rate exceeds $\alpha'$ under the simulated null
conditions is at most a prespecified $\delta$.
This guarantees Type I error control at the prespecified level
$\alpha$ for every subsequent decision.
At the same time, updates with sufficiently low Type I
error rates can still be approved with high probability.
We apply it to two AI analysts, one for bioequivalence
studies of test products and another for product launching
experiments.
We compare it with four alternative deployment strategies,
and Fig.~\ref{fig:results} summarizes the comparison.

\begin{figure}[tp]
\centering
\includegraphics[width=\textwidth]{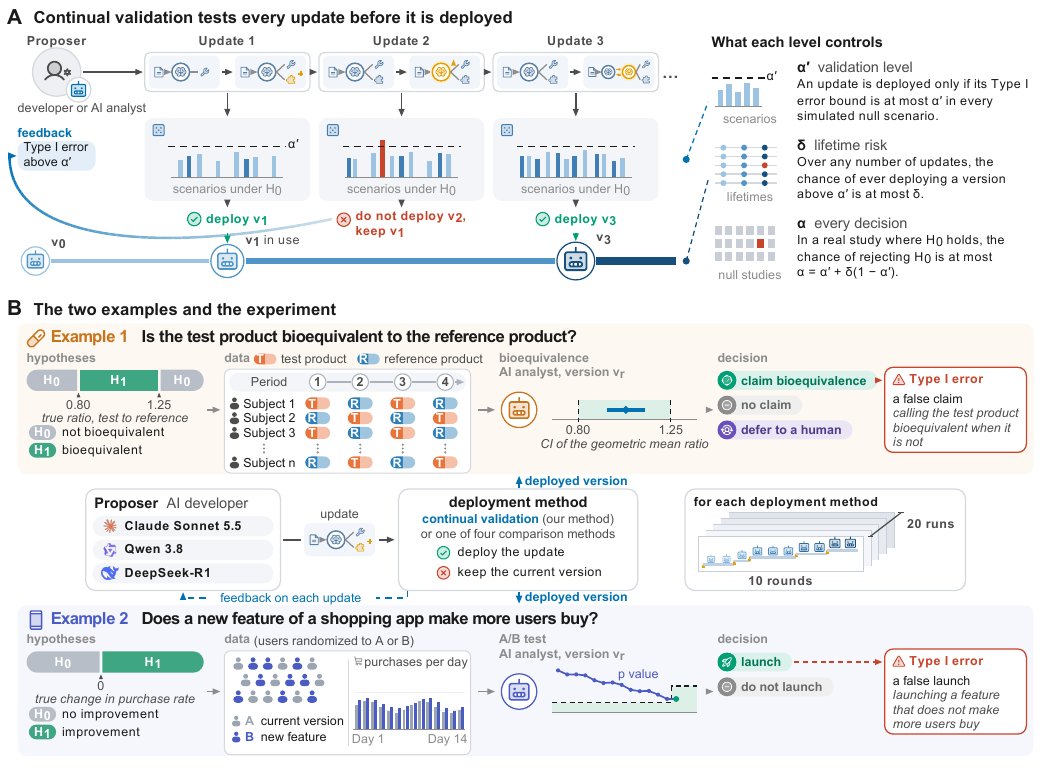}
\caption{Continual validation and the two examples. (\textit{A}) A human developer or the AI analyst itself proposes updates. The validator freezes the updated version and runs it alongside a reference analysis on datasets newly simulated under the null hypothesis. The update is deployed only when the upper confidence bound on its Type I error rate is at most the validation level $\alpha'$ across all simulated scenarios. The bound for update 2 exceeds $\alpha'$ in one scenario so it is rejected and $v_1$ stays in use. The proposer receives this result as feedback. Over any number of updates the chance of ever deploying a version above $\alpha'$ remains at most the lifetime risk $\delta$. In every real study where the data follow a simulated null scenario a true $H_0$ is rejected with probability at most $\alpha=\alpha'+\delta(1-\alpha')$. 
The examples use $\alpha=0.05$ and $\delta=0.005$ which gives $\alpha'=0.0452$. (\textit{B}) The two examples and the experiment setup. In Example 1 the agent decides whether a test product and a reference product are bioequivalent under the standard average bioequivalence framework. The true ratio is the ratio of their geometric mean exposures for a measure such as the peak concentration. The null hypothesis states that this ratio lies outside the regulatory acceptance limits of 0.80 and 1.25. The data come from replicate crossover studies where subjects receive the formulations in successive periods and the reference product is given at least twice. The agent claims bioequivalence and rejects $H_0$ when the computed confidence interval lies entirely within these limits. The agent can also defer a study to a human. In Example 2 users of a shopping application are randomized to the current version or to a new feature over 14 days. The true change represents the difference in the share of users who make a purchase. 
The agent uses the statistics engine of the
GrowthBook platform and makes launch decisions
according to the rule specified by each version.
A false launch constitutes a Type I error. In the experiment Claude Sonnet 5.5 proposes one update per round. The update either changes a setting or introduces a new decision rule. 
Each method is run 20 times for 10 rounds. Example 1 also tests the open-weight models Qwen 3.8 and DeepSeek-R1 as developers under three of the methods.}
\label{fig:concept}
\end{figure}

\section{Results}

\subsection{Why Common Validation Strategies Can Fail to Control Type I Error for Evolving AI Analysts}

Three commonly used statistical strategies can each fail to control the Type I error rate of an evolving AI analyst for different reasons.

First, validating each statistical tool separately does not validate an analyst that combines them.
In a bioequivalence example, two tests each control the Type I error rate near 0.05, but applying the second whenever the first fails to show bioequivalence raises it to \SoneBeWorst, as shown in
Fig.~\SIref{fig:si-concept}\textit{A} in the Appendix.
Theorem~\SIref{thm:nonclosure} in the Appendix
provides a general result.

Second, evaluating every proposed update at a fixed confidence
level \cite{thomas2019preventing} does not control the risk over many updates.
Even when each update is evaluated using new simulated data,
the probability of approving a version whose Type I error
rate is too high can accumulate.
If every proposed version has a Type I error rate above the target
and independently passes validation with probability $\eta$,
the probability that at least one such version passes within
$G$ updates is $1-(1-\eta)^G$, which approaches one as $G$
grows.
Fig.~\SIref{fig:si-concept}\textit{B} in the Appendix
shows this effect.

Third, repeatedly evaluating updates on the same
benchmark \cite{sun2026reuse} can lead to overfitting.
As updates are developed using the same test cases
or their feedback, they may perform well on the
benchmark without controlling Type I error on new data.
In an extreme example, an updated version can
reproduce the decisions of a valid reference analysis
on every stored test case but reject the null
hypothesis on all other datasets.
It would pass the benchmark evaluation while
having a Type I error rate of one on new data from
a continuous distribution\rev{, as Proposition~\SIref{prop:reuse} in the Appendix shows}.

\subsection{Continual Validation Maintains Type I Error Control at Modest Cost}

Continual validation, outlined in Fig.~\ref{fig:concept}\textit{A}
and Algorithm~\ref{alg:validation}, addresses these  problems
simultaneously.

Evaluating an evolving AI analyst requires controlling the Type I error rate across diverse data distributions and operational settings. For example, in a bioequivalence study, the null hypothesis holds when the true drug exposure ratio is at or beyond the acceptance limits of 0.80 and 1.25 for most drugs, and the error rate can vary with sample size, within-subject variability, and dropout rate. We therefore evaluate an analyst under different {null scenarios}, each specifying a data-generating process under which the null hypothesis is true. For instance, one scenario consists of 36 volunteers, 10 percent dropout, and a true exposure ratio at the upper acceptance limit, with volunteers resampled from a published study. For a version $v$ and a null scenario $\theta$, we denote its Type I error rate by $e(v,\theta)$.

Given the proposed version, the validator simulates $m_r$
datasets from each null scenario for update $r$, and runs the
updated version alongside a reference analyst version $v_0$,
where the reference analysis can be a standard statistical procedure run at
level $\alpha_0$, with a known upper bound $b_\theta$ on its
Type I error rate in scenario $\theta$.
Using these simulated datasets, the validator computes an exact
upper confidence bound $B_\theta$ on the updated version's
Type I error rate.
An update is deployed only if $B_\theta$ is at most the
validation level $\alpha'$ in every scenario.
Here, $\alpha$ is the target Type I error rate for each decision,
while $\delta$ bounds the lifetime probability of ever deploying
a version whose Type I error rate exceeds $\alpha'$.
We set $\alpha'=(\alpha-\delta)/(1-\delta)$ so that
$\alpha'+(1-\alpha')\delta=\alpha$.
For a target Type I error level such as $\alpha=0.05$ and lifetime
risk $\delta=0.005$, this gives $\alpha'=0.0452$.
For update $r$, the validator constructs each bound
$B_\theta$ at confidence level $1-\delta_r$, where
$\delta_r=6(\delta-\delta_0)/(\pi^2r^2)$.
These error budgets sum to $\delta-\delta_0$ over all
updates, with the remaining $\delta_0$ reserved for
the reference bounds $b_\theta$.
The initial set $\mathcal{S}_0$ of null scenarios is specified
in advance and expanded when an update uses information not varied
by the existing scenarios.

\begin{algorithm}[t]
\caption{Continual validation}
\label{alg:validation}
\begin{algorithmic}[1]
\Require Type I error level $\alpha$, lifetime risk $\delta$,
reference risk $\delta_0$, initial null scenarios $\mathcal{S}_0$,
reference analysis $v_0$ with bounds $b_\theta$
\State Set $\alpha'\gets(\alpha-\delta)/(1-\delta)$
and $\mathcal{S}\gets\mathcal{S}_0$
\State Deploy $v_0$, or a version that rejects only where $v_0$ does,
if $b_\theta\le\alpha'$ for every $\theta\in\mathcal{S}_0$.
Otherwise, start with a version that defers every study.
\For{each proposed update $r=1,2,\ldots$}
    \State Freeze the updated version $v_r$
    \State Add scenarios to $\mathcal{S}$ for any new inputs
    used by $v_r$, each with a reference bound $b_\theta$
    \State Set $\mathcal{S}_r\gets\mathcal{S}$
    \State Set $\delta_r\gets6(\delta-\delta_0)/(\pi^2r^2)$
    and fix the number $m_r$ of datasets per scenario
    \State Simulate $m_r$ new datasets from each
    $\theta\in\mathcal{S}_r$
    \State Run $v_r$ and $v_0$ on the same datasets
    \State For each $\theta\in\mathcal{S}_r$, count rejections
    $X_\theta$ of $v_r$, added rejections $N_\theta^{+}$,
    and removed rejections $N_\theta^{-}$
    \State Let $U(x;m,\eta)$ and $L(x;m,\eta)$ be the exact
    one-sided binomial bounds for $x$ events in $m$ trials
    at confidence level $1-\eta$ \cite{clopper1934use}
    \State $B^{\mathrm{pair}}_\theta\gets
    b_\theta+U(N_\theta^{+};m_r,\delta_r/4)
    -L(N_\theta^{-};m_r,\delta_r/4)$
    \State $B^{\mathrm{dir}}_\theta\gets
    U(X_\theta;m_r,\delta_r/2)$
    \State $B_\theta\gets
    \min(B^{\mathrm{pair}}_\theta,B^{\mathrm{dir}}_\theta)$
    \State If $B_\theta\le\alpha'$ for every
    $\theta\in\mathcal{S}_r$, deploy $v_r$.
    Otherwise, retain the current version.
    \State Report the deployment decision to the proposer
\EndFor
\end{algorithmic}
\end{algorithm}

The following theorem establishes that the lifetime risk
remains controlled, regardless of how many updates are
proposed or how they are selected.

\begin{theorem}[Lifetime risk control]
Suppose Algorithm~\ref{alg:validation} is followed and the reference bounds
$b_\theta$ hold simultaneously except with probability
at most $\delta_0$.
Then, regardless of how updates are proposed, we have
\begin{equation*}
\begin{aligned}
&\Pr\Bigl\{\exists\,r:\ v_r\text{ passes},\
\max_{\theta\in\mathcal{S}_r}e(v_r,\theta)>\alpha'\Bigr\}\\
&\le \delta_0+\sum_{r\ge1}\delta_r=\delta,
\end{aligned}
\end{equation*}
where $v_r$ is the version proposed in update $r$ and
$\mathcal{S}_r$ is its set of null scenarios.
\end{theorem}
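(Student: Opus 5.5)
The plan is a union bound over a ``good event'' on which no bad version can pass. Let $\mathcal{F}_{r-1}$ denote everything observed before update $r$: earlier simulated datasets, deployment decisions and feedback, and the proposer's internal randomness. The version $v_r$, the scenario set $\mathcal{S}_r$ and the sample size $m_r$ are all chosen before the update-$r$ datasets are drawn, so they are $\mathcal{F}_{r-1}$-measurable. The fresh datasets for update $r$ are then simulated independently of $\mathcal{F}_{r-1}$. The internal randomness of $v_r$ is also fresh on each run, so each run is an independent Bernoulli trial.

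Let $E_0$ be the event that some reference bound fails, that is, $e(v_0,\theta)>b_\theta$ for some scenario $\theta$ that is ever used. By hypothesis $\Pr(E_0)\le\delta_0$. For each $r$, let $E_r$ be the event that $v_r$ passes while $e(v_r,\theta)>\alpha'$ for some $\theta\in\mathcal{S}_r$. The key step is to show
\[
\Pr\bigl(E_r\cap E_0^c\mid\mathcal{F}_{r-1}\bigr)\le\delta_r .
\]
Conditionally on $\mathcal{F}_{r-1}$, fix a bad scenario $\theta$, meaning one with $e(v_r,\theta)>\alpha'$; if there are several, pick one deterministically. Passing requires $B_\theta\le\alpha'<e(v_r,\theta)$. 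This can happen in only two ways.
\begin{itemize}
\item[(a)] $B^{\mathrm{dir}}_\theta<e(v_r,\theta)$. Since $X_\theta\sim\mathrm{Bin}(m_r,e(v_r,\theta))$, exactness of the Clopper--Pearson bound gives probability at most $\delta_r/2$.
\item[(b)] $B^{\mathrm{pair}}_\theta<e(v_r,\theta)$. On the same datasets, with $p^{\pm}$ the per-dataset probabilities of added and removed rejections, we have the identity $e(v_r,\theta)=e(v_0,\theta)+p^+-p^-\le b_\theta+p^+-p^-$ on $E_0^c$. So (b) forces $U(N^+_\theta)<p^+$ or $L(N^-_\theta)>p^-$. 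Each has probability at most $\delta_r/4$, because $N^\pm_\theta\sim\mathrm{Bin}(m_r,p^\pm)$.
\end{itemize}
The total is at most $\delta_r$. We need only one bad scenario, not all of them, so no union over $\mathcal{S}_r$ is required.

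Taking expectations, summing over $r$, and adding $\Pr(E_0)$ gives
\[
\Pr\Bigl(\bigcup_r E_r\Bigr)\le\delta_0+\sum_{r\ge1}\delta_r=\delta_0+(\delta-\delta_0)\tfrac{6}{\pi^2}\sum_{r\ge1} r^{-2}=\delta .
\]
This holds whatever the adaptive proposal rule is, because each bound was conditional on the full past.

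I expect the main obstacle to be the subtle conditioning step. First, $E_0$ concerns the reference bounds, which may be random, so intersecting with $E_0^c$ inside a conditional probability needs care. I would avoid conditioning on it: I would bound $\Pr(E_r\cap E_0^c)$ by the probability of the binomial-bound failure events, which involve only the fresh update-$r$ data and have conditional probability at most $\delta_r$. Second, scenarios added later must be covered by $E_0$, so $\delta_0$ must control all $b_\theta$ simultaneously, including future ones; the hypothesis ``hold simultaneously'' is read this way. Finally, the paired decomposition relies on $v_r$ and $v_0$ running on identical datasets with independent trials across datasets, and I would state this coupling explicitly.
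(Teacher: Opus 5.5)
Your proposal is correct and follows the same route as the paper's proof: it fixes a bad scenario determined at the freeze and shows that, when all reference bounds hold, passing forces one of the three Clopper--Pearson bounds in that scenario to miss its parameter, which has conditional probability at most $\delta_r$ given the past. It then takes a union bound over rounds and adds the probability that a reference bound fails. You handle the conditioning subtlety as the paper does, by including $E_r\cap E_0^c$ in a miss event of the fresh round-$r$ data, and your reading of simultaneous coverage, as including scenarios added later, matches the paper's requirement that the reference bounds hold for every scenario ever used.
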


Theorem~1 guarantees that the probability of ever deploying
a version whose Type I error rate exceeds $\alpha'$ in any
tested null scenario is at most $\delta$, regardless of
the number or selection of updates.
The following corollary establishes Type I error control
for every subsequent decision made by the AI analyst.

\begin{corollary}[Type I Error Control for Every Decision]
Let the analyst follow Algorithm~\ref{alg:validation}.
Consider any future study $s$ whose data follow a null scenario
covered by the version deployed before the study data arrive.
The probability of incorrectly rejecting 
$H_0$ satisfies
\[
\Pr_\theta\{\text{reject }H_0\text{ in study }s\}
\le \alpha'+(1-\alpha')\delta=\alpha.
\]
\end{corollary}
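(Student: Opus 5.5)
The plan is to condition on the event that Theorem~1's bad event does not occur and to use the law of total probability. Let $v^\ast$ denote the version deployed before the data of study $s$ arrive. Let $\theta$ be the null scenario generating those data, with $\theta\in\mathcal{S}_{r^\ast}$, where $r^\ast$ is the update that produced $v^\ast$. Define the \emph{bad event} $\mathcal{B}=\{\exists\,r: v_r\text{ passes},\ \max_{\theta'\in\mathcal{S}_r}e(v_r,\theta')>\alpha'\}$, together with the event that the reference bounds $b_\theta$ fail. If $v^\ast$ is the initial version, the deployment rule in Algorithm~\ref{alg:validation} applies. That version is either $v_0$ (or a version that rejects only where $v_0$ does) with $b_\theta\le\alpha'$, or the version that always defers. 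So on the event that the reference bounds hold, its error rate is at most $\alpha'$. By Theorem~1 and the $\delta_0$ reservation, the union of these failure events has probability at most $\delta$. On the complement, every deployed version, and in particular $v^\ast$, satisfies $e(v^\ast,\theta)\le\alpha'$.

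The key step is a conditional-independence argument. The study data and the analyst's internal randomness in study $s$ are generated fresh, after $v^\ast$ is frozen and independently of all simulated validation datasets and proposer behaviour. Conditional on the whole validation history $\mathcal{H}$ up to deployment, the rejection probability in study $s$ is therefore exactly $e(v^\ast,\theta)$, since $e$ is defined as the probability over both data and internal randomness. Writing $G=\mathcal{B}^c$, we then obtain
\begin{equation*}
\Pr_\theta\{\text{reject}\}=\E\bigl[e(v^\ast,\theta)\mathbf{1}_G\bigr]+\E\bigl[e(v^\ast,\theta)\mathbf{1}_{G^c}\bigr]\le \alpha'\Pr(G)+\Pr(G^c).
\end{equation*}
This uses $e\le\alpha'$ on $G$ and $e\le1$ on $G^c$.

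Finally, note that $\alpha'p+(1-p)$ is decreasing in $p=\Pr(G)$. With $\Pr(G)\ge1-\delta$ this gives $\alpha'(1-\delta)+\delta=\alpha'+(1-\alpha')\delta$. The identity $\alpha'+(1-\alpha')\delta=\alpha$ follows from the definition $\alpha'=(\alpha-\delta)/(1-\delta)$.

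The main obstacle is making the conditioning rigorous. Which version is deployed, and which scenario set covers $\theta$, are random and depend on the history. We must check that the scenario $\theta$ of study $s$ does not depend on the validation randomness in a way that breaks the identity $\Pr(\text{reject}\mid\mathcal{H})=e(v^\ast,\theta)$. I would handle this by treating $\theta$ as fixed (or chosen independently of the study's own sampling). I would also restrict to the event that $\theta\in\mathcal{S}_{r^\ast}$, which is $\mathcal{H}$-measurable, and then invoke Theorem~1, whose bad event already quantifies over all $\theta'\in\mathcal{S}_r$ for every passing $r$. This makes the bound uniform over whichever version happens to be deployed.
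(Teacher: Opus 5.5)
Your proposal is correct and follows essentially the paper's proof. The paper takes the good event $\mathcal{E}$ on which every reference bound holds and no invalid update passes, and gets $\Pr(\mathcal{E}^{c})\le\delta$ from the second statement of Theorem~\ref{thm:lifetime}. It then writes $\Pr\{Y_s=\text{reject}\mid\mathcal{H}_s\}=e(v_{(s)},\theta_s)$ and bounds this by $\alpha'$ on $\mathcal{E}$ and by one off it. Here $\mathcal{H}_s$ is the information at the arrival of study $s$, and it contains $\theta_s$, so the scenario may depend on the history and your worry about it disappears.

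The one slip is that $G$ must be the complement of $\mathcal{B}\cup E_0^{c}$, not of $\mathcal{B}$ alone, because the starting version $v_0$ is valid only when its reference bounds hold. This union has probability at most $\delta$, rather than $\delta+\delta_0$, because the proof of Theorem~\ref{thm:lifetime} charges $\delta_0$ once for $E_0^{c}$; the main-text statement of Theorem~1 alone does not give this.
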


Corollary~1 establishes that each decision in a validated null scenario controls Type I
error at the target level $\alpha$, even after an
unlimited number of adaptive updates.
We call this property {lifetime validity}.

Controlling Type I error alone would be trivial if no updates
were ever approved. We therefore also show that continual
validation allows updates whose Type I error rate is sufficiently below $\alpha'$ to pass with high probability.

\begin{proposition}[Valid updates pass]
Consider update $r$ with $J_r=|\mathcal{S}_r|$ null scenarios.
Suppose the proposed version $v_r$ has Type I error rate
at most $\alpha'-\gamma$ in every scenario, where
$0<\gamma<\alpha'$ is the margin below the validation level.
For any $0<\zeta<1$, the update passes with probability
at least $1-\zeta$ if
\(
m_r\ge
\frac{8\alpha'}{\gamma^2}
\log\max\left\{
{2}/{\delta_r},{J_r}/{\zeta}
\right\}.
\)
\end{proposition}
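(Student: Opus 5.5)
Since $B_\theta=\min(B^{\mathrm{pair}}_\theta,B^{\mathrm{dir}}_\theta)\le B^{\mathrm{dir}}_\theta$, it suffices to show that with probability at least $1-\zeta$ we have $U(X_\theta;m_r,\delta_r/2)\le\alpha'$ simultaneously for all $\theta\in\mathcal{S}_r$. By a union bound over the $J_r$ scenarios, this reduces to showing that for each fixed $\theta$, $\Pr\{U(X_\theta;m_r,\delta_r/2)>\alpha'\}\le\zeta/J_r$. Here $X_\theta\sim\mathrm{Bin}(m_r,p)$ with $p=e(v_r,\theta)\le\alpha'-\gamma$, because $v_r$ is frozen before the fresh datasets are simulated. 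Since the binomial upper tail is monotone in $p$, we may take $p=\alpha'-\gamma$ as the worst case.

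The argument then runs in two steps, both using the multiplicative Chernoff bounds for a binomial with mean $m\alpha'$.

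\textbf{Step 1: inverting the Clopper--Pearson bound.} The exact upper bound satisfies $U(x;m,\eta)\le\alpha'$ iff $\Pr_{\alpha'}\{\mathrm{Bin}(m,\alpha')\le x\}\le\eta$. By the lower-tail Chernoff bound,
\[
\Pr_{\alpha'}\{\mathrm{Bin}(m,\alpha')\le m(\alpha'-t)\}\le\exp\{-mt^2/(2\alpha')\}.
\]
Take $t=\gamma/2$, $m=m_r$ and $\eta=\delta_r/2$. The condition $m_r\ge(8\alpha'/\gamma^2)\log(2/\delta_r)$ then gives $\exp\{-m_r\gamma^2/(8\alpha')\}\le\delta_r/2$. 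Hence every $x\le m_r(\alpha'-\gamma/2)$ yields $U(x;m_r,\delta_r/2)\le\alpha'$.

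\textbf{Step 2: bounding the observed count.} It remains to show $\Pr\{X_\theta>m_r(\alpha'-\gamma/2)\}\le\zeta/J_r$ when $p=\alpha'-\gamma$. The required excess is $\gamma/2$ above the mean $m_rp$. The Bernstein/Chernoff upper tail gives
\[
\exp\Bigl\{-\frac{m_r(\gamma/2)^2}{2(p+\gamma/6)}\Bigr\}\le\exp\Bigl\{-\frac{m_r\gamma^2}{8\alpha'}\Bigr\},
\]
where the last inequality uses $p+\gamma/6\le\alpha'$. Under $m_r\ge(8\alpha'/\gamma^2)\log(J_r/\zeta)$ this is at most $\zeta/J_r$. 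Both conditions are implied by the single requirement involving $\log\max\{2/\delta_r,J_r/\zeta\}$.

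\textbf{Main obstacle.} The delicate part is getting the constants to match exactly $8\alpha'/\gamma^2$. Step 1 needs the lower-tail bound with variance proxy $\alpha'$, which is the standard form $\exp(-\mu\epsilon^2/2)$ with $\mu=m\alpha'$ and $\epsilon=t/\alpha'$. Step 2 needs a Bernstein-type bound whose denominator stays at most $\alpha'$. The split of the margin $\gamma$ into two halves is what makes both steps give the same constant. If the Bernstein denominator turns out slightly too large, I would instead use the KL form of the Chernoff bound, $\exp\{-m\,\mathrm{KL}(\alpha'-\gamma/2\,\|\,\alpha'-\gamma)\}$, together with the lower bound $\mathrm{KL}(a\|b)\ge(a-b)^2/(2\max(a,b))$ valid for $a>b$. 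That yields exactly $(\gamma/2)^2/(2\alpha')$.
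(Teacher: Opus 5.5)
Your proposal is correct and follows the paper's proof essentially step for step. Like the paper, you reduce to the direct bound, take a union bound over the $J_r$ scenarios at level $\zeta/J_r$, use monotonicity in $e(v_r,\theta)$ to reduce to $p=\alpha'-\gamma$, and split the margin $\gamma$ into halves, so that the Clopper--Pearson inversion at $\alpha'$ and the upper tail of $X_\theta$ each cost $\exp\{-m_r\gamma^2/(8\alpha')\}$. The one cosmetic difference is in Step 2: the paper uses the relative-entropy Chernoff bound with $\mathrm{kl}(x,y)\ge(x-y)^2/\{2\max(x,y)\}$, which is your fallback, whereas you use Bernstein, and that also gives the required constant because $p+\gamma/6\le\alpha'$.
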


This proposition shows that updates with Type I error rates
sufficiently below $\alpha'$ can be approved reliably.
For a fixed number of scenarios, the required number of
simulated datasets grows only logarithmically with the update
number.
For updates that never add rejections relative to the reference
analysis, the guarantee is even stronger.
If $b_\theta\le\alpha'-\gamma$ in every scenario,
such updates are guaranteed to pass validation whenever
$m_r\ge\log(4/\delta_r)/\gamma$.

Although continual validation requires repeated simulations,
its computational cost can be reduced in two ways.
First, it can avoid expensive language model calls.
When an analyst follows a fixed set of allowed analysis paths,
each ending in a decision computed by fixed code, the validator
can evaluate these decisions directly without running the
language model. The following proposition justifies this approach.

\begin{proposition}[Allowed paths]
Suppose every run of an AI analyst follows one of a fixed
set of allowed paths, each producing a decision that is
a function of the data.
Then its Type I error rate in every scenario is at most
that of the {any-path rule}, which rejects whenever
any allowed path rejects, regardless of which language
model is used or how it selects among the paths.
\end{proposition}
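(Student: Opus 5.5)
The plan is a pointwise dominance argument followed by taking expectations over the data and over the analyst's internal randomness. Fix a null scenario $\theta$ and let $P_1,\dots,P_K$ denote the fixed set of allowed paths. Each path has a decision function $d_k(D)\in\{0,1\}$ that depends only on the data $D$, with $d_k(D)=1$ meaning rejection of $H_0$. The any-path rule is $d^{\ast}(D)=\max_k d_k(D)$. I would model a run of the analyst on data $D$ as a draw of internal randomness $U$, independent of $D$. The language model, with its prompts, tools, and sampling, maps $(D,U)$ to a selected path index $K(D,U)\in\{1,\dots,K\}$. The hypothesis that every run follows an allowed path is exactly what makes $K(D,U)$ well defined. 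The analyst's decision is then $d_{K(D,U)}(D)$.

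The key step is the pointwise inequality
\[
d_{K(D,U)}(D)\le \max_{k} d_k(D)=d^{\ast}(D)\qquad\text{for every }D\text{ and }U.
\]
This holds because the selected path is one of the allowed ones. It requires no assumption on how $K$ depends on $D$ or $U$. In particular, the selection may depend on the data itself, which covers data-dependent path choice such as the analyst peeking at results. It also does not depend on which language model produces $K$.

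Next I take expectations. Under scenario $\theta$, $D\sim P_\theta$ and $U$ has whatever distribution the model induces. Monotonicity of expectation then gives
\[
e(v,\theta)=\Pr_\theta\{d_{K(D,U)}(D)=1\}\le \Pr_\theta\{d^{\ast}(D)=1\}=e(d^{\ast},\theta).
\]
The right-hand side does not involve $U$ or the model, so the bound holds uniformly over language models and selection mechanisms. Since $\theta$ was arbitrary, it holds in every scenario. If a path's "decision" can include deferral, I would count deferral as non-rejection, so deferral only reduces $d_k$ and the argument is unchanged.

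The only real obstacle is measurability and formalization. I need the random variable $K(D,U)$, and hence the event $\{d_{K(D,U)}(D)=1\}$, to be measurable. I also need to pin down what "a function of the data" means when a path itself uses auxiliary randomness, for example bootstrap resampling. For that case I would first try to absorb such randomness into an extended data variable $D'=(D,W)$ with $W$ independent of $U$. I would then define $d^{\ast}$ on $D'$, which keeps the pointwise argument intact. Otherwise I would state explicitly that each path is a deterministic measurable map of $D$, and assume $K$ is jointly measurable in $(D,U)$.
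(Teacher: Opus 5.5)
Your proposal is correct and is essentially the paper's argument. The paper proves the inclusion $\{Y_v=\text{reject}\}\subseteq\bigcup_{\rho\in\mathcal{R}}\{\rho\text{ is followed and }y_\rho(D)=\text{reject}\}\subseteq\{Y_{v_{\mathcal{R}}}=\text{reject}\}$, which is your pointwise inequality $d_{K(D,U)}(D)\le\max_k d_k(D)$ stated as events, notes that the last event involves neither the selection rule nor the internal randomness, and does not need finitely many paths (a union, or a supremum in place of your maximum, works for any set $\mathcal{R}$).
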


The choice of analysis path can vary substantially
across models, even for the same data and analysis
options\rev{, as Table~\SIref{tab:e1} in the Appendix shows}.
The any-path rule in Proposition~2 makes the guarantee
independent of which path the model selects.
Thus, validating the any-path rule avoids language model
calls, and subsequent model replacements, prompt edits,
or provider updates that preserve the allowed paths
require no additional simulation.

Second, for updates that require validation, comparing
their decisions with the reference analysis on the same
simulated datasets can substantially reduce the number
of simulations needed, as Fig.~\SIref{fig:si-concept}\textit{C}
in the Appendix shows.

\subsection{Two Evolving AI Analysts}
We studied continual validation in two AI analysts that kept changing, an agent that decides whether a test product is bioequivalent to its reference product and an analyst that decides whether to launch a new product feature. Fig.~\ref{fig:concept}\textit{B} outlines both. In each, the language model Claude Sonnet 5.5 acted as the developer of the analyst. In each of \NumRounds\ rounds, it proposed one update after reviewing a performance report and the validation feedback from its earlier proposals. 

In both experiments, while a language model plans the workflow, deterministic tools execute the computations. This architecture allows us to bypass expensive model calls and efficiently evaluate the agents at scale through their underlying rules\rev{, as section~\SIref{sec:exp-be-agent} of the Appendix details}. Each run started with a conservative baseline test. The developer was then tasked with safely increasing statistical power and automation.

We compared five deployment methods.
\begin{itemize}\setlength{\itemsep}{0pt}
\item No validation deploys every update without any check.
\item The static benchmark evaluates update on the same fixed set of simulated studies \cite{dwork2015reusable}.
\item The improvement test deploys an update when a lower confidence bound on its gain in power over the current version is positive \cite{sengupta2026selfevolving,sun2026reuse}.
\item Learn then Test evaluates a predefined set of configurations once before the first update \cite{angelopoulos2025learn}.
\item Continual validation runs with $\alpha=0.05$ and $\delta=0.005$.
\end{itemize}

Each deployment method ran \NumRuns\ times from the same initial version. 
The Type I error rate and power of every deployed version were estimated afterwards from independent simulations, on the validation scenarios and on off-grid scenarios that no method simulated. 
Power is reported relative to an oracle which, in its own runs with the same developer, deploys an update only if its estimated Type I error rate
is at most $\alpha'$ in every validation scenario and its estimated
power exceeds that of the current version.

\subsection*{Example 1, an Agent That Analyzes Bioequivalence Studies}

A test product is approved on several kinds of evidence, and one of them is a
bioequivalence study, which tests whether the test product delivers the active substance to the body like the reference product
\cite{ema2010bioequivalence,ich2024m13a,fda2026statistical,davit2009comparing}.
Bioequivalence assessment spans many designs, endpoints, and acceptance rules
\cite{chow2008design,hauschke2007bioequivalence,fda2026statistical}, and our
example covers one common part of it: average bioequivalence from
pharmacokinetic endpoints in replicate crossover studies. 
Healthy volunteers
take the test product and the reference product in successive periods, the reference at
least twice, and the exposure in each period is summarized by the area under
the concentration curve and the peak concentration. 
Evaluating bioequivalence can be formulated as performing the following hypothesis test:
$$
\begin{aligned}
H_0 &: \text{the two products are not bioequivalent} \\
H_1 &: \text{the two products are bioequivalent}
\end{aligned}
$$
In practice, the test is based on whether the true ratio of their geometric mean exposures lies outside the 0.80 to 1.25 interval ($H_0$), or within it ($H_1$).
An analyst claims bioequivalence when the 90 percent confidence interval for the ratio lies within this range, which corresponds to the two one-sided tests at level 0.05 \cite{schuirmann1987comparison}.
For highly variable drugs, whose reference product has a within-subject coefficient of variation above 30 percent, regulatory agencies widen these limits based on observed variability \cite{tothfalusi2009evaluation,davit2012implementation,karalis2012bioequivalence}.

A statistician performing
this analysis checks the protocol, prepares the data, fits
the model, runs diagnostics, and computes the interval,
with judgment at each step.
Our AI analyst is an agent
(\url{https://be.statrust.ai/}) that carries out these steps with
tools under the direction of language
models, writes reports, and claims bioequivalence, retains
the null hypothesis, or defers the study to a human.
Our experiment uses an early, simplified configuration
of this broader agent, which continues to evolve
through updates evaluated by continual validation.
To evaluate proposed updates to this workflow, we used realistic null scenarios resampled from published studies \cite{schutz2022replicatebe,schutz2020reference}. 
This exposed the analyst to complex clinical edge cases, including the 30 percent variability threshold where regulatory limits shift and errors easily inflate \cite{labes2016inflation,munoz2016consumer}.

Fig.~\ref{fig:results}\textit{A} to \textit{D} summarize the results. We highlight three key findings.

First, continual validation successfully controls the Type I error rate while enabling the analyst to safely evolve and achieve higher statistical power.
Continual validation deployed no version above $\alpha'$ in any of the 20 runs, and no deployed version had an estimated Type I error rate above \BEcvMax\ on the validation scenarios as shown in Fig.~\ref{fig:results}\textit{B} and \textit{C}.
The final version of the analyst achieved a power of 0.79 while maintaining a true Type I error rate of 0.043. Across all 200 rounds, the analyst proposed \BEcvDistinct\ distinct versions. Furthermore, \BEcvCodeShare\ percent of these updates introduced entirely new decision rules, creating the unique evolutionary trajectories shown in Fig.~\ref{fig:results}\textit{A}.

Second, comparison methods fail to balance capability evolution and Type I error control\rev{, as Fig.~\ref{fig:results}\textit{B} to \textit{D} show}. 
Both no validation and the improvement test failed to control the Type I error rate. They deployed versions above $\alpha=0.05$ in every run and reached error rates up to \BEnoCheckMax.
The static benchmark failed to detect when the Type I error rate exceeded $\alpha$ and deployed versions above this level in \BEfixedEver\ percent of runs. Learn then Test strictly controls the Type I error rate but only evaluates predefined options. Because it cannot evaluate novel decision rules proposed by the analyst, it achieved only \BEltPowerShare\ percent of the oracle power. Continual validation evaluates these novel rules while controlling the Type I error rate. It never exceeded the adjusted bound $\alpha'$, reached a maximum Type I error rate of \BEcvMax, and attained the highest power among all methods that controlled the Type I error rate.
Continual validation reached a mean final power of  \BEcvPowerShare\ percent of the oracle's and the highest among the methods that deployed no version above $\alpha$.

Third, continual validation is robust.
Although continual validation requires constructing null scenarios, its deployed versions maintained estimated Type I error rates at or below 0.05 on at least 666 of 690 off-grid scenarios (96.5\%).
The few exceedances, reaching at most \BEoffGridMax, occurred only under response-dependent dropout in two datasets not used for validation.
By drawing random scenarios, continual validation drives the analyst to evolve toward population-level performance rather than overfitting to a fixed testing grid.
In an additional experiment \rev{shown in Fig.~\ref{fig:robust}}, this randomization successfully rejected twelve updates that inflated the Type I error rate at off-grid sample sizes.
Under this setting, the method retained \BErandPowerShare\ percent of the oracle power.
Finally, these results hold across different language models.
Substituting Qwen or DeepSeek produced identical patterns.
The comparison methods failed to control the Type I error rate, while continual validation maintained strict control\rev{, as section~\SIref{sec:exp-be-results} of the Appendix details}.

\subsection*{Example 2, an Analyst of Product Experiments}

Online companies evaluate new product features through controlled experiments, commonly known as A/B tests \cite{kohavi2020trustworthy,larsen2024statistical}. Our example focuses on a two-arm experiment over 14 days. The primary metric is conversion, with revenue and engagement as secondary goals. Evaluating the new feature can be formulated as performing the following hypothesis test
$$
\begin{aligned}
H_0 &: \text{the new feature does not improve conversion} \\
H_1 &: \text{the new feature improves conversion}
\end{aligned}
$$
Rejecting $H_0$ means launching the feature, and launching an ineffective feature constitutes a Type I error. Common analytical choices, such as evaluating data every day or selectively choosing metrics after seeing the results, severely inflate this error \cite{johari2017peeking,simmons2011false}. We built the analyst on GrowthBook, an open-source experimentation platform \cite{growthbook2026}. 
The initial version uses daily purchase data to
compare conversion rates and bases its launch
decision on the day-14 $p$-value.
The developer was instructed to safely launch effective features more often and earlier. To evaluate proposed updates, we simulated null experiments featuring heavy-tailed revenue, correlated metrics, and multiple user segments, alongside dynamic scenarios targeting complex updates like daily data evaluations and new decision rules.

The results \rev{in Fig.~\ref{fig:results}\textit{E} to \textit{H}}
show that continual validation deployed no version
with a Type I error rate above $\alpha'=0.0452$ in any run.
Of the 30 rejected updates, 28 had Type I error rates
above $\alpha'$, while the remaining two had rates of
0.0435 and 0.0434, slightly below $\alpha'$.
In one run, the analyst proposed checking the data daily,
which raised the Type I error rate to 0.17.
Continual validation rejected this update.
In response to this feedback, the analyst autonomously
developed a new decision rule that allowed early launches
only for very small $p$-values and used an adjusted
threshold on day 14.
The final version achieved a power of 0.21
with a true Type I error rate of 0.043.
Across all runs, the analyst proposed \GBcvDistinct\ distinct versions, and \GBcvCodeShare\ percent of its updates were new decision rules. 

Comparison methods failed to consistently balance statistical power and Type I error control. 
No validation and the improvement test deployed versions above $\alpha=0.05$ in \GBallEver\ and \GBgateEver\ percent of runs, reaching error rates up to \GBnoCheckMax\ and \GBgateMax. 
The static benchmark deployed no version above $\alpha'$
and reached 98\% of the oracle's power in Example 2.
This was because all six updates that would have
exceeded $\alpha=0.05$ did so in scenarios covered
by its fixed benchmark.
In Example 1, by contrast, harmful updates exceeded
$\alpha$ in scenarios outside the fixed benchmark.
Learn then Test deployed no version above $\alpha$ but could not
deploy new decision rules, reaching only 80 percent of the
oracle power.
Continual validation safely evaluated these novel rules, never exceeded the adjusted bound $\alpha'$, and attained \GBcvPowerShare\ percent of the oracle power. Furthermore, On 116 off-grid scenarios, its deployed
versions had estimated Type I error rates of at most
0.044.

\begin{figure}[tp]
\centering
\IfFileExists{fig2.pdf}{\includegraphics[width=\textwidth]{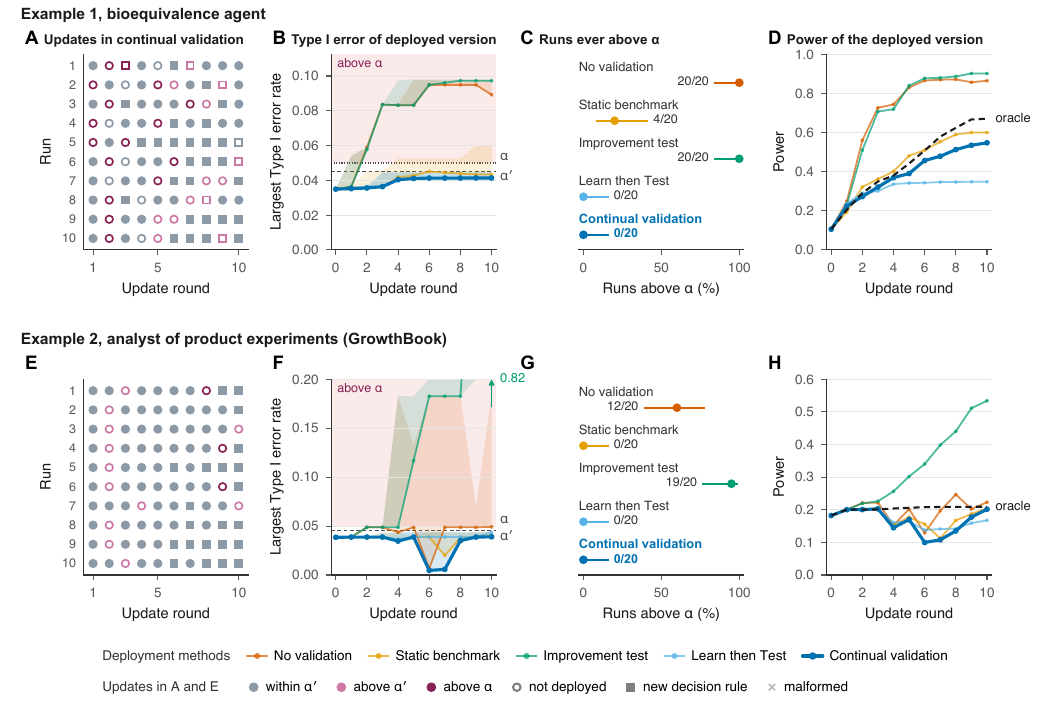}}{\figpending{Figure 2 shows the results of the two examples and is generated once they have been run.}{5cm}}
\caption{Continual validation and four comparison methods evaluated across two evolving AI analysts. Each method was tested in 20 independent runs of 10 rounds from the same initial version. In each round, Claude Sonnet 5.5 acted as the developer and proposed one update based on feedback from earlier rounds. (\textit{A} to \textit{D}) Bioequivalence analyst. (\textit{E} to \textit{H}) Product experiment analyst. (\textit{A} and \textit{E}) Updates proposed under continual validation in the first ten runs, with one row per run. Circles denote setting changes, squares denote new decision rules, and crosses denote malformed updates. Filled markers indicate deployed updates. Colors indicate the true Type I error rate of the proposed version, showing rates within $\alpha'$ (grey), above $\alpha'$ (pink), and above $\alpha$ (dark red). (\textit{B} and \textit{F}) Largest true Type I error rate of the deployed version over the full validation set after each round. Lines show the median over runs and bands extend to the worst run for no validation (vermilion), the static benchmark (orange), the improvement test (bluish green), Learn then Test (sky blue), and continual validation (blue). The dashed line marks the validation level $\alpha'=0.0452$, the dotted line marks the target level $\alpha=0.05$, and the shaded region lies above $\alpha$. In \textit{F}, the axis ends at 0.2 and an arrow indicates the largest rate in the worst run of the improvement test. (\textit{C} and \textit{G}) Share of runs that ever deployed a version above $\alpha$, showing Wilson 95 percent intervals and the raw count out of 20 runs. (\textit{D} and \textit{H}) Mean true statistical power of the deployed version after each round. The dashed line represents an oracle that deploys only valid updates that strictly increase power.}
\label{fig:results}
\end{figure}

\begin{figure}[tbp]
\centering
\includegraphics[width=0.6\linewidth]{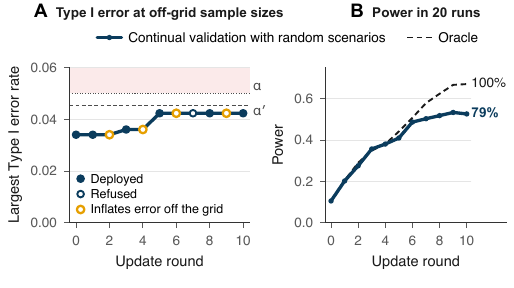}

\caption{Continual validation with random scenarios in the
bioequivalence analyst.
(\textit{A}) Largest true Type I error rate of the deployed
version at sample sizes of 30, 42, and 60, outside the
original validation grid of 24, 36, and 48.
Filled circles indicate deployed versions, while open circles
indicate rejected updates.
Orange circles highlight four rejected updates that would
increase Type I error outside the original grid while leaving
the analysis unchanged on the grid.
The dashed and dotted lines mark the validation level
$\alpha'=0.0452$ and the target level $\alpha=0.05$,
respectively. The shaded region indicates rates above $\alpha$.
(\textit{B}) Mean true power of the deployed version across
20 runs with Claude Sonnet 5.5 proposing updates.
The dashed line shows the oracle in Fig.~\ref{fig:results},
and the percentages indicate power relative to the oracle
after ten rounds.}
\label{fig:robust}
\end{figure}

\section{Discussion}

Type I error control need not end when a decision procedure starts to change. Continual validation extends this guarantee from a fixed procedure to an adaptively evolving, potentially unlimited sequence of AI analyst versions. Simulating datasets under the null hypothesis accounts for data sampling, while evaluating each frozen version accounts for the analyst's internal randomization. Freezing each update before simulating its validation data and controlling the lifetime probability of approving an invalid version prevents overfitting during the adaptive selection of the analyst. Together, these steps ensure that the probability of ever approving an invalid version remains bounded and that each deployed decision strictly controls the Type I error.

Several features distinguish continual validation from existing methods for evaluating and updating AI systems \cite{sengupta2026selfevolving,sun2026reuse,feng2021approval}. First, it directly controls the Type I error of statistical decisions, whereas methods that check only improvements in performance do not. A performance-based deployment criterion can still be added without affecting the statistical guarantee. Second, simulating new data after each update is frozen allows updates to be proposed adaptively, including by the analyst itself. Third, comparing every update with the same fixed reference analysis reduces simulation costs and avoids accumulating differences relative to successive versions. Finally, validating the allowed analysis paths can cover model replacements and prompt changes without additional simulation. Unlike Learn then Test, which calibrates a finite menu of configurations fixed in advance \cite{angelopoulos2025learn}, continual validation also accommodates previously unanticipated updates, including new decision rules written by the analyst. Regulators already use simulation over nuisance parameter grids for adaptive clinical trials \cite{fda2019adaptive} and provide for predetermined change control plans for AI-enabled medical software \cite{fda2024pccp}. Continual validation could provide a statistical framework for evaluating updates under such plans.

A limitation is that the theoretical guarantee strictly applies  to the null scenarios covered by validation. It does not automatically provide a mathematical bound for unobserved data generating patterns. However, our experiments demonstrate that continual validation remains highly robust in practice. The deployed versions successfully maintained Type I error control even on unseen off-grid scenarios. Furthermore, developers can easily expand the validation scenarios as the AI system gains new capabilities. 
Moreover, drawing randomly sampled scenarios naturally extends the theoretical safety guarantee to the broader data population. This prevents the AI analyst from overfitting to a fixed testing grid and ensures it safely evolves toward true population-level performance. The method is applicable wherever relevant null scenarios can be simulated and the AI system can be evaluated at sufficient scale.

Although we focus on Type I error, the framework requires only valid upper confidence bounds on the risk being controlled. It can therefore extend to other statistical error measures, such as the false discovery rate of a screening agent, when suitable confidence bounds are available. Overall, continual validation provides a practical deployment framework for AI systems. It allows AI analysts to continuously improve their statistical power and adapt to new tasks while maintaining rigorous safety guarantees throughout their working lives.

\section{Methods}
\subsection{Decisions, Versions, and Type I Error}
An AI analyst maps the data of a study, including its protocol, to one of three
decisions, to reject the null hypothesis, to retain it, or to defer to a human. A
deferral never counts as a rejection, so the guarantee concerns the decisions the
analyst makes on its own. 
A version $v$ is a complete specification of everything
that can change the decision, including the language model
and its sampling settings, the prompts, the allowed tools
and skills, the workflow, the code, and decision thresholds.
A deployed version decides each study using that
study's data and its own internal randomness,
without updating decision-relevant memory across studies.
Information from earlier studies may inform future
updates, which are validated before deployment.
For a null scenario
$\theta$, the Type I error rate $e(v,\theta)$ is the probability that $v$
rejects, over the data and the internal randomness. We assume that a real study whose true
state is a validation scenario produces data with the distribution from which the
validator simulates, and that a version cannot tell simulated studies from real
ones, which the Appendix states as Assumption~\SIref{ass:faithful}.

\subsection{Validation of an Update}
The bounds $U(x;m,\eta)$ and $L(x;m,\eta)$ of Algorithm~\ref{alg:validation} are
the exact one-sided Clopper and Pearson bounds \cite{clopper1934use}. Their levels $\delta_r/4$, $\delta_r/4$,
and $\delta_r/2$ add up to $\delta_r$, so by a union bound all three bounds of a
scenario hold together with probability at least $1-\delta_r$. Because an update
passes only if every scenario passes, the intersection-union argument
\cite{berger1982multiparameter} of Theorem~\SIref{thm:iut} in the Appendix shows that no division of
$\delta_r$ among the scenarios is needed. Random scenarios drawn after the
freeze enter a round with the analytic bound of the reference analysis where one
exists and $b_\theta=1$ otherwise, as described in
section~\SIref{sec:random} of the Appendix.

\subsection{Reference Analysis and Lifetime Risk}
The reference analysis is the standard analysis of the field, run at a reference
level $\alpha_0$ below $\alpha'$. For the two one-sided tests on Gaussian
scenarios with dropout that does not depend on the responses, its Type I error
rate is at most $\alpha_0$ \cite{schuirmann1987comparison}, which
Theorem~\SIref{thm:cont-ref} in the Appendix confirms for every design
used here, so that $b_\theta=\alpha_0$ at no simulation cost. On scenarios built from real studies or with non-Gaussian
data, the reference analysis receives a one-time validation that holds
simultaneously over these scenarios, including those that later updates may
add, and uses $\delta_0$ of the lifetime risk. Each round simulates all $m$
datasets in every scenario before the deployment decision, because optional stopping would
invalidate the exact bounds. We use $\alpha=0.05$, $\delta=0.005$ and
$\delta_0=0.001$, with $\alpha_0=0.030$ in Example 1 and $\alpha_0=0.040$ in
Example 2, and section~\SIref{sec:lifetime} of the Appendix
gives the proofs.

\subsection{Conditions for Deployment}
The validator must be able to run each version on simulated studies, either
through its allowed paths or in full. An analyst that learns from each study keeps
its memory frozen while a version is in force. The random seeds of a round are
drawn after its freeze. Each study is decided by one binding run, because
rerunning a stochastic version until it rejects is a different procedure. A
version validated in full pins its hosted model to a fixed snapshot, whereas a
version validated through its allowed paths is protected against silent model
updates by Proposition~2.

\section*{Acknowledgments}
The research of Xiao Wang was supported in part by the National Science Foundation under
Grant SES-2316428, by Anthropic through its AI for Science program, and by Lilly Purdue Innovation Institute.

\SIpart
\begin{center}{\LARGE\bfseries Appendix\par}\end{center}
\vskip6pt

\section{Setting and notation}
\label{sec:framework}

The theory concerns an AI analyst that decides one study after another while
its decision procedure keeps changing. We define the objects first, then the
Type I error rate that every guarantee controls, and finally the property of
lifetime validity. Table~\ref{tab:notation} collects the symbols used here and
in later sections.

\subsection{Studies and decisions}

The analyst reads the data $D$ of a study, which include its protocol, and
returns one of three decisions,
\[
  Y\in\{\text{reject},\ \text{retain},\ \text{defer}\}.
\]
Rejecting the null hypothesis is the consequential action that the evidence
must support. Such actions include reporting a discovery, launching a product
feature and declaring a test product bioequivalent to its reference product.
Retaining the null hypothesis declines the action, and a deferral hands the
study to a human expert. A deferral is never counted as a rejection, and the
guarantees below concern the rejections that the analyst makes on its own.

The data-generating distribution of a study depends on the true state of the
world. A scenario $\theta$ specifies this state completely, and $\Pr_\theta$
denotes probability when the data of a study are drawn from scenario $\theta$.
In a bioequivalence study a scenario fixes the true ratio of geometric mean
exposures, which the null hypothesis concerns, and the within-subject and
between-subject variability, which are nuisance parameters. For highly variable
drugs the within-reference variability also sets the limits that define the
null hypothesis. The scenario also fixes the sample size, the allocation to
sequences, the dropout pattern and the protocol the analyst reads.

Let $\Theta_0$ be the set of null scenarios, those in which the null
hypothesis holds and a rejection is wrong. The null hypothesis is composite,
since it holds in many scenarios. In bioequivalence the null scenarios are
those whose true ratio lies on or outside the regulatory limits for the true
variability. These limits are 80.00 and 125.00 percent when the true
within-reference coefficient of variation is at most 30 percent. Otherwise
they are the widened limits $\exp(\pm0.760\,\sigma_{wR})$ of the European
Medicines Agency (EMA) \cite{si:ema2010bioequivalence}, capped at 69.84 and
143.19 percent and evaluated at the true within-reference standard deviation
$\sigma_{wR}$ on the log scale, whereas the EMA's rule itself evaluates them at
the observed standard deviation $s_{wR}$ (section~\ref{sec:exp-be-scenarios}).
With the
limits of 80.00 and 125.00 percent the null hypothesis is $|\mu|\ge\Delta$ for
the log-ratio $\mu$ and $\Delta=\log 1.25$. Scenarios may be parametric, such
as Gaussian models of a crossover study. They may also be built from real
studies by resampling their subjects and moving the true effect to the
boundary of the null hypothesis or beyond it. The validator can draw studies
from every scenario it uses.

\subsection{Versions}

A version $v$ of the AI analyst is one fixed configuration of everything that
turns the data of a study into a decision. It includes the language model with
its identifier and sampling settings, the instructions, and the tools and
skills with their code. It also includes the statistical analyses and their
thresholds, the rules that send a study to an analysis or to a human, and the
computing environment. The analyst decides each study by running its current
version.

A version determines its decision as a fixed function of the data and of the
analyst's internal randomness,
\begin{equation}
  Y_v=y_v(D,\xi),
  \label{eq:decision}
\end{equation}
where $\xi$ collects every random choice the analyst makes. The randomness
$\xi$, which includes the sampling of a language model, is drawn independently
of $D$ from a distribution that the version fixes. Each version is therefore a
randomized test, a decision rule that may use randomness independent of the
data, and two runs of one version on one study have the same distribution of
decisions.

A version carries no memory from one study to the next. An analyst whose
context grows with every study it decides becomes a new version after each
study. Its memory must therefore stay frozen while a version is in force and
change only through updates. Each study is also decided by one binding run.
Rerunning a stochastic version until it rejects is a different procedure, and
$K$ runs of a version whose Type I error rate is $a$ can reject with
probability up to $1-(1-a)^K$ (the example after
Proposition~\ref{prop:routes}).

Every change to any component produces a new version, and the theory treats
all such versions alike. An edited prompt, a new skill, an upgraded tool, a
swapped model, a changed threshold and a revision written by the analyst
itself are all new versions. So is a hosted model that its provider updates
behind an unchanged name.

The validator of section~\ref{sec:certifying} runs a version exactly as it
runs in deployment, and simulated studies reach the version through the same
input as real studies. One assumption links the scenarios to the studies the
analyst decides in the field.

\begin{assumption}[Faithful simulation]
\label{ass:faithful}
A real study whose true state is $\theta$ produces data with the law
$\Pr_\theta$ from which the validator draws. A version's decision depends on a
study only through its data $D$ and its internal randomness $\xi$. In
particular, a version cannot tell a simulated study from a real one.
\end{assumption}

Under Assumption~\ref{ass:faithful} the rejection rate of a version in a
simulated scenario is its Type I error rate on real studies in that state. The
guarantees below are therefore statements about decisions in the field. The
assumption asks more of scenarios resampled from real studies than of
parametric ones, because resampling can leave traces that real studies lack,
such as repeated subjects. Section~\ref{sec:lifetime} shows how far the
guarantee reaches beyond the laws of the validation scenarios.

\subsection{Type I error rate}

For a version $v$ and a scenario $\theta\in\Theta_0$ the Type I error rate is
\begin{equation}
  e(v,\theta)=\Pr\nolimits_\theta\{Y_v=\text{reject}\},
  \label{eq:falsego}
\end{equation}
the probability over the data of the study and the internal randomness of the
analyst. For a set $\mathcal{S}\subseteq\Theta_0$ we write
$e(v,\mathcal{S})=\sup_{\theta\in\mathcal{S}}e(v,\theta)$, which is the size
of $v$ on $\mathcal{S}$. We call $v$ valid at level $a$ on $\mathcal{S}$ if
$e(v,\mathcal{S})\le a$. Power is the probability of a rejection in a scenario
where the null hypothesis is false. Power plays no part in the guarantees, and
most updates try to raise it.

Throughout, every decision is held to the level $\alpha=0.05$, and the
lifetime risk is $\delta=0.005$. The lifetime risk is the largest acceptable
probability that validation ever passes an update whose Type I error rate
exceeds the validation level in one of its validation scenarios. Each
validation uses the validation level $\alpha'=(\alpha-\delta)/(1-\delta)=0.0452$,
the reference analysis introduced below runs at the reference level
$\alpha_0$ of the example (0.030 in Example 1 and 0.040 in Example 2), and $\delta_0$ is the part of $\delta$ spent on the reference
bounds (section~\ref{sec:lifetime}).

\subsection{Scenario sets}

The validator tests versions on finite sets of scenarios, the validation
scenarios. They begin with the initial scenario set
$\mathcal{S}_0\subseteq\Theta_0$ of $J$ scenarios, which the validator fixes
before the analyst's working life begins. The initial scenario set varies the
factors that the validator expects to change decisions, such as variability,
sample size and dropout. It also contains null scenarios both at the boundary
of the null hypothesis and inside it. When an updated version reads a kind of
information that no scenario in the current set varies, the validator adds
scenarios that vary it (section~\ref{sec:extension}). Scenarios are added and
never removed, so the set $\mathcal{S}_r$ in force at update $r$ always
contains $\mathcal{S}_0$. The validator may also draw scenarios at random in
each round, after the updated version is frozen, and use them in that round
only. These extend the guarantee for each deployed version to the space
between the fixed scenarios, except a set of small probability
(Proposition~\ref{prop:random}). The proposer has no say in which scenarios
validation uses.

\subsection{An analyst that keeps changing}

Over its working life the analyst receives updates $r=1,2,\dots$, and update
$r$ offers a new version $v_r$, the updated version. An update may come from
anyone, and we call its source the proposer. The proposer may be the analyst
revising itself after a failure or an engineer editing its instructions. It
may also be an optimizer that maximizes power or an adversary searching for a
version that rejects too often. Each updated version passes or fails
validation. A version that passes may be deployed, and a deployed version
decides every study whose data arrive afterward. An update that fails changes
nothing but still counts, because updates are counted by attempt.

The validator holds a reference analysis $v_0$, a fixed decision procedure
with reference bounds $b_\theta$ on its Type I error rates. These bounds have
simultaneous coverage, $b_\theta\ge e(v_0,\theta)$ for every scenario ever
used, except on an event of probability at most $\delta_0$
(section~\ref{sec:lifetime}). When $b_\theta\le\alpha'$ for every
$\theta\in\mathcal{S}_0$, the analyst starts its life with the reference
analysis in force, or with a version that rejects only where the reference
analysis rejects (Corollary~\ref{cor:domination}). Otherwise it starts with the
version that defers every study. In bioequivalence the reference analysis is
the two one-sided tests procedure at the reference level $\alpha_0$
applied to the subjects with complete data (sections~\ref{sec:analytic}
and~\ref{sec:exp-be-settings}).

Studies $s=1,2,\dots$ arrive over time, each with a true scenario $\theta_s$
and a decision $Y_s$. We write $v_{(s)}$ for the version in force when the
data of study $s$ arrive.

\subsection{Lifetime validity}

\begin{definition}[Lifetime validity]
\label{def:lifetime}
An analyst that keeps changing is lifetime valid at level $\alpha$ on a set of
scenarios $\mathcal{S}\subseteq\Theta_0$ if every study $s$ that it ever
decides, and whose true scenario $\theta_s$ lies in $\mathcal{S}$ almost
surely, satisfies
\begin{equation}
  \Pr\{Y_s=\text{reject}\}\le\alpha ,
  \label{eq:lifetime-def}
\end{equation}
where the probability is taken over everything that happens up to and
including the decision. This history consists of the updates, the
validations, the earlier studies, the data of study $s$ and the internal
randomness of the analyst.
\end{definition}

Lifetime validity is a guarantee about each decision, under whichever version is
in force, after any number of changes chosen in any way.
Corollary~\ref{cor:every-decision} shows that continual validation delivers it
on every scenario set on which all versions in force are validated, and
Assumption~\ref{ass:faithful} carries it to the real studies in those states.

\begin{table}[h]
\centering\footnotesize
\caption{Notation.}
\label{tab:notation}
\renewcommand{\arraystretch}{1.12}
\begin{adjustbox}{max width=\textwidth}\begin{tabular}{@{}p{0.30\textwidth}p{0.64\textwidth}@{}}
\toprule
symbol & meaning \\
\midrule
$D$, $\xi$ & data of a study, internal randomness of the analyst \\
$Y_v\in\{\text{reject},\text{retain},\text{defer}\}$ & decision of version $v$ \\
$Y_s$, $\theta_s$ & decision on study $s$, true scenario of study $s$ \\
$\theta$, $\Theta_0$ & scenario, set of null scenarios \\
$e(v,\theta)$, $e(v,\mathcal{S})$ & Type I error rate in $\theta$, worst case over $\mathcal{S}$ \\
$a$ & a level of validity, the target of a single validation \\
$\mu$, $\Delta$ & log-ratio of the geometric means, equivalence limit $\log 1.25$ \\
$\sigma_{wR}$, $s_{wR}$ & true and observed within-reference standard deviation on the log scale \\
$\beta_i$, $f(i,j)$ & effect of sequence $i$ in the crossover, formulation given in period $j$ of sequence $i$ \\
$T_\mu$, $T_\kappa$ & $t$ statistics of the log-ratio $\mu$ and of the sequence effect $\kappa=\beta_1-\beta_2$ \\
$\mathcal{S}_0$, $\mathcal{S}_r$ & initial scenario set, scenario set of update $r$ \\
$J$, $J_r$ & number of scenarios in $\mathcal{S}_0$, number of scenarios in round $r$ \\
$\mathcal{S}(v)$ & scenario set on which $v$ passed validation or whose validity it inherits \\
$v_0$, $v_r$, $v_{(s)}$ & reference analysis, updated version of update $r$, version in force for study $s$ \\
$\mathcal{R}$, $v_{\mathcal{R}}$ & set of allowed paths, any-path rule \\
$\phi$, $\phi_\lambda$, $\Lambda$ & analysis, member of a routing family, index set of the family \\
$I$ & routing information that selects an analysis \\
$\alpha$, $\alpha'$, $\alpha_0$ & Type I error level of every decision, validation level, reference level \\
$\delta$, $\delta_0$, $\delta_r$ & lifetime risk, reference risk (the part of $\delta$ spent on the reference bounds), share of update $r$ \\
$\delta_{0,\theta}$, $\delta_0'$, $\delta_{0,k}$ & level of the reference bound of $\theta\in\mathcal{S}_0$, their sum, level of the $k$th added scenario \\
$b_\theta$ & reference bound, a bound on the Type I error rate of the reference analysis in $\theta$ \\
$E_0$, $\mathcal{E}$ & event that every reference bound holds, event that in addition no update above the validation level passes \\
$A_r$ & event that $v_r$ passes validation with $e(v_r,\mathcal{S}_r)>\alpha'$ \\
$m$, $m_r$, $m_0$ & simulated studies per scenario in one round, in round $r$, in the validation of the reference analysis \\
$X_\theta$, $N^{+}_\theta$, $N^{-}_\theta$ & rejections of the version under validation, rejections it adds to those of the reference analysis, rejections it removes \\
$q^{+}_\theta$, $q^{-}_\theta$ & probabilities of an added and of a removed rejection \\
$U(x;m,\eta)$, $L(x;m,\eta)$ & exact one-sided upper and lower binomial bounds at level $\eta$ \\
$\eta$, $\zeta$ & level of a bound, allowed probability that a valid update fails \\
$B^{\mathrm{pair}}_\theta$, $B^{\mathrm{dir}}_\theta$, $B_\theta$, $B_{r,\theta}$ & paired, direct and combined bounds of scenario $\theta$, combined bound of $\theta$ in round $r$ \\
$\gamma$ & margin below the validation level \\
$\mathrm{kl}(x,y)$, $\ell$, $\ell_r$ & relative entropy between Bernoulli distributions, logarithmic terms in the numbers of simulated studies of Theorem~\ref{thm:cost-rel} and Proposition~\ref{prop:power} \\
$\mathcal{G}_r$, $\mathcal{H}_s$ & information at the freeze of round $r$, at the arrival of study $s$ \\
$\Pi$, $\mathcal{Q}_r$, $K_r$, $\varepsilon_\Pi$ & law, set and number of the random scenarios of round $r$, $\Pi$-probability left uncovered \\
$V_r$, $\Gamma_r$ & set where $v_r$ exceeds $\alpha'$, event that $\Pi(V_r)>\varepsilon_\Pi$ and no random scenario of round $r$ lies in $V_r$ \\
$F_k$, $\Psi_k$ & number of false rejections among the first $k$ studies of Corollary~\ref{cor:frequency}, $F_k$ minus the sum of their conditional probabilities \\
\bottomrule
\end{tabular}\end{adjustbox}
\end{table}

\FloatBarrier

\section{Why common checks fail}
\label{sec:failures}

\begin{figure}[htbp]
\centering
\begin{minipage}[t]{0.32\textwidth}{\sffamily\bfseries A}\\[1pt]\includegraphics[width=\linewidth]{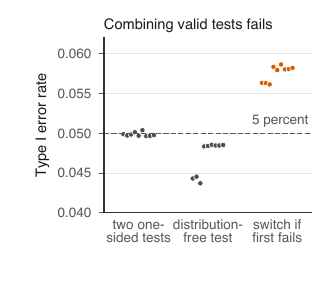}\end{minipage}\hfill
\begin{minipage}[t]{0.32\textwidth}{\sffamily\bfseries B}\\[1pt]\includegraphics[width=\linewidth]{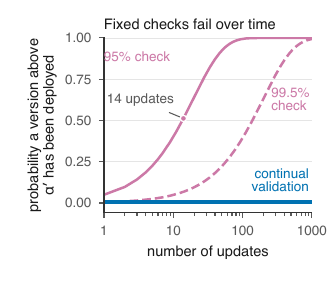}\end{minipage}\hfill
\begin{minipage}[t]{0.32\textwidth}{\sffamily\bfseries C}\\[1pt]\includegraphics[width=\linewidth]{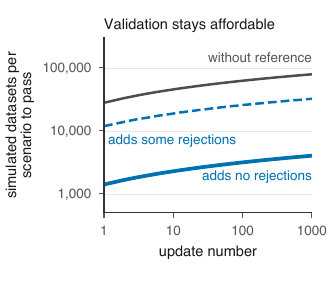}\end{minipage}
\caption{Why common checks fail and what validation costs. (\textit{A}) Combining
individually valid tests can inflate Type I error. Two bioequivalence tests each
control Type I error near 5 percent, but using the second whenever the first
fails exceeds this level. Dots show nine designs, each evaluated using
\SoneNrepB\ simulated studies. (\textit{B}) Repeated validation at a fixed
confidence level accumulates risk. The curves show the probability of ever
deploying a version above $\alpha'$ when every proposed version lies just above
this threshold. Fixed 95 percent and 99.5 percent checks accumulate risk, whereas
continual validation keeps it below the lifetime level $\delta$. (\textit{C})
Simulated datasets per scenario required for an update to pass validation as the
update number increases. Without reference comparison, a version with Type I
error rate $\alpha_0$ passes with probability 0.9 (gray). With paired comparison,
a version that adds no rejections passes with certainty (solid blue), while one
that adds rejections at rate $(\alpha'-\alpha_0)/2$ passes with probability 0.9
(dashed blue). Panels \textit{B} and \textit{C} use $\alpha=0.05$, $\delta=0.005$,
$\delta_0=0.001$, and $\alpha_0=0.040$, giving $\alpha'=0.0452$.}
\label{fig:si-concept}
\end{figure}

An analyst that changes freely can reject far more often than any check of a
single version suggests. Three mechanisms are at work, and each has an exact
statement below. Valid analyses do not compose, a check that reuses stored
simulations can be defeated, and a check that is valid for each version
separately still fails over a long life.

\subsection{Valid analyses do not compose}
\label{sec:composition}

An analysis is any decision rule $\phi$ with values reject and retain. It is valid
at level $\alpha$ if $e(\phi,\theta)\le\alpha$ for every $\theta\in\Theta_0$.

\begin{theorem}[Valid analyses do not compose]
\label{thm:nonclosure}
Fix $\alpha\in(0,1)$ and an integer $K\ge2$.
\begin{enumerate}[label=(\roman*),leftmargin=2.2em,itemsep=0pt]
\item Let $\phi_1,\dots,\phi_K$ be analyses valid at level $\alpha$. Let $v$
  be any version that rejects on a study only when at least one of
  $\phi_1,\dots,\phi_K$ rejects on it, however $v$ chooses among them. Then
  $e(v,\theta)\le\min(1,K\alpha)$ for every $\theta\in\Theta_0$.
\item The bound is attained. There are analyses $\phi_1,\dots,\phi_K$ with
  $e(\phi_k,\theta)=\alpha$ for every $k$ and every $\theta\in\Theta_0$, and a
  version $v$ of the kind in (i), such that $e(v,\theta)=\min(1,K\alpha)$ for
  every $\theta\in\Theta_0$.
\end{enumerate}
\end{theorem}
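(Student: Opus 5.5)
Part (i) follows from a union bound. Write the version $v$ as $Y_v=y_v(D,\xi)$ as in \eqref{eq:decision}. By hypothesis the rejection event is contained in $\bigcup_{k=1}^K\{\phi_k(D)=\text{reject}\}$, and this holds for every value of $\xi$, however $v$ chooses among the analyses. Taking probability over $(D,\xi)$ under $\Pr_\theta$ for a fixed $\theta\in\Theta_0$ gives
\[
e(v,\theta)\le\Pr\nolimits_\theta\Bigl\{\bigcup_{k}\{\phi_k=\text{reject}\}\Bigr\}\le\sum_k e(\phi_k,\theta)\le K\alpha .
\]
The trivial bound by $1$ then gives $\min(1,K\alpha)$. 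If the $\phi_k$ are themselves randomized, I would absorb their randomness into $D$ or $\xi$, so the same inclusion holds pointwise.

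For part (ii) I would build the example on an auxiliary uniform variable, so that it works for every null scenario at once. Let each study carry, or let the analyst generate, $U\sim\mathrm{Unif}(0,1)$ independent of everything and with the same law under every $\theta$. Take $\phi_k$ to reject iff $U\in[(k-1)\alpha,k\alpha)$ taken modulo $1$, using arcs on the circle $\mathbb{R}/\mathbb{Z}$ of length $\alpha$. Each $\phi_k$ then has $e(\phi_k,\theta)=\alpha$ exactly, for all $\theta$. Let $v$ reject iff some $\phi_k$ rejects, which is of the kind in (i). The union of the $K$ consecutive arcs has measure $\min(1,K\alpha)$, because consecutive arcs are disjoint until they wrap around and afterwards cover the whole circle. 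Hence $e(v,\theta)=\min(1,K\alpha)$.

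The only real obstacle is whether an ``analysis'' may use external randomness; the paper's framework allows randomized tests. If a deterministic function of the data is required, I would instead use a continuous statistic $T(D)$ whose distribution is the same across $\Theta_0$. Pushing it through its CDF gives a uniform $U=F(T)$, and the same arc construction goes through. In the simplest case one can take $\Theta_0$ to be a single scenario with continuous data, for example by applying the probability integral transform to one observation. I would state the construction in the randomized form, which covers every $\Theta_0$, and remark on the deterministic variant.
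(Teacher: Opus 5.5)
Your proposal is correct and follows the paper's proof: Boole's inequality for (i), and for (ii) arcs of length $\alpha$ on the circle $[0,1)$, cut out by an auxiliary uniform variable drawn as part of the analyst's internal randomness. The only difference is that you start the arcs at $(k-1)\alpha$ rather than at the paper's $(k-1)\min(\alpha,1/K)$. Your placement works equally well, because the union of your arcs is the image of $[0,K\alpha)$ on the circle, which has measure $\min(1,K\alpha)$.
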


\begin{proof}
(i) The rejection event of $v$ lies in the union of the rejection events of the analyses, so
$e(v,\theta)\le\sum_k e(\phi_k,\theta)\le K\alpha$, and a probability is at
most one.

(ii) Let $W$ be uniform on $[0,1)$ and independent of the data, as part of the
internal randomness of the analyst. Set $c=\min(\alpha,1/K)$ and let $\phi_k$
reject when $(W-(k-1)c)\bmod 1<\alpha$. Each $(W-(k-1)c)\bmod 1$ is uniform on
$[0,1)$, so $e(\phi_k,\theta)=\alpha$ in every scenario. The rejection event of
$\phi_k$ is the arc of length $\alpha$ that starts at $(k-1)c$ on the circle
$[0,1)$. If $K\alpha\le1$, then $c=\alpha$, the $K$ arcs are disjoint and their
union has length $K\alpha$. If $K\alpha>1$, then $c=1/K<\alpha$, consecutive
arcs overlap and their union is the whole circle. The version that rejects
when any $\phi_k$ does therefore has $e(v,\theta)=\min(1,K\alpha)$.
\end{proof}

Choosing among valid analyses after seeing the data can multiply the Type I error
rate by the number of analyses, and nothing in the individual analyses reveals
it. The bound in (i) is Boole's inequality, and the construction in (ii)
ignores the data so that the arithmetic is exact. Real analyses behave the
same way whenever they reject on different studies. In bioequivalence, falling
back to a distribution-free interval whenever the two one-sided tests fail
raises the Type I error rate at the equivalence boundary to \SoneBeWorst\
(section~\ref{sec:exp-nonclosure}). The reference-scaled rule of the US Food
and Drug Administration (FDA) for highly variable drugs, which switches between
two valid analyses according to the observed variability, reaches about
\SthreeFactor\ times its nominal level when each criterion is judged where the
switch applies it (section~\ref{sec:exp-rsabe}).

\subsection{Selection through repeated updates}
\label{sec:selection}

A proposer that maximizes power keeps every change that adds rejections. If a
version $v'$ rejects on every run on which $v$ rejects, then
$e(v',\theta)\ge e(v,\theta)$ in every null scenario. In that case $v'$ also
has at least the power of $v$ in every scenario where the null hypothesis is
false. Under such changes power and the Type I error rate rise together.
Without a check, a lineage of $K$ changes that each add a valid analysis is a
version of the kind in Theorem~\ref{thm:nonclosure} and can reach
$\min(1,K\alpha)$. The changes that raise power fastest are often of exactly
this kind. Optional stopping, which tests at every interim look and stops at
the first rejection, launching when any of several metrics improves, and
widening the equivalence limits when the observed variability is high all add
rejections.

Checking each version is not enough, because the proposer selects among
checked versions. If the check reuses the same simulated studies, the proposer
can tune versions to them. If the check draws new studies but gives every
version the same confidence level, the probability that some version above the
validation level passes grows with every update. The next two
results make both failures exact, the first for reused simulations and the
second for a fixed confidence level per version.

\subsection{A reused set of simulations can be defeated}
\label{sec:reuse}

\begin{proposition}[A reused suite can be defeated]
\label{prop:reuse}
Let the reference analysis $v_0$ decide by a fixed function of the data. Let
$\mathcal{D}$ be a stored suite of simulated studies, $m$ per scenario, that
the proposer can read. Then there is a version $\tilde v$ that makes the
decision of the reference analysis on every study in $\mathcal{D}$. A
validation computed from $\mathcal{D}$ therefore gives $\tilde v$ exactly the
bounds it gives the reference analysis itself, and $\tilde v$ passes whenever
the reference analysis would pass in its place. Its Type I error rate satisfies
$e(\tilde v,\theta)\ge1-\Pr_\theta\{D\in\mathcal{D}\}$ in every scenario, and
$e(\tilde v,\theta)=1$ in every scenario whose data have a distribution
without atoms, such as any distribution with a density.
\end{proposition}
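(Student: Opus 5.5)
The proof is constructive: we define $\tilde v$ explicitly as a lookup of the stored suite followed by a default rejection. The version fixes a copy of $\mathcal{D}$, which the proposer can read, and decides a study with data $D$ as follows:
\[
\tilde y(D)=\begin{cases} y_{v_0}(D), & D\in\mathcal{D},\\ \text{reject}, & D\notin\mathcal{D}.\end{cases}
\]
Because the reference analysis decides by a fixed function of the data, $\tilde v$ is a legitimate version in the sense of \eqref{eq:decision}. It uses no internal randomness, and its rule is deterministic and computable once $\mathcal{D}$ is frozen into it.

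The first claim follows directly. On every stored study $\tilde v$ returns exactly $v_0$'s decision. Hence in each scenario the count of rejections $X_\theta$ coincides with that of $v_0$, and the added and removed rejections satisfy $N^{+}_\theta=N^{-}_\theta=0$. Every bound the validation computes from $\mathcal{D}$ (direct, paired, or combined) is a function only of these counts, $m$, and $b_\theta$. These bounds are therefore identical to the ones the reference analysis would receive, so $\tilde v$ passes precisely when $v_0$ would pass in its place.

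For the Type I error rate, the event $\{D\notin\mathcal{D}\}$ is contained in the rejection event of $\tilde v$. This gives
\[
e(\tilde v,\theta)\ge \Pr_\theta\{D\notin\mathcal{D}\}=1-\Pr_\theta\{D\in\mathcal{D}\}.
\]
If the law of $D$ under $\theta$ has no atoms, the finite (hence countable) set $\mathcal{D}$ has probability zero, since it is a finite union of points of mass zero. Thus $e(\tilde v,\theta)=1$.

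The only delicate point is the measurability of $\{D\in\mathcal{D}\}$. It holds because finite sets are measurable in the standard data spaces, as singletons are closed. A second point is that the statement concerns a fresh study drawn independently of the stored suite, so $\mathcal{D}$ is treated as fixed when computing $\Pr_\theta$. I would note this explicitly by conditioning on $\mathcal{D}$, since the suite was generated before $\tilde v$ is defined.
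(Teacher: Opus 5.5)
Your proposal is correct and follows the paper's proof essentially step for step. It uses the same lookup-then-reject construction, the same observation that the counts coincide with $N^{+}_\theta=N^{-}_\theta=0$, and the same containment $\{D\notin\mathcal{D}\}\subseteq\{\tilde v\text{ rejects}\}$, with a finite suite null under an atomless law; your remarks on measurability and on conditioning on $\mathcal{D}$ are harmless refinements that the paper leaves implicit.
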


\begin{proof}
Let $\tilde v$ return the decision of the reference analysis on any study in
$\mathcal{D}$ and reject on any other study. On $\mathcal{D}$ the two
procedures agree study by study, so the rejection count of $\tilde v$ equals that of
the reference analysis, and $\tilde v$ adds and removes no rejections. These
are exactly the counts that the reference analysis produces against itself. In scenario
$\theta$ the version $\tilde v$ rejects whenever $D\notin\mathcal{D}$, which
gives the bound on $e(\tilde v,\theta)$, and a finite suite has probability
zero under a distribution without atoms.
\end{proof}

A suite of simulations that the proposer can see is a benchmark that it can
overfit. The construction is extreme, and its milder form is selection on
noise. A proposer that tunes a continuous threshold against
stored studies keeps the most permissive version whose estimate on those
studies stays below the validation level. The kept version is selected for a
low estimate, and its true Type I error rate can exceed that level. Conditions (a)
and (b) of Theorem~\ref{thm:lifetime} remove both effects for the simulated
studies, because the studies of every round are drawn after the version is
fixed and independently of everything the proposer has seen. The same two
effects can act on a fixed set of scenarios, and random scenarios drawn after
each freeze confine them to a set of small probability
(section~\ref{sec:random}).

\subsection{A fixed confidence level per version fails over a lifetime}
\label{sec:flat}

Call an update invalid if its Type I error rate exceeds the validation level
in some of its validation scenarios.

\begin{lemma}[A fixed confidence level per version]
\label{lem:flat}
Suppose that every update is validated on newly simulated studies at the
same confidence level $1-\eta$. In particular, given everything that precedes
the draws of round $r$, an invalid update passes with probability at most
$\eta$. Then for every proposer the probability that some invalid version
passes among the first $G$ updates is at most $1-(1-\eta)^G$. If every
update is invalid and passes with conditional probability exactly
$\eta$, the probability equals $1-(1-\eta)^G$, which tends to one as $G$
grows.
\end{lemma}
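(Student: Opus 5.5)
We would prove Lemma~\ref{lem:flat} by a sequential conditioning argument on the filtration generated by the rounds. Let $\mathcal{F}_{r}$ denote everything that precedes the draws of round $r$: the proposer's earlier choices, the frozen version $v_r$, the scenario set $\mathcal{S}_r$, and all earlier validation outcomes. Let $A_r$ be the event that $v_r$ is invalid, meaning $e(v_r,\mathcal{S}_r)>\alpha'$, and passes. Let $C_G$ be the event that no invalid version passes among the first $G$ updates, so $C_G=\bigcap_{r\le G}A_r^{c}$. The target is $\Pr(C_G)\ge(1-\eta)^G$.

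The key step is the conditional bound $\Pr(A_r\mid\mathcal{F}_r)\le\eta$ almost surely. On the $\mathcal{F}_r$-measurable event that $v_r$ is valid, $A_r$ cannot occur and the bound is trivial. On the complementary event it is exactly the hypothesis, since the studies of round $r$ are newly simulated and therefore independent of $\mathcal{F}_r$ given the frozen $v_r$. Because $C_{r-1}\in\mathcal{F}_r$, the tower property gives
\[
\Pr(C_r)=\E\bigl[\mathbf 1_{C_{r-1}}\Pr(A_r^{c}\mid\mathcal{F}_r)\bigr]\ge(1-\eta)\Pr(C_{r-1}).
\]
Induction from $\Pr(C_0)=1$ then yields $\Pr(C_G)\ge(1-\eta)^G$. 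Taking complements gives the stated upper bound $1-(1-\eta)^G$ for every proposer, however adaptively it chooses versions, because the proposer's strategy only enters through $\mathcal{F}_r$.

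For the equality case, suppose every update is invalid and passes with conditional probability exactly $\eta$. Then $\Pr(A_r^c\mid\mathcal{F}_r)=1-\eta$ exactly, and the inequality above becomes an equality, so $\Pr(C_G)=(1-\eta)^G$. Since $\eta>0$, this tends to zero, and the probability that some invalid version passes tends to one. A concrete instance is any fixed invalid version resubmitted at each round whose pass probability at level $1-\eta$ equals $\eta$; alternatively one can simply take the hypothesis of the lemma as given.

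The only delicate point is measure-theoretic bookkeeping. We must make precise that the freeze makes $v_r$ and $\mathcal{S}_r$ $\mathcal{F}_r$-measurable and that the round-$r$ draws are conditionally independent of $\mathcal{F}_r$, so that the per-round hypothesis really is a conditional statement. We would state this explicitly rather than treat the rounds as independent trials, because the versions themselves are adaptively dependent. The rest of the argument is elementary.
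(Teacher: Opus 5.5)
Your proposal is correct and follows the paper's proof essentially line for line. It conditions on the information before the draws of each round, applies the tower-property step $\Pr(C_r)=\E\bigl[\mathbf{1}_{C_{r-1}}\Pr(A_r^{c}\mid\mathcal{F}_r)\bigr]\ge(1-\eta)\Pr(C_{r-1})$ with induction from $\Pr(C_0)=1$, and notes, as the paper does, that every inequality becomes an equality when the conditional pass probability is exactly $\eta$.
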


\begin{proof}
Let $A_r$ be the event that update $r$ is invalid and passes, and let
$\mathcal{G}_r$ be the information available before the draws of round $r$.
Let $N_r$ be the event that none of $A_1,\dots,A_r$ occurs. The event
$N_{r-1}$ is known before the draws of round $r$, so
\[
  \Pr(N_r)=\E\bigl[\mathbf{1}_{N_{r-1}}\Pr(A_r^{c}\mid\mathcal{G}_r)\bigr]
  \ge(1-\eta)\Pr(N_{r-1}),
\]
and induction gives $\Pr(N_G)\ge(1-\eta)^G$. Under the second hypothesis
$\Pr(A_r^{c}\mid\mathcal{G}_r)=1-\eta$ on $N_{r-1}$, and every inequality
becomes an equality.
\end{proof}

A confidence level per version is a guarantee per version, and over $G$
updates the risk compounds. In the worst case of the lemma, the familywise
error rate over the updates, the probability that some invalid update passes,
exceeds one half after \FoneGHalf\ updates at the conventional 95 percent
confidence, and after \FoneGHalfStrict{} updates at 99.5 percent.

Continual validation divides the lifetime risk $\delta$ among all updates, and
update $r$ receives a share $\delta_r$ with $\sum_r\delta_r\le\delta$. By a
union bound over the updates, the familywise error rate is then at most
$\delta$ however many updates arrive (Theorem~\ref{thm:lifetime}).

\FloatBarrier

\section{Validating one version}
\label{sec:certifying}

This section shows how a single frozen version is validated. The version $v$
is fixed, and the validator draws $m$ new simulated studies in each scenario
of a finite set $\mathcal{S}$. The question is whether $e(v,\theta)$ is at most
a target $a$ in every scenario of $\mathcal{S}$. In continual validation the
target is the validation level $a=\alpha'$, and Section~\ref{sec:lifetime}
chains such validations over a lifetime.

\subsection{Exact bounds in one scenario}

Let $X$ be binomial with $m$ trials and success probability $p$, and write
$\Pr_p$ for probabilities under this distribution. The exact one-sided upper bound of Clopper and Pearson
\cite{si:clopper1934use} at level $\eta$ is $U(x;m,\eta)=1$ for $x=m$, and
otherwise the $(1-\eta)$ quantile of the $\mathrm{Beta}(x+1,m-x)$
distribution. The lower bound is $L(x;m,\eta)=0$ for $x=0$, and otherwise the
$\eta$ quantile of the $\mathrm{Beta}(x,m-x+1)$ distribution.

\begin{lemma}[Exact binomial bounds]
\label{lem:cp}
For every $m\ge1$, every $\eta\in(0,1)$ and every $p\in[0,1]$, $\Pr_p\{U(X;m,\eta)<p\}\le\eta$ and
$\Pr_p\{L(X;m,\eta)>p\}\le\eta$. Moreover $\Pr_p\{U(X;m,\eta)\le p\}\le\eta$
for every $p<1$, and $\Pr_p\{L(X;m,\eta)\ge p\}\le\eta$ for every $p>0$.
\end{lemma}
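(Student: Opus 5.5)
The plan is to prove the four claims about the upper bound $U$; the two claims about $L$ then follow by symmetry. Replace $X$ by $m-X$ and $p$ by $1-p$. This uses the Beta reflection identity: the $\eta$ quantile of $\mathrm{Beta}(x,m-x+1)$ equals one minus the $(1-\eta)$ quantile of $\mathrm{Beta}(m-x+1,x)$. Hence $L(x;m,\eta)=1-U(m-x;m,\eta)$, including the boundary convention $L(0)=0\leftrightarrow U(m)=1$. Under this identity the event $\{L(X)>p\}$ becomes $\{U(m-X)<1-p\}$, and $\{L(X)\ge p\}$ becomes $\{U(m-X)\le 1-p\}$. Here $m-X$ is binomial with parameter $1-p$, and $p>0$ corresponds to $1-p<1$.

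For $U$, I would use the standard Beta–binomial duality. For $x<m$,
\[
\Pr\{\mathrm{Beta}(x+1,m-x)\le q\}=\Pr_q\{\mathrm{Bin}(m,q)\ge x+1\},
\]
so $g_x(q):=\Pr_q\{X\le x\}$ is continuous and strictly decreasing in $q$ on $[0,1]$, with $g_x(0)=1$ and $g_x(1)=0$. It follows that $U(x)$ is the unique $q$ with $g_x(q)=\eta$, and therefore
\[
U(x)\le p \iff g_x(p)\le\eta .
\]
This equivalence is valid for $x<m$. For $x=m$ we have $U=1$, so $U\le p$ is impossible when $p<1$. When $p=1$, $X=m$ almost surely, $U=1$, and the first claim holds trivially.

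It therefore suffices to prove the stronger non-strict claim $\Pr_p\{U(X)\le p\}\le\eta$ for $p<1$; the strict version follows from it for $p<1$, and the case $p=1$ was handled above. The key step is monotonicity: $U(x)$ is nondecreasing in $x$, because $g_x(q)$ increases in $x$ for fixed $q$. Consequently the set $\{x: U(x)\le p\}$ is an initial segment $\{0,\dots,x^*\}$, or empty, and it never contains $m$ when $p<1$. If the set is empty, the probability is $0$. Otherwise
\[
\Pr_p\{U(X)\le p\}=\Pr_p\{X\le x^*\}=g_{x^*}(p)\le\eta,
\]
by the equivalence applied at $x^*$. This also gives the strict-inequality statement.

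The only delicate points are the boundary conventions at $x=m$ and $p\in\{0,1\}$ and verifying that the duality identity gives strict monotonicity. Strict monotonicity holds because $\partial_q g_x(q)=-m\binom{m-1}{x}q^x(1-q)^{m-1-x}<0$ on $(0,1)$. So the main obstacle is bookkeeping: making sure the $p=0$ case, where $X=0$ almost surely and $g_0(0)=1>\eta$ so $U(0)>0$, and the reflection for $L$ are handled without gaps.
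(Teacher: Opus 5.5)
Your proposal is correct and follows essentially the same route as the paper. Both arguments use the Beta--binomial duality to make $U(x;m,\eta)$ the root of $\Pr_q\{X\le x\}=\eta$, use monotonicity of $U$ in $x$ to reduce $\{U(X)\le p\}$ to $\{X\le x^*\}$ with $x^*<m$ when $p<1$, treat $p=1$ through $X=m$ and $U(m;m,\eta)=1$, and obtain the claims for $L$ from the reflection $L(x;m,\eta)=1-U(m-x;m,\eta)$ applied to $m-X$.
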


\begin{proof}
For $x<m$ the probability $\Pr_q\{X\le x\}$ is continuous and strictly
decreasing in $q$, and it equals $\Pr\{\mathrm{Beta}(x+1,m-x)>q\}$. Hence
$U(x;m,\eta)$ is the value of $q$ at which $\Pr_q\{X\le x\}=\eta$, and $U$
increases in $x$. Fix $p<1$. If no $x$ has $U(x;m,\eta)\le p$, the event
$\{U(X;m,\eta)\le p\}$ is empty. Otherwise let $x_p$ be the largest such $x$.
Then $x_p<m$ because $U(m;m,\eta)=1$, and
\[
  \Pr\nolimits_p\{U(X;m,\eta)\le p\}=\Pr\nolimits_p\{X\le x_p\}
  \le\Pr\nolimits_{U(x_p;m,\eta)}\{X\le x_p\}=\eta ,
\]
because $p\ge U(x_p;m,\eta)$ and $\Pr_q\{X\le x_p\}$ decreases in $q$. The
strict statement follows for $p<1$, and it holds for $p=1$ because then $X=m$
and $U(m;m,\eta)=1$. The statements for $L$ follow from
$L(x;m,\eta)=1-U(m-x;m,\eta)$ applied to $m-X$, which is binomial with success
probability $1-p$.
\end{proof}

The bounds are exact for every number of simulated studies, with no
asymptotic approximation. The second statement adds that the upper bound
falls at or below the true rate with probability at most $\eta$. This property
implies that a version whose Type I error rate equals the target cannot be
validated (Proposition~\ref{prop:slack}).

\subsection{Every scenario must pass}

A validation of $v$ on $\mathcal{S}$ computes one bound $B_\theta$ per
scenario from the simulated studies. The version $v$ passes validation if
$B_\theta\le a$ for every scenario $\theta\in\mathcal{S}$. The bound
$B_\theta$ is valid at level $\eta$ for $v$ if
$\Pr\{e(v,\theta)>B_\theta\}\le\eta$.

\begin{theorem}[Every scenario must pass]
\label{thm:iut}
Let $v$ be a fixed version and $\mathcal{S}$ a finite set of scenarios. For
each $\theta\in\mathcal{S}$, let $B_\theta$ be a bound that is valid at level
$\eta$ for $v$. Then
\[
  \Pr\bigl\{v\text{ passes validation and }e(v,\mathcal{S})>a\bigr\}\le\eta ,
\]
and $\max_{\theta\in\mathcal{S}}B_\theta$ is an upper confidence bound for
$e(v,\mathcal{S})$ at level $\eta$. The bounds may be computed from shared or
from separate simulated studies.
\end{theorem}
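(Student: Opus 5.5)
The argument is an intersection-union bound: the bad event forces at least one specific scenario's bound to fail, so a single $\eta$ suffices with no union over scenarios.

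The key observation is that if $e(v,\mathcal{S})>a$, then some scenario exceeds the target. Since $\mathcal{S}$ is finite and $v$ is fixed before the simulations, the maximizing scenario $\theta^\ast=\arg\max_{\theta\in\mathcal{S}}e(v,\theta)$ is a deterministic element of $\mathcal{S}$. It depends only on $v$ and the true laws, not on the simulated studies; ties are broken by any fixed rule. This is the crucial point and the main thing to state carefully: $\theta^\ast$ must not depend on the random draws, otherwise we would be choosing the worst scenario after seeing the data.

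On the event $\{v \text{ passes}\}\cap\{e(v,\mathcal{S})>a\}$ we have, by the definition of passing,
\[
B_{\theta^\ast}\le a<e(v,\mathcal{S})=e(v,\theta^\ast).
\]
Hence this event is contained in $\{e(v,\theta^\ast)>B_{\theta^\ast}\}$. By validity of $B_{\theta^\ast}$ at level $\eta$, the latter event has probability at most $\eta$. If $e(v,\mathcal{S})\le a$, the event is empty and the bound is trivial.

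For the confidence-bound claim, the event $\{e(v,\mathcal{S})>\max_\theta B_\theta\}$ implies $e(v,\theta^\ast)>\max_\theta B_\theta\ge B_{\theta^\ast}$. It is therefore again contained in $\{e(v,\theta^\ast)>B_{\theta^\ast}\}$, whose probability is at most $\eta$.

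Finally, neither step used any joint distribution of the $B_\theta$ across scenarios. Only the marginal validity of the single bound $B_{\theta^\ast}$ was invoked, so the bounds may come from shared or separate simulated studies.
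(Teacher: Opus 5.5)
Your proof is correct and follows the paper's argument: the paper also fixes $\theta^*$ as the first maximizer of $e(v,\cdot)$ in a fixed enumeration of $\mathcal{S}$, and shows that both the pass-above-target event and the event $e(v,\mathcal{S})>\max_\theta B_\theta$ lie in $\{e(v,\theta^*)>B_{\theta^*}\}$. Your closing remark, that only the marginal validity of $B_{\theta^*}$ is used, is exactly why the bounds may come from shared or separate simulated studies.
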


\begin{proof}
Fix an enumeration of $\mathcal{S}$, and let $\theta^*$ be the first scenario
in it at which $e(v,\cdot)$ is largest. Because $v$ and $\mathcal{S}$ are
fixed, $\theta^*$ is fixed as well. If
$v$ passes validation and $e(v,\mathcal{S})>a$, then $B_{\theta^*}\le a<e(v,\theta^*)$.
If $e(v,\mathcal{S})>\max_\theta B_\theta$, then $e(v,\theta^*)>B_{\theta^*}$.
Both events lie in $\{e(v,\theta^*)>B_{\theta^*}\}$, whose probability is at
most $\eta$.
\end{proof}

A version above the target exceeds it in some particular scenario, and passing
requires the bound of that scenario to come out at or below the target. That
one event has probability at most $\eta$, so the level does not have to
be divided among the scenarios of $\mathcal{S}$. The rule that every scenario
must pass is an intersection-union test, which rejects a union of null
hypotheses, here the hypotheses $e(v,\theta)>a$ for $\theta\in\mathcal{S}$,
only when it rejects each of them. This is also why the two one-sided tests
procedure needs no correction for running two tests
\cite{si:berger1982multiparameter}. Bounds with simultaneous coverage, holding in
all scenarios at once, would require dividing the level among the scenarios.
Such a division replaces $\log(1/\eta)$ by about $\log(|\mathcal{S}|/\eta)$ in
the simulation effort, and valid versions then fail more often with no gain in
validity.

\subsection{A version at the target cannot be validated}

\begin{proposition}[Validation fails at the target]
\label{prop:slack}
Let $a<1$, and let $B_\theta=U(X_\theta;m,\eta)$ be the direct bound computed
from the number $X_\theta$ of rejections of $v$ among $m$ simulated studies of
scenario $\theta$. If $e(v,\theta)\ge a$, then $v$ passes scenario $\theta$ with
probability at most $\eta$, for every $m$.
\end{proposition}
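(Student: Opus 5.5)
The plan is to reduce the statement to the second part of Lemma~\ref{lem:cp}, using monotonicity of the binomial distribution in its success probability. Write $p=e(v,\theta)$. Because the version is frozen and the $m$ simulated studies of scenario $\theta$ are drawn independently, each with fresh internal randomness, $X_\theta$ is binomial with $m$ trials and success probability $p$. The version passes scenario $\theta$ exactly when $U(X_\theta;m,\eta)\le a$. So the goal is to bound $\Pr_p\{U(X_\theta;m,\eta)\le a\}$ by $\eta$ whenever $p\ge a$.

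The first step handles the case $p=a$. Since $a<1$, the second statement of Lemma~\ref{lem:cp} applies at $p=a$ and gives $\Pr_a\{U(X;m,\eta)\le a\}\le\eta$ directly.

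The second step handles $p>a$. I would use the fact, established inside the proof of Lemma~\ref{lem:cp}, that $U(x;m,\eta)$ is nondecreasing in $x$. Hence $\{U(X;m,\eta)\le a\}$ is a lower set $\{X\le x_a\}$, where $x_a$ is the largest $x$ with $U(x;m,\eta)\le a$. If no such $x$ exists, the event is empty and there is nothing to prove. Also $x_a<m$, because $U(m;m,\eta)=1>a$. Next, $\Pr_q\{X\le x_a\}$ is decreasing in $q$, so
\[
\Pr\nolimits_p\{X\le x_a\}\le\Pr\nolimits_a\{X\le x_a\}\le\eta .
\]
The last inequality is the first step. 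Alternatively, the chain $\Pr_p\{X\le x_a\}\le\Pr_{U(x_a;m,\eta)}\{X\le x_a\}=\eta$ from the lemma's proof gives the same conclusion, since $p>a\ge U(x_a;m,\eta)$.

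The only delicate point is the boundary case $p=a$, which needs the non-strict inequality $U\le p$ rather than $U<p$. The second part of Lemma~\ref{lem:cp} was proved precisely for this purpose. Nothing in the argument depends on $m$, so the bound holds for every $m$.
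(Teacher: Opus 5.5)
Your proof is correct and rests, like the paper's, on the non-strict part of Lemma~\ref{lem:cp}. The only difference is the reduction step. The paper applies the lemma at $p=e(v,\theta)$ itself through the inclusion $\{U(X_\theta;m,\eta)\le a\}\subseteq\{U(X_\theta;m,\eta)\le p\}$ and treats $p=1$ separately (there $X_\theta=m$ and $U=1>a$). You apply the lemma at $p=a$ and move to $p>a$ by the monotonicity of $\Pr_q\{X\le x_a\}$ in $q$, which covers $p=1$ automatically. The paper's one-line inclusion makes that monotonicity step unnecessary, but both routes are sound.
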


\begin{proof}
The count $X_\theta$ is binomial with success probability $p=e(v,\theta)$. If
$p<1$, passing requires $U(X_\theta;m,\eta)\le a\le p$, which has probability at most
$\eta$ by Lemma~\ref{lem:cp}. If $p=1$, then $X_\theta=m$ and
$U(X_\theta;m,\eta)=1>a$, so $v$ never passes.
\end{proof}

Simulation can confirm that a Type I error rate lies below the target by some
margin, and never that it equals the target. A version whose Type I
error rate equals the target is valid, yet it fails validation with
probability at least $1-\eta$ however many studies are simulated. The same
holds for the paired and combined bounds of the next subsection
(Theorem~\ref{thm:relative}).

This limit also governs the choice of the reference level $\alpha_0$. The two
one-sided tests procedure at level $a$ has a Type I error rate within a
negligible distance of $a$ at the equivalence boundary when a study is precise
(Theorem~\ref{thm:cont-ref}). A reference analysis run at the validation level
would thus fail validation with probability close to $1-\eta$ in scenarios
where its bound comes from simulation. The paired bound also needs a margin
below the target. Since $U(0;m,\eta)>0$, even a version that copies every decision
of the reference analysis passes the paired bound only if $b_\theta<a$. The
reference analysis therefore runs at a reference level $\alpha_0$
below the validation level $\alpha'=0.0452$.

The margin $\gamma=\alpha'-\alpha_0$ is available to every future version and
does not shrink, because every version is compared with the same reference
analysis and not with its predecessor, so increments do not accumulate across
versions.

\subsection{Paired and direct bounds}
\label{sec:routes}

The validator runs the version $v$ and the reference analysis $v_0$ on the
same $m$ simulated studies of scenario $\theta$, with independent internal
randomness for every run. It counts the studies on which only $v$ rejects,
$N^{+}_\theta$, and those on which only the reference analysis rejects,
$N^{-}_\theta$, together with the rejections of $v$, $X_\theta$. Let
$q^{+}_\theta$ and $q^{-}_\theta$ be the probabilities of an added and of a
removed rejection in scenario $\theta$.
Then
\begin{equation}
  e(v,\theta)=e(v_0,\theta)+q^{+}_\theta-q^{-}_\theta
  \label{eq:paired-identity}
\end{equation}
exactly. Given a reference bound $b_\theta\ge e(v_0,\theta)$ and levels
$\eta_1,\eta_2,\eta_3$, the paired, direct and combined bounds are
\begin{equation}
\begin{aligned}
  B^{\mathrm{pair}}_\theta&=b_\theta+U(N^{+}_\theta;m,\eta_1)-L(N^{-}_\theta;m,\eta_2),\\
  B^{\mathrm{dir}}_\theta&=U(X_\theta;m,\eta_3),\qquad
  B_\theta=\min\bigl(B^{\mathrm{pair}}_\theta,B^{\mathrm{dir}}_\theta\bigr).
\end{aligned}
\label{eq:bounds}
\end{equation}

\begin{theorem}[Paired and direct bounds]
\label{thm:relative}
Fix the version $v$, the scenario $\theta$ and a reference bound
$b_\theta\ge e(v_0,\theta)$, and let the counts come from $m$ independent
simulated studies of $\theta$.
\begin{enumerate}[label=(\roman*),leftmargin=2.2em,itemsep=0pt]
\item $\Pr\{e(v,\theta)>B^{\mathrm{pair}}_\theta\}\le\eta_1+\eta_2$,
  $\Pr\{e(v,\theta)>B^{\mathrm{dir}}_\theta\}\le\eta_3$, and
  $\Pr\{e(v,\theta)>B_\theta\}\le\eta_1+\eta_2+\eta_3$.
\item If $e(v,\theta)<1$, then also
  $\Pr\{B_\theta\le e(v,\theta)\}\le\eta_1+\eta_2+\eta_3$.
\end{enumerate}
With $\eta_1=\eta_2=\eta/4$ and $\eta_3=\eta/2$ the combined bound is valid at
level $\eta$. Suppose that every $b_\theta$ bounds the Type I error rate of
the reference analysis. By the intersection-union argument of
Theorem~\ref{thm:iut}, a version with $e(v,\mathcal{S})>a$ then has
$B_\theta\le a$ in every scenario, and so passes validation, with probability
at most $\eta$.
\end{theorem}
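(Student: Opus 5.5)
The plan is to prove part (i) from the paired identity~\eqref{eq:paired-identity}, the reference bound $b_\theta\ge e(v_0,\theta)$, and the exact coverage of Lemma~\ref{lem:cp}. Part (ii) then comes from the strict forms of the same lemma, and the closing claim from a direct call to Theorem~\ref{thm:iut}.

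\emph{Part (i).} The $m$ simulated studies are independent. On each study the version and the reference analysis run with independent internal randomness, so each study is one of four types, and the counts are binomial: $N^{+}_\theta\sim\mathrm{Bin}(m,q^{+}_\theta)$, $N^{-}_\theta\sim\mathrm{Bin}(m,q^{-}_\theta)$ and $X_\theta\sim\mathrm{Bin}(m,e(v,\theta))$. By Lemma~\ref{lem:cp}, $\Pr\{U(N^{+}_\theta;m,\eta_1)<q^{+}_\theta\}\le\eta_1$ and $\Pr\{L(N^{-}_\theta;m,\eta_2)>q^{-}_\theta\}\le\eta_2$.

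Outside the union of these two events,
\[
B^{\mathrm{pair}}_\theta\ge e(v_0,\theta)+q^{+}_\theta-q^{-}_\theta=e(v,\theta).
\]
Hence $\Pr\{e(v,\theta)>B^{\mathrm{pair}}_\theta\}\le\eta_1+\eta_2$. The direct bound follows immediately from Lemma~\ref{lem:cp} with $p=e(v,\theta)$. For the combined bound, note that $e(v,\theta)>\min(B^{\mathrm{pair}}_\theta,B^{\mathrm{dir}}_\theta)$ implies $e(v,\theta)$ exceeds one of the two bounds. A union bound then gives $\eta_1+\eta_2+\eta_3$.

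\emph{Part (ii).} This uses the strict versions in Lemma~\ref{lem:cp}, and it is the one delicate step, because the strict lemma needs $p<1$ for the upper bound and $p>0$ for the lower bound. If $B^{\mathrm{pair}}_\theta\le e(v,\theta)$, then $U(N^{+}_\theta)-L(N^{-}_\theta)\le q^{+}_\theta-q^{-}_\theta+(e(v_0,\theta)-b_\theta)\le q^{+}_\theta-q^{-}_\theta$. So either $U(N^{+}_\theta)\le q^{+}_\theta$ or $L(N^{-}_\theta)\ge q^{-}_\theta$.

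Consider the boundary cases. If $q^{+}_\theta=1$, then $e(v,\theta)=1$, which is excluded, so the strict bound for $U$ applies. If $q^{-}_\theta=0$, then $N^{-}_\theta=0$ and $L=0$, so the event $L\ge q^{-}_\theta$ holds trivially, and the strict lemma cannot be used. I would handle this case by sharpening the decomposition. With $L=0$, the event $B^{\mathrm{pair}}_\theta\le e(v,\theta)$ becomes $b_\theta+U(N^{+}_\theta)\le e(v_0,\theta)+q^{+}_\theta$, and since $b_\theta\ge e(v_0,\theta)$ this forces $U(N^{+}_\theta)\le q^{+}_\theta$. That event has probability at most $\eta_1$. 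In general I would split on the sign of $q^{-}_\theta$ in exactly this way. The direct bound needs only the strict upper statement with $p=e(v,\theta)<1$. A union bound over the paired and direct parts then gives the total $\eta_1+\eta_2+\eta_3$.

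\emph{Closing claim.} With $\eta_1=\eta_2=\eta/4$ and $\eta_3=\eta/2$, the combined bound is valid at level $\eta$ in every scenario, provided every $b_\theta$ bounds the reference analysis. Theorem~\ref{thm:iut} then applies verbatim: a version with $e(v,\mathcal{S})>a$ passes, meaning $B_\theta\le a$ in every scenario, with probability at most $\eta$.
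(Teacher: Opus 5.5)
Your proof is correct and follows the paper's argument: part (i) uses the paired identity, the two Clopper--Pearson coverage events and a union bound; part (ii) uses the strict form of Lemma~\ref{lem:cp}; and the closing claim follows from Theorem~\ref{thm:iut}. The one difference is in part (ii), where the paper notes that $U-L\le q^{+}_\theta-q^{-}_\theta$ already forces $\{U(N^{+}_\theta;m,\eta_1)\le q^{+}_\theta\}\cup\{L(N^{-}_\theta;m,\eta_2)>q^{-}_\theta\}$. With that split only the strict statement for $U$ is needed (with $q^{+}_\theta\le e(v,\theta)<1$), so the boundary case $q^{-}_\theta=0$ that you treat separately never arises.
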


\begin{proof}
The $m$ simulated studies and the runs on them are independent and identically
distributed, so $N^{+}_\theta$, $N^{-}_\theta$ and $X_\theta$ are binomial with
success probabilities $q^{+}_\theta$, $q^{-}_\theta$ and $e(v,\theta)$. Their
joint law plays no role.

(i) By Lemma~\ref{lem:cp}, the event $U(N^{+}_\theta;m,\eta_1)<q^{+}_\theta$
has probability at most $\eta_1$, and the event
$L(N^{-}_\theta;m,\eta_2)>q^{-}_\theta$ has probability at most $\eta_2$.
Outside these two events, \eqref{eq:paired-identity} and
$b_\theta\ge e(v_0,\theta)$ give
$e(v,\theta)\le b_\theta+q^{+}_\theta-q^{-}_\theta\le B^{\mathrm{pair}}_\theta$.
The statement for the direct bound is Lemma~\ref{lem:cp}, and the combined
statement follows by the union bound.

(ii) If $B^{\mathrm{pair}}_\theta\le e(v,\theta)$, then
$U(N^{+}_\theta;m,\eta_1)-L(N^{-}_\theta;m,\eta_2)\le q^{+}_\theta-q^{-}_\theta$,
so either $U(N^{+}_\theta;m,\eta_1)\le q^{+}_\theta$ or
$L(N^{-}_\theta;m,\eta_2)>q^{-}_\theta$. Since
$q^{+}_\theta\le e(v,\theta)<1$, Lemma~\ref{lem:cp} bounds the two
probabilities by $\eta_1$ and $\eta_2$. The event
$B^{\mathrm{dir}}_\theta\le e(v,\theta)$ has probability at most $\eta_3$ by the
same lemma. The event $B_\theta\le e(v,\theta)$ lies in the union of the events
for the paired and direct bounds.
\end{proof}

The paired bound compares the version with the reference analysis on the same
simulated studies. It therefore needs to resolve only how often the version
adds a rejection that the reference analysis would not make. That rate is zero
for a version that never adds rejections and small for a modest edit, whatever
the Type I error rate of the reference analysis. The direct bound ignores the
reference analysis, and it is the better bound for a version that changes many
of the decisions of the reference analysis. Taking the smaller bound in each
scenario costs a quarter of the level for each count in the paired bound and
half of it for the direct bound. The validator need not know in advance which
bound will be smaller.

With Theorem~\ref{thm:iut}, part (i) bounds the probability of the joint event
that a version passes and lies above the target, so no conditioning on passing
is involved.

\subsection{What validation costs}

Write $\mathrm{kl}(x,y)$ for the relative entropy between Bernoulli
distributions with success probabilities $x$ and $y$. Let $\zeta\in(0,1)$ be an
acceptable probability that a valid version fails validation.

\begin{theorem}[Cost of the two bounds]
\label{thm:cost-rel}
Fix a scenario $\theta$, a target $a\in(0,1)$, a margin $\gamma\in(0,a)$ and a
level $\eta\in(0,1/4)$, and write $\ell=\max\{\log(1/\eta),\log(1/\zeta)\}$.
\begin{enumerate}[label=(\roman*),leftmargin=2.2em,itemsep=0pt]
\item A version that never adds rejections. If $q^{+}_\theta=0$ and
  $b_\theta\le a-\gamma$, then $B^{\mathrm{pair}}_\theta\le a$ with
  probability one as soon as $m\ge\log(1/\eta)/\log\{1/(1-\gamma)\}$, and
  $m\ge\log(1/\eta)/\gamma$ suffices.
\item A version that adds few rejections. If $q^{+}_\theta\le c\gamma$ with
  $0\le c<1$ and $b_\theta\le a-\gamma$, then $B^{\mathrm{pair}}_\theta\le a$
  with probability at least $1-\zeta$ whenever $m\ge8\ell/\{(1-c)^2\gamma\}$.
\item Validation from scratch. If $e(v,\theta)=a-\gamma$, then
  $B^{\mathrm{dir}}_\theta\le a$ with probability at least $1-\zeta$ whenever
  $m\ge8a\ell/\gamma^2$. Conversely, consider a rule that decides from the
  rejections of a version on $m$ simulated studies of $\theta$. Suppose that
  it passes every version with $e(v,\theta)>a$ with probability at most
  $\eta$. Then it passes a version with $e(v,\theta)=a-\gamma$ with probability
  below one half unless $m\ge a(1-a)\log\{1/(4\eta)\}/(2\gamma^2)$.
\end{enumerate}
Here $\eta$ is the level of the bound in question, that is, $\eta_1$ in (i) and
(ii), and $\eta_3$ in (iii).
\end{theorem}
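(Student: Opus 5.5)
The plan is to treat the three parts separately. Each reduces to two tools: the explicit form of the Clopper--Pearson bound, and the Chernoff inequality for the binomial in relative-entropy form. Throughout, I use the facts from Lemma~\ref{lem:cp} and its proof that $U(\cdot;m,\eta)$ is increasing and that $L\ge0$. Hence $B^{\mathrm{pair}}_\theta\le b_\theta+U(N^{+}_\theta;m,\eta_1)$, and under $b_\theta\le a-\gamma$ it suffices in (i) and (ii) to show $U(N^{+}_\theta;m,\eta_1)\le\gamma$.

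For (i), $q^{+}_\theta=0$ forces $N^{+}_\theta=0$ almost surely. Since $U(0;m,\eta)$ solves $(1-q)^m=\eta$, we have $U(0;m,\eta)=1-\eta^{1/m}$. The condition $1-\eta^{1/m}\le\gamma$ rearranges exactly to $m\ge\log(1/\eta)/\log\{1/(1-\gamma)\}$. The weaker sufficient condition follows from $\log\{1/(1-\gamma)\}\ge\gamma$.

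For (ii) and the first half of (iii), I will use two standard inequalities.
\begin{itemize}
\item First, for $x/m<q$, the Chernoff bound $\Pr_q\{X\le x\}\le\exp\{-m\,\mathrm{kl}(x/m,q)\}$ holds. Combined with the characterization of $U$ in the proof of Lemma~\ref{lem:cp}, it shows that $U(x;m,\eta)\le q$ whenever $x/m<q$ and $m\,\mathrm{kl}(x/m,q)\ge\log(1/\eta)$.
\item Second, for $p<q$, the lower bound $\mathrm{kl}(p,q)\ge(q-p)^2/(2q)$ holds. The same bound also controls the upper tail: $\Pr_p\{X/m\ge p+t\}\le\exp\{-m t^2/(2(p+t))\}$.
\end{itemize}

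In (ii), take $t=(1-c)\gamma/2$. With probability at least $1-\zeta$, the empirical rate satisfies $N^{+}_\theta/m<q^{+}_\theta+t\le\gamma-t$. This needs $m t^2/(2\gamma)\ge\log(1/\zeta)$, i.e.\ $m\ge 8\log(1/\zeta)/\{(1-c)^2\gamma\}$, using $q^{+}_\theta+t\le\gamma$. On that event, $\mathrm{kl}(N^{+}_\theta/m,\gamma)\ge t^2/(2\gamma)=(1-c)^2\gamma/8$. Hence $m\ge8\ell/\{(1-c)^2\gamma\}$ also gives $m\,\mathrm{kl}\ge\log(1/\eta)$, so $U\le\gamma$.

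The first half of (iii) is the same computation with $p=a-\gamma$, target $q=a$ and $t=\gamma/2$. Both exponents are then at least $m\gamma^2/(8a)$, giving the threshold $m\ge8a\ell/\gamma^2$.

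The converse in (iii) is the step needing most care. I will use a two-point information argument. Let $P_0=\mathrm{Bin}(m,a-\gamma)$ and $P_\epsilon=\mathrm{Bin}(m,a+\epsilon)$, and let $\pi_0,\pi_\epsilon$ be the pass probabilities of the (possibly randomized) rule. Suppose $\pi_0\ge1/2$ and $\pi_\epsilon\le\eta<1/4$. Applying data processing to the pass indicator gives
\[
\mathrm{kl}(1/2,\eta)\le\mathrm{kl}(\pi_0,\pi_\epsilon)\le m\,\mathrm{kl}(a-\gamma,a+\epsilon),
\]
where the first inequality uses monotonicity of $\mathrm{kl}(x,y)$ in each argument away from the diagonal. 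On the left, $\mathrm{kl}(1/2,\eta)=\tfrac12\log\{1/(4\eta(1-\eta))\}\ge\tfrac12\log\{1/(4\eta)\}$. On the right, letting $\epsilon\downarrow0$ and using continuity together with the chi-square bound $\mathrm{kl}(p,q)\le(p-q)^2/\{q(1-q)\}$ gives at most $m\gamma^2/\{a(1-a)\}$. Rearranging yields $m\ge a(1-a)\log\{1/(4\eta)\}/(2\gamma^2)$.

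I expect the main obstacle to be the bookkeeping of which denominator ($q$ versus $p+t$) appears in each kl lower bound, so that the constant $8$ comes out cleanly. The converse itself is straightforward once the data-processing inequality is invoked on the single pass event.
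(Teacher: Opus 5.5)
Your proposal is correct and follows the paper's proof essentially step for step: the explicit value $U(0;m,\eta)=1-\eta^{1/m}$ for (i), the Chernoff bound in relative-entropy form with $\mathrm{kl}(x,y)\ge(x-y)^2/\{2\max(x,y)\}$ and the Clopper--Pearson characterization for (ii) and the first half of (iii), and data processing on the pass event with the chi-square bound for the converse. The differences are cosmetic: in (ii) you center the tail event at $q^{+}_\theta+t$ rather than at the fixed point $(1+c)\gamma/2$, and in the converse you let $\epsilon\downarrow0$ inside $\mathrm{kl}(a-\gamma,a+\epsilon)$ instead of using continuity of the pass probability in $p$ to obtain $\Pr_a(A)\le\eta$.
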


\begin{proof}
The proof uses three standard facts about binomial tails and relative entropy.
First, for $X$ binomial with $m$ trials and
success probability $y$, Chernoff's bound gives
$\Pr\{X\le mx\}\le\exp\{-m\,\mathrm{kl}(x,y)\}$ for $x\le y$ and
$\Pr\{X\ge mx\}\le\exp\{-m\,\mathrm{kl}(x,y)\}$ for $x\ge y$. Second,
$\mathrm{kl}(x,y)\ge(x-y)^2/\{2\max(x,y)\}$, because the second derivative of
$\mathrm{kl}(\cdot,y)$ at $t$ is $1/\{t(1-t)\}\ge1/\max(x,y)$ for $t$ between
$x$ and $y$. Third, $\mathrm{kl}(x,y)\le(x-y)^2/\{y(1-y)\}$, the chi-square
bound. By the proof of Lemma~\ref{lem:cp}, $U(x;m,\eta)\le g$ holds for $x<m$
exactly when $\Pr_g\{X\le x\}\le\eta$, so the first two facts give
\begin{equation}
  U(x;m,\eta)\le g\quad\text{whenever}\quad x/m\le g\ \text{ and }\
  (g-x/m)^2\ge2g\log(1/\eta)/m .
  \label{eq:ucb-suff}
\end{equation}

(i) Here $N^{+}_\theta=0$ almost surely, $U(0;m,\eta)=1-\eta^{1/m}$ because
$\Pr_q\{X=0\}=(1-q)^m$, and $L\ge0$. Hence
$B^{\mathrm{pair}}_\theta\le a-\gamma+1-\eta^{1/m}$, which is at most $a$ when
$\eta^{1/m}\ge1-\gamma$, that is when $m\log\{1/(1-\gamma)\}\ge\log(1/\eta)$.
Finally, the inequality $\log\{1/(1-\gamma)\}\ge\gamma$ gives the simpler
sufficient condition in (i).

(ii) Let $y=(1+c)\gamma/2$. On the event $N^{+}_\theta\le my$ we have
$\gamma-N^{+}_\theta/m\ge(1-c)\gamma/2$, and \eqref{eq:ucb-suff} with $g=\gamma$
gives $U(N^{+}_\theta;m,\eta)\le\gamma$ once
$m\ge8\log(1/\eta)/\{(1-c)^2\gamma\}$. Then
$B^{\mathrm{pair}}_\theta\le b_\theta+\gamma\le a$. The count
$N^{+}_\theta$ is binomial with success probability at most $c\gamma<y$. The
first two facts therefore give
$\Pr\{N^{+}_\theta>my\}\le\exp\{-m(y-c\gamma)^2/(2y)\}
=\exp\{-m(1-c)^2\gamma/(4(1+c))\}$, which is at most $\zeta$ once
$m\ge8\log(1/\zeta)/\{(1-c)^2\gamma\}$.

(iii) On the event $X_\theta\le m(a-\gamma/2)$, \eqref{eq:ucb-suff} with $g=a$
gives $B^{\mathrm{dir}}_\theta\le a$ once $m\ge8a\log(1/\eta)/\gamma^2$.
Moreover, $\Pr\{X_\theta>m(a-\gamma/2)\}\le\exp\{-m(\gamma/2)^2/(2a)\}$, which
is at most $\zeta$ once $m\ge8a\log(1/\zeta)/\gamma^2$. For the converse, let
$A$ be the event that the rule passes the version. Write $\Pr_p(A)$ for its
probability when the rejections are independent with probability $p$. The
probability $\Pr_p(A)$ is a polynomial in $p$, hence continuous, so
$\Pr_a(A)\le\eta$. The data-processing inequality for
relative entropy gives
\[
  m\,\mathrm{kl}(a-\gamma,a)\ge\mathrm{kl}\bigl(\Pr\nolimits_{a-\gamma}(A),\Pr\nolimits_a(A)\bigr).
\]
Since $\mathrm{kl}(x,y)$ increases in $x$ for $x\ge y$ and decreases in $y$ for
$y\le x$, the right side is at least
$\mathrm{kl}(1/2,\eta)=\tfrac12\log[1/\{4\eta(1-\eta)\}]\ge\tfrac12\log\{1/(4\eta)\}$
whenever $\Pr_{a-\gamma}(A)\ge1/2$. The chi-square bound gives
$\mathrm{kl}(a-\gamma,a)\le\gamma^2/\{a(1-a)\}$, and the two inequalities
together give the stated lower bound on $m$.
\end{proof}

Validation against the reference analysis needs a number of simulated studies
of order $1/\gamma$ for a version that never adds rejections or adds them at a
rate below the margin. Validation from scratch needs order $a/\gamma^2$ for
a version whose Type I error rate lies $\gamma$ below the target, and no rule
that uses only the decisions of the version can do better. The ratio is of
order $a/\gamma$, about 9 at the margin $\gamma=\alpha'-\alpha_0=0.0052$ of
Example 2, and the constants of the exact bounds widen it.
Table~\ref{tab:costs} gives the exact numbers of simulated studies at the
first update. The two numbers differ by a factor of \FoneCostFactorA{} at the
margin of Example 2 and of \FoneCostFactorC{} at a margin of $0.001$. The
combined bound costs no more than the better bound in every scenario.

The saving shrinks as a version changes more decisions of the reference
analysis, and Table~\ref{tab:costs} gives the profile. A version whose
rejections contain those of the reference analysis and add rejections at half
the margin needs \FoneNestedPairedB{} simulated studies against
\FoneNestedScratchB{} from scratch, a factor of \FoneNestedFactorB. A version
that keeps the Type I error rate of the reference analysis but adds and removes
rejections at equal rates gains less, because the paired bound carries the
uncertainty of both counts. Its factor, a Monte Carlo approximation accurate to
about 2 percent, falls from \FoneDiscFactorA{} to \FoneDiscFactorC{} as the
rate of added and of removed rejections rises from $0.002$ to $0.012$ each, and the combined bound
then rests on the direct bound.

\begin{table}[!htbp]
\centering
\caption{Simulated studies per scenario needed at the first update.}
\label{tab:costs}
\begin{adjustbox}{max width=\textwidth}\begin{tabular}{lccc}
\toprule
\multicolumn{4}{l}{A. Versions that add no rejection, by margin $\gamma$} \\
Margin $\gamma$ & $0.0052$ & $0.002$ & $0.001$ \\
\midrule
Against the reference analysis & \FoneCostPairedA & \FoneCostPairedB & \FoneCostPairedC \\
From scratch & \FoneCostScratchA & \FoneCostScratchB & \FoneCostScratchC \\
Factor & \FoneCostFactorA & \FoneCostFactorB & \FoneCostFactorC \\
\midrule
\multicolumn{4}{l}{B. Versions that only add rejections, at rate $c\gamma$} \\
Fraction $c$ & $0.25$ & $0.5$ & $0.75$ \\
\midrule
Against the reference analysis & \FoneNestedPairedA & \FoneNestedPairedB & \FoneNestedPairedC \\
From scratch & \FoneNestedScratchA & \FoneNestedScratchB & \FoneNestedScratchC \\
Factor & \FoneNestedFactorA & \FoneNestedFactorB & \FoneNestedFactorC \\
\midrule
\multicolumn{4}{l}{C. Versions that add and remove rejections, each at rate $q$} \\
Rate $q$ & $0.002$ & $0.006$ & $0.012$ \\
\midrule
Combined bound & \FoneDiscCombinedA & \FoneDiscCombinedB & \FoneDiscCombinedC \\
From scratch & \FoneCostScratchA & \FoneCostScratchA & \FoneCostScratchA \\
Factor & \FoneDiscFactorA & \FoneDiscFactorB & \FoneDiscFactorC \\
\bottomrule
\end{tabular}\end{adjustbox}

\addtabletext{All entries are computed at $\alpha=0.05$, $\delta=0.005$
and $\delta_0=0.001$, so that $\alpha'=0.0452$ and the first share is
$\delta_1=0.0024$. Validation from scratch means the direct bound at the
level $\delta_1/2$ it carries inside the combined bound, and its row gives the
smallest $m$ at which $U(X_\theta;m,\delta_1/2)\le\alpha'$ with probability
0.9. In A the reference analysis and the version both have Type I error rate
$\alpha'-\gamma$, with $b_\theta=\alpha'-\gamma$, and the first row gives
the smallest $m$ with $U(0;m,\delta_1/4)\le\gamma$, at which the version passes
with certainty. In B and C the reference analysis has Type I error rate
$\alpha_0$, with $b_\theta=\alpha_0$ and the margin $\gamma=\alpha'-\alpha_0$
of Example 2. In B the version rejects whenever the reference analysis
does, so its Type I error rate is $\alpha_0+c\gamma$, and the first row gives the smallest $m$ at
which the paired bound passes with probability 0.9. In C the version keeps the
Type I error rate $\alpha_0$, and the first row gives the smallest $m$ at which
the combined bound passes with probability 0.9, estimated from 4{,}000
simulated rounds per $m$ on a grid of one percent. The values of C are
therefore Monte Carlo approximations, accurate to about 2 percent. All numbers
grow with the update index only through $\log(1/\delta_r)$.\par}
\end{table}

\subsection{A reference analysis validated by proof}
\label{sec:analytic}

In the two-period, two-sequence crossover, subject $k$ of sequence
$i\in\{1,2\}$ receives the test formulation T and the reference R in different
periods, with T first in sequence 1. On the log scale its response in period
$j$ is
\begin{equation}
  y_{ikj}=\mu_0+\beta_i+p_j+\tau_{f(i,j)}+u_{ik}+\varepsilon_{ikj},
  \label{eq:crossover}
\end{equation}
with $f(1,1)=f(2,2)=\mathrm{T}$ and $f(1,2)=f(2,1)=\mathrm{R}$. Here the
$\beta_i$ are sequence effects, the $u_{ik}$ are subject effects, and the
errors $\varepsilon_{ikj}$ are independent and Gaussian with mean zero and
variance $\sigma_W^2$. The log-ratio is $\mu=\tau_{\mathrm{T}}-\tau_{\mathrm{R}}$,
and the sequence effect is $\kappa=\beta_1-\beta_2$.

The reference analysis $v_0$ is the two one-sided tests procedure at level
$\alpha_0$ on the completers, the subjects who complete both periods. Let $n_1$
and $n_2$ be the numbers of completers in the two sequences. When dropout
leaves a sequence without completers, or fewer than three completers in all,
the reference analysis retains the null hypothesis, so we may take
$n_1,n_2\ge1$ and $n_1+n_2\ge3$ below. The within-subject differences
$d_{ik}=y_{ik1}-y_{ik2}$ remove $\mu_0$, $\beta_i$ and $u_{ik}$. The estimate
$\hat\mu=(\bar d_1-\bar d_2)/2$ is Gaussian with mean $\mu$ and variance
$\sigma_W^2(n_1^{-1}+n_2^{-1})/2$. With the pooled variance $s_d^2$ of the
differences on $\nu=n_1+n_2-2$ degrees of freedom, the standard error is
$\mathrm{se}=(s_d/2)(n_1^{-1}+n_2^{-1})^{1/2}$. The reference analysis rejects
the null hypothesis of nonequivalence when the interval
$\hat\mu\pm t_{1-\alpha_0,\nu}\,\mathrm{se}$ lies inside $(-\Delta,\Delta)$,
and retains it otherwise.

\begin{theorem}[A reference analysis validated by proof]
\label{thm:cont-ref}
Under model \eqref{eq:crossover}, for every $\mu$ with $|\mu|\ge\Delta$, every
$\sigma_W>0$, all subject, period and sequence effects, and every dropout
mechanism that does not depend on the responses,
\[
  e(v_0,\theta)\le\alpha_0 .
\]
At $\mu=\Delta$ and without dropout, the Type I error rate tends to $\alpha_0$
as $\sigma_W\to0$ with the design held fixed, so the bound cannot be improved.
\end{theorem}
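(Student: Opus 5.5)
The plan is to condition on the dropout pattern and then use the standard exact argument for the two one-sided tests. Because dropout does not depend on the responses, the set of completers, and hence $(n_1,n_2)$, is independent of all responses. So it suffices to show $\Pr_\theta\{\text{reject}\mid n_1,n_2\}\le\alpha_0$ for each fixed pair with $n_1,n_2\ge1$ and $n_1+n_2\ge3$. The remaining pairs retain the null and contribute nothing, and averaging over the dropout law finishes the first claim.

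For fixed completers, the differences $d_{ik}=y_{ik1}-y_{ik2}$ are independent Gaussians. They have common variance $2\sigma_W^2$ and means $\pm\mu+(p_1-p_2)$, plus a common $\beta$-free shift, so $\hat\mu=(\bar d_1-\bar d_2)/2$ is $N\bigl(\mu,\sigma_W^2(n_1^{-1}+n_2^{-1})/2\bigr)$. Here the subject, sequence and period effects cancel: $\kappa$ and $u_{ik}$ vanish in the within-subject difference, and the period effect enters both sequences equally and cancels in $\bar d_1-\bar d_2$. The pooled $s_d^2$ satisfies $\nu s_d^2/(2\sigma_W^2)\sim\chi^2_\nu$ and is independent of $\hat\mu$ by Cochran's theorem. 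Hence $T=(\hat\mu-\mu)/\mathrm{se}$ is exactly $t_\nu$.

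Rejection requires both $\hat\mu-t_{1-\alpha_0,\nu}\mathrm{se}>-\Delta$ and $\hat\mu+t_{1-\alpha_0,\nu}\mathrm{se}<\Delta$. Take $\mu\ge\Delta$ (the case $\mu\le-\Delta$ is symmetric). Rejection then implies
\[
(\hat\mu-\mu)/\mathrm{se}<(\Delta-\mu)/\mathrm{se}-t_{1-\alpha_0,\nu}\le -t_{1-\alpha_0,\nu},
\]
whose probability is exactly $\alpha_0$. This is the intersection–union step of Theorem~\ref{thm:iut} in test form. Note that the case needs $\nu\ge1$, which $n_1+n_2\ge3$ guarantees.

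For sharpness, fix the design with no dropout and set $\mu=\Delta$. The rejection probability is
\[
\Pr\{T<-t_{1-\alpha_0,\nu},\ \hat\mu+t_{1-\alpha_0,\nu}\mathrm{se}<\Delta\}.
\]
The second event is $\{\Delta+\mathrm{se}(T+t_{1-\alpha_0,\nu})<\Delta\}$, which on $\{T<-t\}$ holds automatically. So the rejection event actually equals the event that $\hat\mu-t\,\mathrm{se}>-\Delta$ and $T<-t$. Written as $\mathrm{se}(T-t)>-2\Delta$, this requires $\mathrm{se}\,(t-T)<2\Delta$. As $\sigma_W\to0$, $\mathrm{se}=\sigma_W\cdot R$ with $R$ a fixed-law random variable, so $\mathrm{se}(t-T)\to0$ in probability. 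The probability of the intersection therefore tends to $\Pr\{T<-t_{1-\alpha_0,\nu}\}=\alpha_0$.

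The main obstacle is the bookkeeping rather than the inequality itself. First, one must verify that the period and sequence effects truly cancel for arbitrary $(n_1,n_2)$ after dropout. Second, one must confirm that response-independent dropout makes the conditional law exact. This includes dropout mechanisms that may depend on covariates but not on $y$, so the independence must be stated as independence of the completer set from $(u,\varepsilon)$.
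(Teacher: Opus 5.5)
Your proposal is correct and follows essentially the paper's proof. You condition on the response-independent completer sets, use the exact $t_\nu$ pivot $T_\mu$, and bound the rejection event by $\{T_\mu<-t_{1-\alpha_0,\nu}\}$ when $\mu\ge\Delta$. You get sharpness from the rejection event $\{t_{1-\alpha_0,\nu}-2\Delta/\mathrm{se}<T_\mu<-t_{1-\alpha_0,\nu}\}$ at $\mu=\Delta$ with $\mathrm{se}\to0$ in probability, and you also spell out the cancellation of the nuisance effects and the Cochran step that the paper takes for granted.

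One small slip: the first display of your sharpness paragraph should intersect $\{T<-t_{1-\alpha_0,\nu}\}$ with the lower condition $\{\hat\mu-t_{1-\alpha_0,\nu}\,\mathrm{se}>-\Delta\}$. At $\mu=\Delta$ the condition you wrote coincides with $\{T<-t_{1-\alpha_0,\nu}\}$ and would give exactly $\alpha_0$, though your next sentence recovers the correct event.
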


\begin{proof}
Condition on the sets of completers, which are independent of the responses.
Given them, $T_\mu=(\hat\mu-\mu)/\mathrm{se}$ has the $t$ distribution with
$\nu$ degrees of freedom. For $\mu\ge\Delta$ the rejection event lies in
\[
  \{\hat\mu+t_{1-\alpha_0,\nu}\,\mathrm{se}<\Delta\}
  =\{T_\mu<-t_{1-\alpha_0,\nu}+(\Delta-\mu)/\mathrm{se}\}
  \subseteq\{T_\mu<-t_{1-\alpha_0,\nu}\},
\]
because $\Delta-\mu\le0$, and the last event has probability $\alpha_0$. The
case $\mu\le-\Delta$ is symmetric, and averaging over the sets of completers
gives the bound. At $\mu=\Delta$ the rejection event is
$\{t_{1-\alpha_0,\nu}-2\Delta/\mathrm{se}<T_\mu<-t_{1-\alpha_0,\nu}\}$. As
$\sigma_W\to0$ with the design fixed and no dropout, the ratio of
$\mathrm{se}$ to the standard deviation of $\hat\mu$ keeps its distribution and
$2\Delta/\mathrm{se}\to\infty$ in probability, so the
Type I error rate tends to $\Pr\{T_\mu<-t_{1-\alpha_0,\nu}\}=\alpha_0$. With
dropout the same argument gives the limit
$\alpha_0\Pr\{n_1,n_2\ge1,\ n_1+n_2\ge3\}$.
\end{proof}

The two one-sided tests procedure holds its level exactly, a classical fact
\cite{si:schuirmann1987comparison}. The theorem records this fact for every design
and dropout pattern that the scenarios use. The Gaussian null scenarios are the
null scenarios of model \eqref{eq:crossover} with dropout that does not depend
on the responses. On these scenarios the reference bound is
$b_\theta=\alpha_0$ with no simulation, and no part of the lifetime risk is
spent on the reference analysis. By contrast, on scenarios built from real
studies, whose errors are not Gaussian, the reference analysis receives a
one-time validation (Corollary~\ref{cor:lifetime-ref}).

The reference level costs some power. At a true ratio of $0.95$, the two
one-sided tests at a reference level of 0.040 declare between \FoneReserveLossLow{} and
\FoneReserveLossHigh{} percentage points fewer equivalent products
bioequivalent than at $0.05$, in four crossovers of 24 to 48 subjects with
\FoneReservePowerLow{} to \FoneReservePowerHigh{} percent power at $0.05$.
Updated versions that pass validation can recover part of this power, for
example by running the two one-sided tests at a level just below the
validation level.

The proved bound carries over to every version whose rejections are contained
in those of the reference analysis (Corollary~\ref{cor:domination}). Examples
include a version that defers studies with high variability and a version that
adds a check that can only veto a rejection. The bound does not carry over to
versions that use the data to change which observations enter the analysis. A
version that excludes outlying subjects, or recovers information from subjects
observed in one period only, changes the analysis set. The pivot $T_\mu$ then
loses its $t$ distribution, so such versions are validated on scenarios.

\FloatBarrier

\section{Lifetime validity}
\label{sec:lifetime}

This section proves the guarantees of the main text in general form. Two
devices chain the validations of single versions over the life of the analyst.
Each round is judged by simulated studies drawn after the version of that round
is fixed, and each round spends its own share of the lifetime risk $\delta$.
Theorem~\ref{thm:lifetime} bounds the lifetime risk,
Corollary~\ref{cor:every-decision} gives every decision the level $\alpha$, and
Proposition~\ref{prop:power} shows that valid updates pass. The last three
subsections obtain the reference bounds, extend the scenario set and add random
scenarios.

\subsection{Lifetime risk control}

At round $r$ the validator receives update $r$. Before any simulated study
of the round exists, the validator fixes the updated version $v_r$ and the
scenario set $\mathcal{S}_r\supseteq\mathcal{S}_0$, together with the
reference bounds $b_\theta$ for $\theta\in\mathcal{S}_r$. It also fixes the
number $m_r$ of simulated studies per scenario and the share $\delta_r$. We
call this step the freeze of round $r$, and we write $\mathcal{G}_r$ for the
information available at the freeze. It contains everything that has happened
before, including all earlier updates, deployment decisions, bounds and simulated studies,
as well as the validation of the reference analysis. It also contains the real
studies decided so far and the randomness the proposer used to produce $v_r$.

The validator then draws the simulated studies of the round and computes the
combined bounds \eqref{eq:bounds} with $\eta_1=\eta_2=\delta_r/4$ and
$\eta_3=\delta_r/2$. Write $B_{r,\theta}$ for the combined bound of scenario
$\theta$ in round $r$. The updated version $v_r$ passes validation if
$B_{r,\theta}\le\alpha'$ for every $\theta\in\mathcal{S}_r$, and only a version
that passes can be deployed. Let $E_0$ be the event that
$b_\theta\ge e(v_0,\theta)$ for every scenario the validator ever uses.

\begin{theorem}[Lifetime risk control]
\label{thm:lifetime}
Suppose that
\begin{enumerate}[label=(\alph*),leftmargin=2.2em,itemsep=0pt]
\item in every round $r$, the updated version $v_r$, the scenario set $\mathcal{S}_r\supseteq\mathcal{S}_0$
  with its reference bounds, the number $m_r$ and the share $\delta_r$ are fixed
  at the freeze, while the proposer is otherwise unrestricted and may be
  adaptive, randomized or adversarial,
\item in every round $r$, given $\mathcal{G}_r$, the simulated studies are
  independent draws from the scenarios of $\mathcal{S}_r$, and the internal
  randomness of the runs on them is independent of $\mathcal{G}_r$, and
\item $\Pr(E_0^{c})\le\delta_0$, and $\delta_0+\sum_{r\ge1}\delta_r\le\delta$
  holds with probability one.
\end{enumerate}
Then
\begin{equation}
  \Pr\bigl\{\exists\,r\ge1:\ v_r\text{ passes validation and }
  e(v_r,\mathcal{S}_r)>\alpha'\bigr\}\le\delta .
  \label{eq:lifetime-bound}
\end{equation}
Moreover, the probability that some reference bound fails or some $v_r$
passes validation with $e(v_r,\mathcal{S}_r)>\alpha'$ is at most $\delta$.
\end{theorem}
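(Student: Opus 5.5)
The plan is a union bound over rounds, where each round's term is controlled by the single-version results of Section~\ref{sec:certifying} applied conditionally on the information at the freeze. Let $A_r$ be the event that $v_r$ passes validation with $e(v_r,\mathcal{S}_r)>\alpha'$. I will bound the larger event $E_0^{c}\cup\bigcup_{r\ge1}A_r$, which gives both \eqref{eq:lifetime-bound} and the ``moreover'' statement. The decomposition is
\[
  E_0^{c}\cup\bigcup_{r}A_r\subseteq E_0^{c}\cup\bigcup_r (A_r\cap E_0),
\]
so it suffices to show $\Pr(A_r\cap E_0)\le \E[\delta_r]$ for each $r$. Summing then gives $\delta_0+\E[\sum_r\delta_r]\le\delta$ by (c), using monotone convergence to exchange sum and expectation.

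The key step is a conditional version of Theorem~\ref{thm:relative} combined with Theorem~\ref{thm:iut}. By (a), the quantities $v_r$, $\mathcal{S}_r$, $b_\theta$, $m_r$ and $\delta_r$ are all $\mathcal{G}_r$-measurable. Conditionally on $\mathcal{G}_r$, they are therefore fixed objects, and $e(v_r,\theta)$ is a deterministic number. It is the rejection probability over a fresh study and fresh internal randomness, which is a property of the version alone. By (b), the $m_r$ simulated studies per scenario and the runs on them are i.i.d. given $\mathcal{G}_r$. Hence $N^{+}_\theta$, $N^{-}_\theta$ and $X_\theta$ are conditionally binomial with parameters $q^{+}_\theta$, $q^{-}_\theta$ and $e(v_r,\theta)$.

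The event $E_0$ is the delicate point, because it concerns bounds fixed possibly before round $r$ and is not $\mathcal{G}_r$-measurable if scenarios are added later. I will handle it as follows. On $E_0$, every $b_\theta$ with $\theta\in\mathcal{S}_r$ satisfies $b_\theta\ge e(v_0,\theta)$. Let $F_r$ be the $\mathcal{G}_r$-measurable event that this holds for all $\theta\in\mathcal{S}_r$; then $E_0\subseteq F_r$. On $F_r$, Theorem~\ref{thm:relative}(i) applies conditionally with $\eta_1=\eta_2=\delta_r/4$ and $\eta_3=\delta_r/2$. Together with the intersection-union argument of Theorem~\ref{thm:iut}, this gives
\[
  \Pr(A_r\mid\mathcal{G}_r)\mathbf 1_{F_r}\le\delta_r .
\]
Taking expectations yields $\Pr(A_r\cap E_0)\le\Pr(A_r\cap F_r)\le\E[\delta_r]$.

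The main obstacle I expect is rigor in this conditioning step, where $v_r$ is itself random. The conditional law of the counts given $\mathcal{G}_r$ must be exactly the binomial law for the frozen version. This is precisely what (a) and (b) encode: there is no leakage of the round-$r$ draws into $v_r$, and independent seeds. The remaining work is bookkeeping: $F_r$ is determined at the freeze, $\delta_r$ may be random but is $\mathcal{G}_r$-measurable, and for Algorithm~\ref{alg:validation} the identity $\delta_0+\sum_r 6(\delta-\delta_0)/(\pi^2r^2)=\delta$ verifies (c). No division of $\delta_r$ across scenarios is needed, because Theorem~\ref{thm:iut} localizes the failure to the worst scenario $\theta^*$, which is $\mathcal{G}_r$-measurable.
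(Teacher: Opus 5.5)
Your proposal is correct and follows the paper's proof: a union bound over rounds, where each round's term is bounded conditionally on $\mathcal{G}_r$ by localizing to the worst scenario $\theta^*_r$ and applying Lemma~\ref{lem:cp} to the three binomial counts. The only difference is how the reference bounds are handled. You intersect with the $\mathcal{G}_r$-measurable event $F_r\supseteq E_0$. The paper instead shows $A_r\cap E_0\subseteq M_r$, where $M_r$ is the event that one of the three Clopper--Pearson bounds of $\theta^*_r$ misses its parameter; since $M_r$ does not involve the $b_\theta$, the measurability of $E_0$ never comes up.
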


\begin{proof}
Let $A_r$ be the event that $v_r$ passes validation and
$e(v_r,\mathcal{S}_r)>\alpha'$.
Let $\theta^*_r$ be the first scenario, in a fixed enumeration of
$\mathcal{S}_r$, at which $e(v_r,\cdot)$ is largest. Versions are finite
specifications and take values in a countable set. Hence, by (a), the updated
version $v_r$, its Type I error rates $e(v_r,\theta)$, the scenario
$\theta^*_r$ and the event $e(v_r,\theta^*_r)>\alpha'$ are determined at the
freeze. On $A_r\cap E_0$ the version $v_r$ has passed validation, so the bound
of scenario $\theta^*_r$ satisfies
$B_{r,\theta^*_r}\le\alpha'<e(v_r,\theta^*_r)$, while
$b_{\theta^*_r}\ge e(v_0,\theta^*_r)$. The proof of
Theorem~\ref{thm:relative}(i) shows that one of the three binomial bounds of
scenario $\theta^*_r$ must then miss its parameter, and we call this event
$M_r$. Given $\mathcal{G}_r$, the updated version, the scenario $\theta^*_r$,
the number $m_r$ and the share $\delta_r$ are fixed. By (b), the studies of the
round in $\theta^*_r$ are $m_r$ independent draws from that scenario, run
through both procedures with internal randomness independent of $\mathcal{G}_r$. The three counts are then
binomial with the parameters of Theorem~\ref{thm:relative}, and
Lemma~\ref{lem:cp} gives
$\Pr(M_r\mid\mathcal{G}_r)\le\delta_r/4+\delta_r/4+\delta_r/2=\delta_r$.
Hence
\[
  \Pr\Bigl(E_0^{c}\cup\bigcup_{r\ge1}A_r\Bigr)
  \le\Pr(E_0^{c})+\sum_{r\ge1}\E\bigl[\Pr(M_r\mid\mathcal{G}_r)\bigr]
  \le\delta_0+\E\Bigl[\sum_{r\ge1}\delta_r\Bigr]\le\delta ,
\]
which is the second statement and contains \eqref{eq:lifetime-bound}. Rounds
that never take place contribute nothing.
\end{proof}

The theorem controls a familywise error rate over all updates, the probability
that validation ever passes an update above the validation level in a scenario
of its round. It covers every deployed update and also versions that pass
without being deployed, such as the any-path rule of a set of allowed paths, from
which section~\ref{sec:modifications} carries validity to deployed versions. It
assumes nothing about the proposer, which may be a language model that reads
every deployment decision, an optimizer that maximizes power, or an adversary. Within a
round the argument is the intersection-union argument of
Theorem~\ref{thm:iut}, and across rounds it is a union bound, as in alpha
spending \cite{si:foster2008alpha,si:tian2021online}.

Updates are counted by attempt, so update $r$ spends the share $\delta_r$
whether or not it passes. If shares were indexed by deployed versions, a
proposer could resubmit after each failure at an unchanged share, and the
familywise error rate would grow as in Lemma~\ref{lem:flat}. The shares may be
chosen adaptively, as long as each is fixed at its freeze and their total never
exceeds $\delta-\delta_0$. The protocol uses
\begin{equation}
  \delta_r=(\delta-\delta_0)\,\frac{6}{\pi^2r^2},\qquad
  \sum_{r\ge1}\delta_r=\delta-\delta_0 ,
  \label{eq:spending}
\end{equation}
and any other summable schedule works as well. The shrinking share costs little,
because the number of simulated studies that an update needs grows only with
$\log(1/\delta_r)$, which is $2\log r$ plus a constant. A version that never adds
rejections thus needs \FonePairedLast{} simulated studies per scenario at the
thousandth update, against \FonePairedFirst{} at the first
(Theorem~\ref{thm:cost-rel}(i) and Fig.~\ref{fig:si-concept}\textit{C}). The scenario set
of a round may be any set fixed at its freeze that contains $\mathcal{S}_0$, so
every version that passes is tested on the initial scenario set, and the set
grows with the analyst (section~\ref{sec:extension}).

\subsection{Every decision the analyst makes}

Studies are decided while the analyst evolves, and we write $\mathcal{H}_s$ for
the information available when the data of study $s$ arrive. It includes every
round completed by then with its simulated studies, all earlier studies with
their data and decisions, and the true scenario $\theta_s$. The version in force
for study $s$, written $v_{(s)}$, is determined by $\mathcal{H}_s$.

For a version $v$, $\mathcal{S}(v)$ denotes the scenario set on which it
passed validation or whose validity it inherits. The reference analysis has
$\mathcal{S}(v_0)=\mathcal{S}_0$, and so does the version that defers every
study. A version deployed without validation inherits the scenario set named
in section~\ref{sec:kinds}.

\begin{corollary}[Type I error control for every decision]
\label{cor:every-decision}
Let the validator run the protocol of Theorem~\ref{thm:lifetime} with
validation level $\alpha'=(\alpha-\delta)/(1-\delta)$. Let the analyst start
with the reference analysis, or with a version that rejects only where the
reference analysis rejects, if $\max_{\theta\in\mathcal{S}_0}b_\theta\le\alpha'$,
and otherwise with the version that defers every study. Consider a study $s$
such that
\begin{enumerate}[label=(\roman*),leftmargin=2.2em,itemsep=0pt]
\item whatever the history of updates, its true scenario is one on which the
  version in force was validated or whose validity that version inherits,
  that is, $\theta_s\in\mathcal{S}(v_{(s)})$ almost surely, where $\theta_s$
  may depend on everything that happened before, and
\item given $\mathcal{H}_s$, its data are a draw from scenario $\theta_s$, as
  Assumption~\ref{ass:faithful} provides for a real study in that state, and
  the internal randomness of the analyst on it is independent of $\mathcal{H}_s$, so that the version
  deciding the study was fixed before the data of the study were seen.
\end{enumerate}
Then
\begin{equation}
  \Pr\{Y_s=\text{reject}\}\le\alpha'+(1-\alpha')\,\delta=\alpha .
  \label{eq:every-decision}
\end{equation}
Under (ii) alone,
$\Pr\{Y_s=\text{reject},\ \theta_s\in\mathcal{S}(v_{(s)})\}\le
\alpha'\Pr\{\theta_s\in\mathcal{S}(v_{(s)})\}+(1-\alpha')\,\delta$.
\end{corollary}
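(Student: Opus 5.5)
The argument splits on a single bad event. Let $\mathcal{E}$ be the event of Theorem~\ref{thm:lifetime}, namely that every reference bound holds and no updated version passes validation with $e(v_r,\mathcal{S}_r)>\alpha'$. Its complement has $\Pr(\mathcal{E}^c)\le\delta$ by the second statement of that theorem. The plan is to show that on $\mathcal{E}$, the version in force for study $s$ has Type I error rate at most $\alpha'$ on $\mathcal{S}(v_{(s)})$, and then to condition on $\mathcal{H}_s$.

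\textbf{Step 1: on $\mathcal{E}$, every version in force is valid on its scenario set.} This is checked by kind of version.
\begin{enumerate}[label=(\roman*),leftmargin=2.2em,itemsep=0pt]
\item A version deployed after passing in round $r$ has $\mathcal{S}(v)=\mathcal{S}_r$, and on $\mathcal{E}$ it satisfies $e(v,\mathcal{S}_r)\le\alpha'$.
\item The initial reference analysis is deployed only if $\max_{\mathcal{S}_0}b_\theta\le\alpha'$. On $E_0\supseteq\mathcal{E}$ this gives $e(v_0,\theta)\le b_\theta\le\alpha'$.
\item A version that rejects only where $v_0$ rejects has error rate at most that of $v_0$, by inclusion of rejection events.
\item The always-defer version has error rate $0$.
\item Inherited versions obey the same bound; this is the content of section~\ref{sec:kinds}, which I take as given.
\end{enumerate}
So on $\mathcal{E}$ the $\mathcal{H}_s$-measurable quantity $e(v_{(s)},\theta_s)$ is at most $\alpha'$ whenever $\theta_s\in\mathcal{S}(v_{(s)})$.

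\textbf{Step 2: condition on $\mathcal{H}_s$.} By (ii), given $\mathcal{H}_s$ the data are a fresh draw from $\theta_s$ and the internal randomness is independent. Hence $\Pr\{Y_s=\text{reject}\mid\mathcal{H}_s\}=e(v_{(s)},\theta_s)$; this uses countability of versions, as in the proof of Theorem~\ref{thm:lifetime}, to make the substitution measurable.

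The difficulty is that $\mathcal{E}$ is not $\mathcal{H}_s$-measurable, since it involves future rounds. I would therefore replace $\mathcal{E}$ by the event $\mathcal{E}_s\in\mathcal{H}_s$ that no round completed before study $s$ passes an invalid version and that the reference bounds of scenarios already in use hold. Then $\mathcal{E}\subseteq\mathcal{E}_s$, so $\Pr(\mathcal{E}_s^c)\le\delta$, and Step 1 holds on $\mathcal{E}_s$.

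\textbf{Step 3: combine.} Let $C=\{\theta_s\in\mathcal{S}(v_{(s)})\}$. Then
\[
\Pr\{Y_s=\text{reject},\,C\}=\E\bigl[\mathbf{1}_C\,e(v_{(s)},\theta_s)\bigr]\le\alpha'\Pr(C\cap\mathcal{E}_s)+\Pr(C\cap\mathcal{E}_s^c).
\]
Writing $\Pr(C\cap\mathcal{E}_s)=\Pr(C)-\Pr(C\cap\mathcal{E}_s^c)$, the right side becomes
\[
\alpha'\Pr(C)+(1-\alpha')\Pr(C\cap\mathcal{E}_s^c)\le\alpha'\Pr(C)+(1-\alpha')\delta,
\]
which is the statement under (ii) alone. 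Under (i), $\Pr(C)=1$, and the bound becomes $\alpha'+(1-\alpha')\delta$. The identity $\alpha'=(\alpha-\delta)/(1-\delta)$ gives $\alpha'+(1-\alpha')\delta=\alpha$.

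The main obstacle is the measurability and timing issue in Step 2. The bad event must be restricted to rounds completed before study $s$ arrives, while still being dominated by the global event of Theorem~\ref{thm:lifetime}. The reference bounds also need care, since they cover scenarios possibly added later, but only those already in use matter.
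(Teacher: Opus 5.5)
Your proof is correct and follows the paper's argument. It uses the same event $\mathcal{E}$ from the second statement of Theorem~\ref{thm:lifetime}, the same check that every version in force is valid at level $\alpha'$ on its scenario set on that event, and the same identity $\Pr\{Y_s=\text{reject}\mid\mathcal{H}_s\}=e(v_{(s)},\theta_s)$, followed by the split into $\alpha'\Pr(C)+(1-\alpha')\Pr(C\cap\mathcal{E}^{c})$. The one difference is your localized event $\mathcal{E}_s$, which is valid but unnecessary: the inequality $\mathbf{1}_C\,e(v_{(s)},\theta_s)\le\alpha'\mathbf{1}_{C\cap\mathcal{E}}+\mathbf{1}_{C\cap\mathcal{E}^{c}}$ holds pointwise, so taking expectations needs no $\mathcal{H}_s$-measurability of $\mathcal{E}$, and the paper uses the global event directly.
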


\begin{proof}
Let $\mathcal{E}$ be the event that every reference bound holds and no $v_r$
passes validation with $e(v_r,\mathcal{S}_r)>\alpha'$, the complement of the
event $E_0^{c}\cup\bigcup_rA_r$ of the proof of Theorem~\ref{thm:lifetime}. The
second statement of Theorem~\ref{thm:lifetime} gives
$\Pr(\mathcal{E}^{c})\le\delta$. On
$\mathcal{E}$ every version that passes validation in a round $r$ is valid at
level $\alpha'$ on $\mathcal{S}_r$ by the definition of $A_r$, whether or not
it is deployed. These versions include deployed updates, any-path rules of
allowed paths, members of validated families and updates held back by further
conditions for deployment. On $\mathcal{E}$ the starting version is valid at
level $\alpha'$ on $\mathcal{S}_0$ as well. The reference analysis has
$e(v_0,\theta)\le b_\theta\le\alpha'$ there, a version that rejects only where
it rejects has a Type I error rate at most that of the reference analysis
(Corollary~\ref{cor:domination}), and the version that defers every study
never rejects. Versions deployed without validation inherit this validity on
their scenario sets (section~\ref{sec:kinds}), so on $\mathcal{E}$ every
version that is ever in force is valid at level $\alpha'$ on its scenario set.
Write $C_s$ for the event $\theta_s\in\mathcal{S}(v_{(s)})$, which is
determined by $\mathcal{H}_s$. By (ii),
$\Pr\{Y_s=\text{reject}\mid\mathcal{H}_s\}=e(v_{(s)},\theta_s)$, which is at most
$\alpha'$ on $\mathcal{E}\cap C_s$ and at most one elsewhere, so
\[
  \Pr\{Y_s=\text{reject},\ C_s\}=\E\bigl[e(v_{(s)},\theta_s)\,1_{C_s}\bigr]
  \le\alpha'\Pr(C_s)+(1-\alpha')\Pr(C_s\cap\mathcal{E}^{c})
  \le\alpha'\Pr(C_s)+(1-\alpha')\,\delta .
\]
Under (i), $\Pr(C_s)=1$ and this is \eqref{eq:every-decision}, since
$\alpha'(1-\delta)+\delta=\alpha$ for $\alpha'=(\alpha-\delta)/(1-\delta)$.
\end{proof}

Suppose that condition (ii) holds for every study, that a version of row 4 of
Table~\ref{tab:modifications} is deployed only while every reference bound on
$\mathcal{S}_0$ is at most $\alpha'$, and that every routing family of row 5
is validated on $\mathcal{S}_0$. Then $\mathcal{S}(v)\supseteq\mathcal{S}_0$
for every version in force, and the analyst is lifetime valid at level $\alpha$
on $\mathcal{S}_0$ in the sense of Definition~\ref{def:lifetime}.
Section~\ref{sec:modifications} shows that versions deployed without new
simulation keep the corollary.

The bound of Corollary~\ref{cor:every-decision} is a probability over the whole
history of the analyst. Given the history up to its data, a decision on a study whose
true scenario lies in the scenario set of the deciding version is a Type I
error with probability at most $\alpha'$, except on the event
$\mathcal{E}^{c}$, whose probability is at most $\delta$ by
Theorem~\ref{thm:lifetime}.

By condition (ii), an update prompted by a study applies to later studies
only, and this condition cannot be dropped. Suppose that, after seeing the
data $D_s$ of study $s$, a proposer submits the version $\tilde v$ that rejects on $D_s$ and
otherwise copies the reference analysis. When $\Pr_\theta$ has no atoms,
$e(\tilde v,\theta)=e(v_0,\theta)$, so $\tilde v$ passes validation. Yet if
$\tilde v$ decided study $s$, it would reject with probability one. The
protocol therefore deploys a version that passes validation only for studies
whose data arrive later.

The guarantee is continuous in total variation in the law of the data. For
every version and every law $P$ of the data, the probability of a rejection
under $P$ differs from $e(v,\theta)$ by at most the total variation distance
between $P$ and $\Pr_\theta$, because the rejection probability given the data
lies in $[0,1]$. A scenario whose law is close to that of real studies in total
variation therefore carries the guarantee to them up to that distance. Whole
studies are rarely that close, since a small change in a variance or a
different sample size already moves the law of a study far in total
variation. Between and beyond the fixed scenarios, the guarantee rests on
random scenarios (Proposition~\ref{prop:random}). Without them, a version can
exceed the validation level off the fixed scenarios
(Proposition~\ref{prop:grid} and the example of section~\ref{sec:random}).

\begin{corollary}[Long-run frequency of false rejections]
\label{cor:frequency}
Suppose the protocol of Corollary~\ref{cor:every-decision} runs, that its
condition (ii) holds for every study, and that each decision is made before the
data of the next study arrive. Let $s_1<s_2<\cdots$ be the studies whose true scenario lies in the
scenario set of the version in force. Whether a study is of this kind is
determined before its data arrive. Let $F_k$ be the number of rejections among
the first $k$ of these studies. Then, for every $k$ and every $h>0$,
\begin{enumerate}[label=(\alph*),leftmargin=2.2em,itemsep=0pt]
\item $\E F_k\le k\alpha$,
\item $\Pr\{F_k>k(\alpha'+h)\}\le\delta+\exp(-2kh^2)$, and
\item with probability at least $1-\delta$, $\limsup_{k\to\infty}F_k/k\le\alpha'$
  whenever infinitely many such studies occur.
\end{enumerate}
\end{corollary}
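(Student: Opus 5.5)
The plan is to reduce everything to the conditional rejection probabilities of the qualifying studies, the quantity Corollary~\ref{cor:every-decision} already controls. Write $p_j=\Pr\{Y_{s_j}=\text{reject}\mid\mathcal{H}_{s_j}\}=e(v_{(s_j)},\theta_{s_j})$, which condition (ii) justifies. Since whether a study qualifies is determined before its data arrive, the index $s_j$ is a stopping time for the filtration $(\mathcal{H}_s)$. On the event $\mathcal{E}$ from the proof of Corollary~\ref{cor:every-decision} (every reference bound holds and no invalid update passes), every version in force is valid at level $\alpha'$ on its scenario set, so $p_j\le\alpha'$ for all $j$; off $\mathcal{E}$ we only have $p_j\le 1$. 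Theorem~\ref{thm:lifetime} gives $\Pr(\mathcal{E}^c)\le\delta$.

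Part (a) is linearity of expectation. Each of the first $k$ qualifying studies satisfies condition (i) of Corollary~\ref{cor:every-decision} on the event that it exists. Applying that corollary's second statement to each $s_j$ gives a rejection probability of at most $\alpha'+(1-\alpha')\delta=\alpha$, and summing over $j\le k$ yields $\E F_k\le k\alpha$. If fewer than $k$ qualifying studies occur, the missing terms contribute zero.

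For (b), set $\Psi_k=F_k-\sum_{j\le k}p_j$. Because each decision is made before the next study's data arrive, $\mathbf 1\{Y_{s_j}=\text{reject}\}-p_j$ is a martingale difference sequence with respect to $\mathcal{H}_{s_{j+1}}$, and each increment lies in an interval of length one. Azuma--Hoeffding then gives $\Pr\{\Psi_k>kh\}\le\exp(-2kh^2)$. On $\mathcal{E}$ we have $\sum_{j\le k}p_j\le k\alpha'$, so
\[
\{F_k>k(\alpha'+h)\}\subseteq\mathcal{E}^c\cup\{\Psi_k>kh\},
\]
and a union bound completes (b).

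For (c), apply the Azuma bound with $h_k=\sqrt{\log k/k}$, which makes the tail probabilities $k^{-2}$ and hence summable. Borel--Cantelli then gives $\limsup_k \Psi_k/k\le0$ almost surely on the event that infinitely many qualifying studies occur. Alternatively one can use the martingale strong law, since the increments are bounded. On $\mathcal{E}$ we have $F_k/k\le\alpha'+\Psi_k/k$, so $\limsup F_k/k\le\alpha'$ on $\mathcal{E}$ outside a null set, which is probability at least $1-\delta$.

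The main obstacle is the measurability bookkeeping. One has to check that $\mathcal{E}$ need not be adapted: the argument uses it only through the pointwise bound $p_j\le\alpha'$, so no conditioning on $\mathcal{E}$ is required. One also has to check that the random times $s_j$, together with later updates made before $s_{j+1}$, do not break the martingale property. That property holds because $\mathcal{H}_{s_j}\subseteq\mathcal{H}_{s_{j+1}}$ and $Y_{s_j}$ is $\mathcal{H}_{s_{j+1}}$-measurable. To handle the possibility that fewer than $k$ qualifying studies occur, I would pad the sequence with zero increments.
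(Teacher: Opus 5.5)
Your proof is correct and follows the paper's argument. The paper likewise gets (a) by running the proof of Corollary~\ref{cor:every-decision} on each stopped study $s_j$, and (b) by Azuma--Hoeffding for $\Psi_k=\sum_{j\le k}(Z_j-p_j)$ together with $\{F_k>k(\alpha'+h)\}\subseteq\mathcal{E}^c\cup\{\Psi_k>kh\}$. For (c) it uses Borel--Cantelli on $\mathcal{E}$ with a fixed $h$ shrunk along a countable sequence, whereas you take the vanishing level $h_k=\sqrt{\log k/k}$ in a single step, which is only a cosmetic difference.
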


\begin{proof}
The event that study $s$ is the $i$th study of this kind is determined by
$\mathcal{H}_s$, so the proof of Corollary~\ref{cor:every-decision} applies to
each $s_i$ and gives (a). For (b) and (c), let $Z_i$ indicate a rejection on
study $s_i$, and let
$p_i=\Pr(Z_i=1\mid\mathcal{H}_{s_i})=e(v_{(s_i)},\theta_{s_i})$. The decision $Z_i$
is known when the data of study $s_{i+1}$ arrive, so the sums
$\Psi_k=\sum_{i\le k}(Z_i-p_i)$ form a martingale with increments in intervals
of length one. The Azuma-Hoeffding inequality then gives
$\Pr\{\Psi_k>kh\}\le\exp(-2kh^2)$. On the event $\mathcal{E}$ of the proof of
Corollary~\ref{cor:every-decision}, every version ever in force is valid at
level $\alpha'$ on its scenario set, so $p_i\le\alpha'$ for every $i$ and
$F_k\le k\alpha'+\Psi_k$, and $\Pr(\mathcal{E}^{c})\le\delta$ by the second
statement of Theorem~\ref{thm:lifetime}. Together these give (b). The probabilities
$\Pr\{|\Psi_k|>kh\}\le2\exp(-2kh^2)$ are summable in $k$, so $\Psi_k/k\to0$
almost surely by the Borel-Cantelli lemma, and on $\mathcal{E}$ this gives (c).
\end{proof}

\subsection{Valid updates pass}
\label{sec:power}

Type I error control alone is trivial, since a validator that never passes an update achieves it.
The next result shows that valid updates pass continual validation with high probability, at a number of
simulated studies that grows only with the logarithm of the update index.

\begin{proposition}[Valid updates pass]
\label{prop:power}
Fix an update $r$, a margin $\gamma\in(0,\alpha')$ and $\zeta\in(0,1)$, and let the scenario set of round $r$ contain
$J_r$ scenarios. Suppose $e(v_r,\theta)\le\alpha'-\gamma$ for every scenario $\theta$ of the round.
\begin{enumerate}[label=(\roman*),leftmargin=2.2em,itemsep=0pt]
\item If $m_r\ge 8\alpha'\ell_r/\gamma^2$ with $\ell_r=\max\{\log(2/\delta_r),\log(J_r/\zeta)\}$, then $v_r$ passes
  validation with probability at least $1-\zeta$.
\item If $v_r$ never rejects where the reference analysis does not, so that $q^{+}_\theta=0$, and
  $b_\theta\le\alpha'-\gamma$ in every scenario, then $v_r$ passes validation with probability one as soon as
  $m_r\ge\log(4/\delta_r)/\gamma$.
\end{enumerate}
\end{proposition}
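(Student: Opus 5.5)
Part (i) reduces to the direct bound: since $B_{r,\theta}=\min(B^{\mathrm{pair}}_\theta,B^{\mathrm{dir}}_\theta)\le B^{\mathrm{dir}}_\theta=U(X_\theta;m_r,\delta_r/2)$, it suffices that $B^{\mathrm{dir}}_\theta\le\alpha'$ in every scenario of $\mathcal{S}_r$. I would apply the argument of Theorem~\ref{thm:cost-rel}(iii) scenario by scenario with target $a=\alpha'$ and level $\eta=\delta_r/2$, but with the failure probability per scenario set to $\zeta/J_r$ rather than $\zeta$. The inequality \eqref{eq:ucb-suff} with $g=\alpha'$ shows that on the event $X_\theta\le m_r(\alpha'-\gamma/2)$ we have $U(X_\theta;m_r,\delta_r/2)\le\alpha'$ once $(\gamma/2)^2\ge2\alpha'\log(2/\delta_r)/m_r$, that is, once $m_r\ge8\alpha'\log(2/\delta_r)/\gamma^2$. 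This is implied by $m_r\ge8\alpha'\ell_r/\gamma^2$.

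For the complementary event I would use that $X_\theta$ is binomial with parameter $p=e(v_r,\theta)\le\alpha'-\gamma$. The Chernoff bound, combined with the lower bound $\mathrm{kl}(x,y)\ge(x-y)^2/\{2\max(x,y)\}$, gives
\[
\Pr\{X_\theta>m_r(\alpha'-\gamma/2)\}\le\exp\{-m_r\gamma^2/(8\alpha')\}.
\]
Two points need a little care. First, if $p$ is strictly smaller than $\alpha'-\gamma$, I would appeal to stochastic monotonicity of the binomial in $p$, or simply note that the distance to the threshold only grows. Second, $\max(x,y)$ is at most $\alpha'$. The tail bound is at most $\zeta/J_r$ once $m_r\ge8\alpha'\log(J_r/\zeta)/\gamma^2$, which again follows from $m_r\ge8\alpha'\ell_r/\gamma^2$. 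A union bound over the $J_r$ scenarios then shows that the probability that some scenario fails is at most $\zeta$.

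One point needs stating explicitly: within round $r$ the version $v_r$ and the set $\mathcal{S}_r$ are fixed at the freeze, so the argument should be made conditionally on $\mathcal{G}_r$. Under condition (b) of Theorem~\ref{thm:lifetime}, each $X_\theta$ is then binomial with parameter $e(v_r,\theta)$. No independence across scenarios is needed, because the union bound does not require it.

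Part (ii) is Theorem~\ref{thm:cost-rel}(i) with $\eta=\delta_r/4$ and target $a=\alpha'$. If $q^{+}_\theta=0$, then $N^{+}_\theta=0$ almost surely. Using $U(0;m,\eta)=1-\eta^{1/m}$ and $L\ge0$, we get $B_{r,\theta}\le B^{\mathrm{pair}}_\theta\le\alpha'-\gamma+1-(\delta_r/4)^{1/m_r}$. This is at most $\alpha'$ when $m_r\log\{1/(1-\gamma)\}\ge\log(4/\delta_r)$, and $\log\{1/(1-\gamma)\}\ge\gamma$ shows that $m_r\ge\log(4/\delta_r)/\gamma$ suffices. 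The bound holds deterministically in every scenario, so the update passes with probability one.

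I expect no serious obstacle. The only step needing care is the tail bound in (i), where the dependence of the kl lower bound on $\max(x,y)\le\alpha'$ is what produces the constant $8\alpha'$, and where the reduction to $p=\alpha'-\gamma$ must be justified.
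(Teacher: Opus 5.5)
Your proposal is correct and follows the paper's proof. In part (i) you bound the combined bound by the direct bound, apply Theorem~\ref{thm:cost-rel}(iii) in each scenario with target $\alpha'$, level $\delta_r/2$ and failure probability $\zeta/J_r$, and take a union bound over the $J_r$ scenarios; part (ii) is Theorem~\ref{thm:cost-rel}(i) at level $\delta_r/4$. The only difference is cosmetic: to handle $e(v_r,\theta)<\alpha'-\gamma$ you redo the Chernoff tail bound directly, noting that the gap to $\alpha'-\gamma/2$ only grows, whereas the paper extends the theorem from $e(v_r,\theta)=\alpha'-\gamma$ using stochastic monotonicity of the binomial count and monotonicity of $U$ in $x$.
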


\begin{proof}
(i) Apply Theorem~\ref{thm:cost-rel}(iii) in each scenario with target $a=\alpha'$, level $\eta_3=\delta_r/2$ and
failure probability $\zeta/J_r$. The theorem is stated for $e(v_r,\theta)=a-\gamma$. The direct bound increases with the
count of rejections, and this count is stochastically increasing in $e(v_r,\theta)$, so the probability that the
direct bound is at most $\alpha'$ can only grow when $e(v_r,\theta)<a-\gamma$. The combined bound is at most the direct
bound, and a union bound over the $J_r$ scenarios gives (i). (ii) Apply Theorem~\ref{thm:cost-rel}(i) with level
$\eta_1=\delta_r/4$ in every scenario.
\end{proof}

\begin{remark}[Further conditions for deployment]
A deployment method may add conditions to validation, for example that a lower confidence bound on the gain in power is
positive. Such a method deploys a subset of the versions that pass validation. The events $A_r$ in the proof of
Theorem~\ref{thm:lifetime} concern passing and not deployment, so they are unchanged, and Theorem~\ref{thm:lifetime}
and Corollary~\ref{cor:every-decision} continue to hold. An update held back by such a condition has still passed
validation, and a version deployed later without validation can inherit its validity (section~\ref{sec:kinds}).
\end{remark}
\subsection{The reference analysis}

\begin{corollary}[Lifetime risk control with a validated reference analysis]
\label{cor:lifetime-ref}
Let the reference bounds of the initial scenario set be obtained in either of
two ways, or by a mixture of them across scenarios.
\begin{enumerate}[label=(\roman*),leftmargin=2.2em,itemsep=0pt]
\item By proof. On Gaussian null scenarios $b_\theta=\alpha_0$, by
  Theorem~\ref{thm:cont-ref}.
\item By one validation. Before the first update the validator simulates
  $m_0$ studies in each remaining scenario of $\mathcal{S}_0$, counts the rejections $X^0_\theta$ of the reference analysis, and sets
  $b_\theta=U(X^0_\theta;m_0,\delta_{0,\theta})$, with levels $\delta_{0,\theta}$
  fixed before the first update whose sum $\delta_0'$ over these scenarios is at
  most $\delta_0$.
\end{enumerate}
Let every scenario added later receive a bound by proof, the trivial bound
$b_\theta=1$, or a validation of the same kind. The $k$th added scenario
that receives such a validation has a level $\delta_{0,k}$ fixed before the
first update, with $\delta_0'+\sum_k\delta_{0,k}\le\delta_0$. Its $m_0$
simulated studies are drawn after the scenario is fixed, independently of
everything before.
Then $\Pr(E_0^{c})\le\delta_0$, and the protocol with the shares
\eqref{eq:spending} satisfies \eqref{eq:lifetime-bound}.
\end{corollary}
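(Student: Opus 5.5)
The plan is to reduce the statement to condition (c) of Theorem~\ref{thm:lifetime}. I will show $\Pr(E_0^{c})\le\delta_0$ for the bounds obtained as described. With the shares \eqref{eq:spending}, $\delta_0+\sum_r\delta_r=\delta$ holds deterministically. Conditions (a) and (b) concern only the rounds and are unaffected by how the reference bounds were obtained. So \eqref{eq:lifetime-bound} follows from that theorem. The whole content is therefore the reference-bound coverage.

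I split $E_0^{c}$ according to how each scenario's bound arises. Scenarios with bounds by proof never fail, by Theorem~\ref{thm:cont-ref}, since $e(v_0,\theta)\le\alpha_0=b_\theta$ holds deterministically. Scenarios with $b_\theta=1$ also never fail. For an initial scenario validated by simulation, $v_0$ and $\theta$ are fixed before the first update, and the $m_0$ draws are independent. Hence $X^0_\theta$ is binomial with parameter $e(v_0,\theta)$, and Lemma~\ref{lem:cp} gives $\Pr\{U(X^0_\theta;m_0,\delta_{0,\theta})<e(v_0,\theta)\}\le\delta_{0,\theta}$. A union bound over these finitely many scenarios contributes at most $\delta_0'$.

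The main obstacle is the later-added scenarios. Their identity may be chosen adaptively, possibly in response to proposals, so a naive union bound over ``all scenarios'' is not available. I handle this by indexing by the order of addition, $k=1,2,\dots$, and conditioning on the information at the moment the $k$th validated scenario is fixed. Given that information, the scenario $\theta$ and the analysis $v_0$ are determined, and its $m_0$ studies are fresh independent draws. Lemma~\ref{lem:cp} then bounds the conditional miss probability by $\delta_{0,k}$. The crucial point is that this level was fixed before the first update, so it is not itself random. Taking expectations and summing, as in the proof of Theorem~\ref{thm:lifetime}, bounds the total by $\sum_k\delta_{0,k}$. Additions that never occur contribute nothing.

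Combining the three parts gives $\Pr(E_0^{c})\le\delta_0'+\sum_k\delta_{0,k}\le\delta_0$. This verifies condition (c) and completes the argument.
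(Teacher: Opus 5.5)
Your proposal is correct and matches the paper's proof in every step. Analytic and trivial bounds hold with certainty. Lemma~\ref{lem:cp} and a union bound handle the initial validated scenarios. Conditioning on the information at the time the $k$th added scenario is fixed, with its level $\delta_{0,k}$ set in advance, followed by taking expectations and summing, gives $\Pr(E_0^{c})\le\delta_0'+\sum_k\delta_{0,k}\le\delta_0$, and Theorem~\ref{thm:lifetime} then applies.
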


\begin{proof}
Analytic bounds and the trivial bound hold with certainty. The reference
analysis is fixed before its bounds are computed, so each count
$X^0_\theta$ for $\theta\in\mathcal{S}_0$ is binomial with success probability
$e(v_0,\theta)$. By Lemma~\ref{lem:cp}, the event
$b_\theta<e(v_0,\theta)$ then has probability at most $\delta_{0,\theta}$ in each
of these scenarios. An added scenario $\theta_k$ may be chosen from the history.
Given the information available when $\theta_k$ is fixed, its count
$X^0_{\theta_k}$ is binomial with success probability $e(v_0,\theta_k)$. By
Lemma~\ref{lem:cp}, the event $b_{\theta_k}<e(v_0,\theta_k)$ then has
conditional probability at most $\delta_{0,k}$. Taking expectations and a union
bound over $k$ and over $\mathcal{S}_0$ gives
$\Pr(E_0^{c})\le\delta_0'+\sum_k\delta_{0,k}\le\delta_0$, and
Theorem~\ref{thm:lifetime} applies.
\end{proof}

The reference bounds need simultaneous coverage, holding in all scenarios at
once. They enter every later validation, and a proposer that adapts to earlier
deployment decisions can steer its updated version toward a scenario whose reference bound
happens to be too low.
This requirement divides the level of the bounds among the $J$ scenarios.
The division is made once, when the reference analysis is validated, and the
updates themselves need none (Theorem~\ref{thm:iut}). On Gaussian null
scenarios neither division nor simulation is needed. In the bioequivalence
analyst, the validation of the reference analysis on scenarios built from real
studies spends $\delta_0=0.001$, and the updates share the remaining $0.004$.

In a scenario with the trivial bound, the paired bound exceeds the validation
level unless the reference analysis rejects on almost every simulated study. The
direct bound therefore decides validation there, and the scenario set can grow
at any time without a new validation of the reference analysis.

\subsection{Scenarios grow with the information a version reads}
\label{sec:extension}

A validation concerns only the scenarios that it uses. A version may read
information that the validation scenarios hold fixed, and its Type I error
rate can then depend on a factor that no scenario varies. The following
existence statement makes this precise in the crossover model.

\begin{proposition}[New information needs new scenarios]
\label{prop:grid}
In model \eqref{eq:crossover} with Gaussian subject effects, let
$\gamma=\alpha'-\alpha_0$, and let the scenario set of round $r$ consist of
$J_r$ null scenarios without sequence effect, $\kappa=0$, with the analytic
reference bound $b_\theta=\alpha_0$. There is a version $v'$ with the
following three properties. First, $e(v',\theta)\le\alpha_0+\gamma/2$ in every
null scenario with $\kappa=0$. Second, it passes validation in round $r$ with
probability at least $1-\zeta$ once
$m_r\ge32\max\{\log(4/\delta_r),\log(J_r/\zeta)\}/\gamma$.
Third, $e(v',\theta)\to1$ as $\kappa\to\infty$ with the other parameters fixed.
The Type I error rate of the reference analysis does not depend on $\kappa$.
\end{proposition}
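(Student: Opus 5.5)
We construct $v'$ directly in the crossover model as the reference analysis $v_0$ plus a small rejection region that depends on the sequence-effect statistic. This region has rate at most $\gamma/2$ when $\kappa=0$ and grows to the whole sample space as $\kappa\to\infty$.

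Let $v'$ reject when $v_0$ rejects, or when $T_\kappa>c$. Here $T_\kappa$ is the $t$ statistic for $\kappa=\beta_1-\beta_2$, built from the subject totals $y_{ik1}+y_{ik2}$. Its numerator is the difference of the sequence means of these totals, and its denominator uses their pooled standard deviation. We fix the cutoff $c$ so that $\Pr\{T_\kappa>c\}=\gamma/2$ when $\kappa=0$. (Dropout independent of the responses is handled by conditioning on the completers, as in Theorem~\ref{thm:cont-ref}; if a sequence lacks completers, $v'$ simply copies $v_0$.)

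The first step is to note why $T_\kappa$ is exactly Student-$t$ under $\kappa=0$. The subject totals are Gaussian with variance $4\sigma_B^2+2\sigma_W^2$ (with Gaussian subject effects), their mean difference across sequences is $2\kappa$ plus period terms that cancel, and they do not involve $\mu$. So the null law of $T_\kappa$ is free of $\mu$, of $\sigma_W$, and of the other nuisance parameters. By Boole's inequality,
\[
e(v',\theta)\le e(v_0,\theta)+\gamma/2\le\alpha_0+\gamma/2
\]
in every null scenario with $\kappa=0$, using Theorem~\ref{thm:cont-ref}. This gives the first property.

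The third property follows next. As $\kappa\to\infty$ with everything else fixed, the numerator of $T_\kappa$ diverges while the law of the pooled standard deviation does not change. Hence $\Pr\{T_\kappa>c\}\to1$, and so $e(v',\theta)\to1$. The remark that $e(v_0,\theta)$ does not depend on $\kappa$ follows because $v_0$ is a function of the within-subject differences $d_{ik}$ only, and these cancel $\beta_i$.

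The main step is the second property, the passing probability, which we get from the paired bound. By construction $N^-_\theta=0$, and the added rejections satisfy $q^{+}_\theta\le\Pr\{T_\kappa>c\}=\gamma/2$. So we can apply Theorem~\ref{thm:cost-rel}(ii) with $a=\alpha'$, $b_\theta=\alpha_0=a-\gamma$, $c=1/2$, level $\eta_1=\delta_r/4$, and per-scenario failure probability $\zeta/J_r$. It requires
\[
m_r\ge 8\ell/\{(1/2)^2\gamma\}=32\ell/\gamma,\qquad \ell=\max\{\log(4/\delta_r),\log(J_r/\zeta)\}.
\]
Since $L\ge0$, each scenario then has $B_\theta\le B^{\mathrm{pair}}_\theta\le\alpha'$ with probability at least $1-\zeta/J_r$. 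A union bound over the $J_r$ scenarios gives a passing probability of at least $1-\zeta$.

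The only delicate point is the exactness of the $t$ law of $T_\kappa$, which requires the Gaussian subject effects assumed in the statement. If that causes trouble, we would instead use a cutoff on a statistic with a known null law under conditioning on the completers.
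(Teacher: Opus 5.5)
Your proof is correct and follows the paper's argument step by step. You use the same version $v'$, which adds the rejection region $T_\kappa>t_{1-\gamma/2,\nu}$ built from the subject totals, Boole's inequality with Theorem~\ref{thm:cont-ref} for the first property, Theorem~\ref{thm:cost-rel}(ii) with $c=1/2$, level $\delta_r/4$, per-scenario failure probability $\zeta/J_r$ and a union bound for the second, and divergence of $T_\kappa$ for the third. One small point: under positive dropout your fallback makes the limit in the third property the probability of having enough completers rather than exactly one, and the paper's proof also leaves this implicit.
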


\begin{proof}
Let $w_{ik}=y_{ik1}+y_{ik2}$ be the subject totals. Their means in the two
sequences differ in expectation by $2\kappa$, whatever $\mu$. Let $s_w$ be the
pooled standard deviation of the totals on $\nu$ degrees of freedom. When
$\kappa=0$, the statistic
$T_\kappa=(\bar w_1-\bar w_2)/\{s_w(n_1^{-1}+n_2^{-1})^{1/2}\}$ has the $t$
distribution with $\nu$ degrees of freedom, for every $\mu$, $\sigma_W$ and
subject-effect variance. Let $v'$ reject when the
reference analysis rejects or when $T_\kappa>t_{1-\gamma/2,\nu}$. When $\kappa=0$ the
second event has probability $\gamma/2$, so $e(v',\theta)\le\alpha_0+\gamma/2$
by Theorem~\ref{thm:cont-ref}, and $v'$ adds rejections at rate
$q^{+}_\theta\le\gamma/2$. Theorem~\ref{thm:cost-rel}(ii) with $c=1/2$,
$b_\theta=\alpha'-\gamma$ and level $\eta_1=\delta_r/4$ shows that each of
the $J_r$ scenarios passes with probability at least $1-\zeta/J_r$ under the
stated $m_r$. Hence all of them pass with probability at least $1-\zeta$. As
$\kappa\to\infty$, $T_\kappa\to\infty$ in probability, so
$e(v',\theta)\ge\Pr\{T_\kappa>t_{1-\gamma/2,\nu}\}\to1$. The differences
$d_{ik}$ do not involve the sequence effects, so the reference analysis does
not depend on $\kappa$.
\end{proof}

The version $v'$ is the simplest example, and a practical version behaves the
same way. A version that recovers information from subjects observed in only
one period uses a between-subject contrast whose mean is $\mu$ plus the
sequence effect. Its Type I error rate thus moves with $\kappa$, while the
reference analysis, which uses within-subject differences, is exactly
invariant to it. A scenario set that suffices for the reference analysis
therefore does not suffice for such a version.

The validator extends the scenario set whenever an updated version reads a
kind of information that no current scenario varies, here by adding scenarios
with a sequence effect. The validator decides the extension and fixes it at
the freeze, and the added scenarios stay in the set from then on. The
reference bound on them comes from proved invariance as here, from a
validation at a level set aside within $\delta_0$, or from the trivial bound
(Corollary~\ref{cor:lifetime-ref}). The guarantee for a decision covers every
scenario on which the version in force is validated whatever the history
(Corollary~\ref{cor:every-decision}), such as every scenario of the initial set,
and the scenario sets of later versions grow whenever the analyst gains a new
capability. The extension can be audited when every update declares the
inputs its decision reads and the software enforces the declaration, because
the validator then sees which factors need scenarios.

\subsection{Random scenarios extend the guarantee between fixed scenarios}
\label{sec:random}

A fixed scenario set is known to the proposer, who learns every deployment decision and the
scenario that decided it. A version can then copy the reference analysis on
every validation scenario and behave differently elsewhere. Consider a version
that copies the reference analysis whenever the planned sample size is one of
those the scenarios use and rejects at every other sample size. It adds no
rejection in any validation scenario, so its paired bound passes whenever that
of the reference analysis would, yet its Type I error rate is one at every
other sample size. This is the analogue of Proposition~\ref{prop:reuse} for
scenarios. Newly simulated studies defeat
versions {that overfit} a stored suite of studies, and newly drawn scenarios do the same
for versions {that overfit} a fixed set of scenarios, up to a set of small
probability.

\begin{proposition}[Random scenarios]
\label{prop:random}
Let $\Pi$ be a probability distribution on $\Theta_0$, fixed before the first
update, under which every Type I error rate $e(v,\cdot)$ is measurable. In
round $r$, after the freeze, let the validator draw a set $\mathcal{Q}_r$ of
$K_r$ scenarios independently from $\Pi$ and independently of
$\mathcal{G}_r$, where the numbers $K_r$ are fixed before the first
update. It adds them to the scenarios of the round with the analytic
reference bound where one exists and the trivial bound $b_\theta=1$
otherwise, and only then draws the simulated studies of the round. Let
$V_r=\{\theta\in\Theta_0:e(v_r,\theta)>\alpha'\}$ be the set on which the
updated version $v_r$ exceeds the validation level. Under the conditions of
Theorem~\ref{thm:lifetime}, for every $\varepsilon_\Pi\in(0,1)$,
\begin{equation}
  \Pr\bigl\{\exists\,r\ge1:\ v_r\text{ passes validation, and }
  e(v_r,\mathcal{S}_r)>\alpha'\text{ or }\Pi(V_r)>\varepsilon_\Pi\bigr\}
  \le\delta+\sum_{r\ge1}(1-\varepsilon_\Pi)^{K_r}.
  \label{eq:random-bound}
\end{equation}
\end{proposition}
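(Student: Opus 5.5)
The plan is to split the bad event of the proposition into two parts and bound each separately. The first part is that $v_r$ passes with $e(v_r,\mathcal{S}_r)>\alpha'$. The second part is that $v_r$ passes with $e(v_r,\mathcal{S}_r)\le\alpha'$ but $\Pi(V_r)>\varepsilon_\Pi$.

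\textbf{First part.} I apply Theorem~\ref{thm:lifetime} with the enlarged round sets $\mathcal{S}_r\cup\mathcal{Q}_r$ in place of $\mathcal{S}_r$. The draw of $\mathcal{Q}_r$ happens after the freeze and before the simulated studies. So I treat it as part of an extended freeze, with $\mathcal{G}'_r=\sigma(\mathcal{G}_r,\mathcal{Q}_r)$. Condition (a) then holds with respect to $\mathcal{G}'_r$, because the version, the scenario set and the bounds are all fixed once $\mathcal{Q}_r$ is drawn. Condition (b) holds because the studies are drawn only afterwards. The reference bounds on the random scenarios are either analytic or trivial, so they hold with certainty and $E_0$ is unaffected. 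Theorem~\ref{thm:lifetime} then bounds by $\delta$ the probability that some $v_r$ passes with its size on the full round set above $\alpha'$. That size is at least $e(v_r,\mathcal{S}_r)$, so this covers the first part. On the complement event $\mathcal{E}$, every passing $v_r$ also satisfies $e(v_r,\theta)\le\alpha'$ for every $\theta\in\mathcal{Q}_r$.

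\textbf{Second part.} Let $\Gamma_r$ be the event that $\Pi(V_r)>\varepsilon_\Pi$ and $\mathcal{Q}_r\cap V_r=\emptyset$. Suppose $v_r$ passes, $\Pi(V_r)>\varepsilon_\Pi$, and we are on $\mathcal{E}$. Then no drawn scenario lies in $V_r$, since otherwise the passing version would exceed $\alpha'$ in a scenario of its round. So on $\mathcal{E}$ this part of the bad event lies in $\Gamma_r$.

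To bound $\Gamma_r$, I use that $v_r$ is fixed given $\mathcal{G}_r$, and therefore so is $V_r$; this is where measurability of $e(v_r,\cdot)$ is needed. The $K_r$ draws are independent of $\mathcal{G}_r$, so
\[
\Pr(\Gamma_r\mid\mathcal{G}_r)=\mathbf{1}\{\Pi(V_r)>\varepsilon_\Pi\}\,(1-\Pi(V_r))^{K_r}\le(1-\varepsilon_\Pi)^{K_r}.
\]
Taking expectations and a union bound over $r$ gives $\sum_r(1-\varepsilon_\Pi)^{K_r}$.

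\textbf{Combining.} The bad event lies in $\mathcal{E}^c\cup\bigcup_r\Gamma_r$, so its probability is at most $\delta+\sum_r(1-\varepsilon_\Pi)^{K_r}$, which is \eqref{eq:random-bound}.

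The main obstacle is the bookkeeping of information in the first part. The random scenarios must not break Theorem~\ref{thm:lifetime}'s requirement that $\theta^*_r$ be determined before the studies are drawn. The fix is to make $\theta^*_r$ measurable with respect to $\mathcal{G}'_r$ rather than $\mathcal{G}_r$, and to check that condition (b) still holds conditionally on $\mathcal{Q}_r$. A secondary point is that $V_r$ must be determined before $\mathcal{Q}_r$ is drawn. This holds because $v_r$ is frozen before the draw and $K_r$ is fixed in advance.
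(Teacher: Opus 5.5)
Your proposal is correct and follows essentially the same route as the paper's proof. You extend the freeze to $\mathcal{G}'_r$ so that Theorem~\ref{thm:lifetime} applies to $\mathcal{S}_r\cup\mathcal{Q}_r$, with $E_0$ unaffected because the bounds on the drawn scenarios are analytic or trivial. You then bound $\Pr(\Gamma_r\mid\mathcal{G}_r)$ by $(1-\varepsilon_\Pi)^{K_r}$, using that $V_r$ is fixed by $v_r$ while $\mathcal{Q}_r$ is independent of $\mathcal{G}_r$, and finish with a union bound over $r$.
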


\begin{proof}
Let $\mathcal{G}'_r$ be the information at the freeze together with the drawn
set $\mathcal{Q}_r$. Given $\mathcal{G}'_r$, the updated version, the scenario
set $\mathcal{S}_r\cup\mathcal{Q}_r$ with its reference bounds, the number
$m_r$ and the share $\delta_r$ are fixed, and the simulated studies are drawn
afterward. Conditions (a) and (b) of Theorem~\ref{thm:lifetime} therefore hold
with $\mathcal{G}'_r$ in place of $\mathcal{G}_r$. Analytic and trivial bounds
hold with certainty, so the event $E_0$ is unchanged. The theorem gives
$\Pr\{\exists\,r:\ v_r\text{ passes validation and }
e(v_r,\mathcal{S}_r\cup\mathcal{Q}_r)>\alpha'\}\le\delta$. Let $\Gamma_r$ be
the event that $\Pi(V_r)>\varepsilon_\Pi$ and no scenario of $\mathcal{Q}_r$
lies in $V_r$. The set $V_r$ is determined by $v_r$ and hence by
$\mathcal{G}_r$, while $\mathcal{Q}_r$ is drawn independently of
$\mathcal{G}_r$, so
$\Pr(\Gamma_r\mid\mathcal{G}_r)=\{1-\Pi(V_r)\}^{K_r}\,1\{\Pi(V_r)>\varepsilon_\Pi\}
\le(1-\varepsilon_\Pi)^{K_r}$. If $v_r$ passes validation and
$\Pi(V_r)>\varepsilon_\Pi$, then either $\Gamma_r$ occurs or a scenario of
$\mathcal{Q}_r$ lies in $V_r$, and in the second case
$e(v_r,\mathcal{S}_r\cup\mathcal{Q}_r)>\alpha'$. A union bound over these
events and over $r$ gives \eqref{eq:random-bound}.
\end{proof}

The random scenarios cost no share of the lifetime risk, because within a round
the intersection-union argument needs no division across scenarios. They cost
only the probability that no draw falls in a set of $\Pi$-probability above
$\varepsilon_\Pi$ on which a version exceeds the validation level. Since
$(1-\varepsilon_\Pi)^{K_r}\le\exp(-\varepsilon_\Pi K_r)$, the choice
$K_r=\lceil\varepsilon_\Pi^{-1}\log\{\pi^2r^2/(6\omega)\}\rceil$ holds the
added term at $\omega$. With $\varepsilon_\Pi=0.01$ and $\omega=0.005$, the
first update draws \FoneRandKFirst{} scenarios and the thousandth draws
\FoneRandKLast. With probability at least $1-\delta-\omega$, every update that
passes validation, and so every deployed update, is then within the validation
level on every fixed scenario of its round and on all of the null space except
a set of $\Pi$-probability at most $\varepsilon_\Pi$. Once $\Pi$ gives sample
sizes outside the fixed scenarios a probability above $\varepsilon_\Pi$, the
version of the example above fails validation except with probability at most
$(1-\varepsilon_\Pi)^{K_r}+\delta_r$.

All random scenarios of a round must pass together. For the
\FoneRandKFirst{} scenarios of the first update to pass jointly with
probability 0.9 by the direct bound, a version at the reference level needs
\FoneRandM{} simulated studies in each, against \FoneScratchFirst{} for a
single scenario (Table~\ref{tab:costs}). On Gaussian scenarios the analytic
bound of the reference analysis holds everywhere (Theorem~\ref{thm:cont-ref}),
and random scenarios keep the paired bound and its lower cost. The draws also
make the identity of a deciding scenario uninformative for the proposer, since
the next round draws new ones.

\FloatBarrier

\section{Updates that need no new validation}
\label{sec:modifications}

Every update produces a new version, and Theorem~\ref{thm:lifetime} never
asks what the update is or who made it. Many updates need no new simulation,
because an argument bounds their Type I error rate by that of a version
already known to be valid. The subsections below give these arguments, and
Table~\ref{tab:modifications} collects them.

\subsection{Allowed paths}

An AI analyst is usually a language model inside software that limits what the
model can do. An allowed path is one complete sequence of steps, from the
start of a study to a decision, that the software lets the model take. For example, one path
might check the data against the protocol, analyze the completers with the two
one-sided tests procedure at level $\alpha_0$, and decide. Another path might
defer the study to a human expert. Let $\mathcal{R}$ be the set of allowed
paths, where each path $\rho\in\mathcal{R}$ ends in a decision $y_\rho(D)$ that
is a fixed function of the data. The language model chooses which path to
follow, and its choice may depend on the data, on its own randomness, on its
instructions and on the model itself.

\begin{proposition}[Allowed paths]
\label{prop:envelope}
Let $v$ be a version whose software mechanically confines every run to a path
in $\mathcal{R}$. Each path ends in a decision $y_\rho(D)$ that is a fixed
function of the data, so that the model selects paths and computes nothing
that enters a decision. Let the any-path rule $v_{\mathcal{R}}$ be the
procedure that rejects whenever some allowed path rejects,
\[
  \{Y_{v_{\mathcal{R}}}=\text{reject}\}=\bigcup_{\rho\in\mathcal{R}}\{y_\rho(D)=\text{reject}\}.
\]
Then $e(v,\theta)\le e(v_{\mathcal{R}},\theta)$ in every scenario, whatever rule
selects the path. In particular, suppose every path that can reject applies
one analysis to data prepared by one fixed rule. Then the Type I error rate of
$v$ is at most that of this analysis.
\end{proposition}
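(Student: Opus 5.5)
The plan is a pathwise containment of rejection events followed by taking probabilities. Write a run of $v$ on a study as a function of the data $D$ and the internal randomness $\xi$. The selecting rule may depend on $D$, on $\xi$, on the instructions and on the model, so the chosen path is some random element $\rho(D,\xi)\in\mathcal{R}$. Because the software mechanically confines every run to $\mathcal{R}$, and the model computes nothing that enters the decision, the decision is $Y_v=y_{\rho(D,\xi)}(D)$. This holds for every realization of $(D,\xi)$.

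The key step is then the inclusion
\[
  \{Y_v=\text{reject}\}\subseteq\bigcup_{\rho\in\mathcal{R}}\{y_\rho(D)=\text{reject}\}=\{Y_{v_{\mathcal{R}}}=\text{reject}\}.
\]
To see it, fix any outcome. If $v$ rejects, then the selected path $\rho(D,\xi)$ is an allowed path that rejects on $D$, so the union event occurs. Since the right side is a function of $D$ alone, take $\Pr_\theta$ over the joint law of $(D,\xi)$. Monotonicity of probability then gives $e(v,\theta)\le e(v_{\mathcal{R}},\theta)$ in every scenario.

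The selection rule never enters beyond its range lying in $\mathcal{R}$. The inequality therefore holds uniformly over rules, models and prompts. No independence of $\xi$ and $D$ is needed, and correlation between the choice of path and the data is harmless. This is the same Boole-type containment as Theorem~\ref{thm:nonclosure}(i), except that we keep the union event rather than bounding it by a sum.

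For the ``in particular'' clause, suppose every path that can reject ends in $\phi(g(D))$, for one analysis $\phi$ and one fixed preparation rule $g$. The remaining paths, such as retaining or deferring, never reject and contribute empty events to the union. The union therefore collapses to the single event $\{\phi(g(D))=\text{reject}\}$, and its probability is the Type I error rate of that analysis. Chaining with the inequality above gives the claim.

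The main obstacle is the measurability and the precise meaning of ``mechanically confines''. Measurability is needed so that the union event is an event, which is automatic when $\mathcal{R}$ is finite or countable, since versions are finite specifications. Mechanical confinement must guarantee that the decision value is exactly $y_\rho(D)$, with no model-computed quantity leaking into it. I would state explicitly that this is a hypothesis on the software, and then note that the remaining steps are immediate.
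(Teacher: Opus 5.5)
Your proof is correct and is essentially the paper's argument: the pathwise inclusion $\{Y_v=\text{reject}\}\subseteq\bigcup_{\rho\in\mathcal{R}}\{\rho\text{ is followed and }y_\rho(D)=\text{reject}\}\subseteq\{Y_{v_{\mathcal{R}}}=\text{reject}\}$, whose right side depends on neither the selection rule nor the model's randomness, followed by monotonicity of $\Pr_\theta$. Your treatment of the ``in particular'' clause fills in a step the paper leaves implicit: the non-rejecting paths contribute empty events, and the union collapses to the rejection event of the single analysis.
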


\begin{proof}
Whatever path is followed, the decision is a rejection only if that path
rejects, so
\[
  \{Y_v=\text{reject}\}\subseteq\bigcup_{\rho\in\mathcal{R}}
  \{\rho\text{ is followed and }y_\rho(D)=\text{reject}\}
  \subseteq\{Y_{v_{\mathcal{R}}}=\text{reject}\}.
\]
The last event depends neither on the rule that selects the path nor on the
internal randomness of the model.
\end{proof}

A version is validated through its allowed paths when the any-path rule
$v_{\mathcal{R}}$ passes validation. A later update that keeps the paths and
their decision functions leaves this bound in force and needs no new
validation. A swapped model, an edited prompt, a changed sampling temperature
and a provider update are updates of this kind. The model may be stochastic,
may drift, may become more or less cautious, and may act adversarially within
the allowed paths. None of this can enlarge the rejection region beyond that
of $v_{\mathcal{R}}$.

The software must deliver both hypotheses, enforcing the set of allowed paths
mechanically and never letting the model compute a number that enters a
decision. The model then chooses among paths, and the paths compute. A
validation of the version itself, by contrast, depends on how the model
chooses, and a model swap changes the object it validates. The bound is
tightest when one analysis carries every rejection, as in an analyst whose
model can only choose between running the sanctioned analysis and deferring.

\subsection{Deferral}

\begin{proposition}[Deferral never raises the Type I error rate]
\label{prop:escalation}
Let $v'$ be a version, and let $v$ run exactly as $v'$ except that it defers on
an event $E$. The event $E$ may depend on anything, including the data and the
internal randomness of the analyst. Then $e(v,\theta)\le e(v',\theta)$ in every
scenario.
\end{proposition}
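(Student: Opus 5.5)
The plan is a pathwise inclusion of rejection events, in the same spirit as the proof of Proposition~\ref{prop:envelope}. Couple the two versions on a common probability space. Both run on the same data $D$ and the same internal randomness $\xi$, because $v$ runs exactly as $v'$ except on $E$. By \eqref{eq:decision} the decisions are then $Y_v=y_v(D,\xi)$ and $Y_{v'}=y_{v'}(D,\xi)$. The hypothesis that $v$ runs exactly as $v'$ off $E$ says that $Y_v=Y_{v'}$ on $E^{c}$. On $E$ the version $v$ returns defer.

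The key step is the inclusion
\[
  \{Y_v=\text{reject}\}=\{Y_v=\text{reject}\}\cap E^{c}=\{Y_{v'}=\text{reject}\}\cap E^{c}\subseteq\{Y_{v'}=\text{reject}\}.
\]
The first equality holds because a deferral is never a rejection, so on $E$ the version $v$ does not reject. Taking $\Pr_\theta$ of both sides then gives $e(v,\theta)\le e(v',\theta)$ through \eqref{eq:falsego}, in every scenario $\theta$.

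The only point needing care is the coupling when $E$ depends on the internal randomness or is decided partway through a run. In that case I would define $v$ as a function of the full randomness $\xi$ of $v'$, together with any extra independent randomness $\xi'$ that $E$ may use. Concretely, $y_v(D,\xi,\xi')=\text{defer}$ on $E$ and $y_{v'}(D,\xi)$ otherwise. The marginal law of $(D,\xi)$ is the same as for $v'$, so $e(v',\theta)$ is unchanged. This also covers a run that defers midway: it can be coupled with the completed run of $v'$ by continuing $v'$'s randomness to the end, because the decision of $v'$ on those runs plays no role once $v$ defers. With this convention the inclusion above holds pointwise and the argument needs no measurability beyond that of $E$.
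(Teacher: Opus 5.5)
Your proof is correct and is the paper's argument: on a common run, $\{Y_v=\text{reject}\}=\{Y_{v'}=\text{reject}\}\cap E^{c}\subseteq\{Y_{v'}=\text{reject}\}$, and taking $\Pr_\theta$ of both sides gives $e(v,\theta)\le e(v',\theta)$. Your paragraph on coupling through $(D,\xi,\xi')$ spells out what the paper leaves implicit in the phrase ``on a common run''.
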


\begin{proof}
On a common run, $\{Y_v=\text{reject}\}=\{Y_{v'}=\text{reject}\}\cap E^{c}
\subseteq\{Y_{v'}=\text{reject}\}$.
\end{proof}

Handing a study to a human never raises the Type I error rate of the decisions the
analyst makes on its own. A stricter deferral rule, a new reason to ask for
review and a human who can stop a rejection therefore need no validation. A
human who can also turn a deferral into a rejection forms a new decision
procedure together with the analyst. That procedure can be validated as a
version in its own right.

\subsection{Containment in the rejection region of the reference analysis}

\begin{corollary}[Containment]
\label{cor:domination}
Suppose $v$ rejects only on runs on which the reference analysis rejects, that
is $\{Y_v=\text{reject}\}\subseteq\{Y_{v_0}=\text{reject}\}$ for every data set and every
value of the internal randomness. Then $e(v,\theta)\le e(v_0,\theta)$ in every
scenario. For the reference analysis of Theorem~\ref{thm:cont-ref},
$e(v,\theta)\le\alpha_0$ in every Gaussian null scenario, with no simulation and
no share of the lifetime risk.
\end{corollary}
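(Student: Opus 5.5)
The argument is a pointwise inclusion of rejection events followed by taking probabilities. I would fix a scenario $\theta$ and couple $v$ and $v_0$ on a common probability space: draw the study data $D$ from $\Pr_\theta$, and draw the internal randomness $\xi$ of both procedures jointly, with each marginal distribution being the one its version prescribes. The hypothesis $\{Y_v=\text{reject}\}\subseteq\{Y_{v_0}=\text{reject}\}$ is stated for every data set and every value of the internal randomness, so it holds on this common run. Monotonicity of probability then gives $e(v,\theta)=\Pr_\theta\{Y_v=\text{reject}\}\le\Pr_\theta\{Y_{v_0}=\text{reject}\}=e(v_0,\theta)$. This is the same one-line argument as in Proposition~\ref{prop:escalation} and Proposition~\ref{prop:envelope}. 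In fact, it is Proposition~\ref{prop:envelope} applied with the single allowed rejecting analysis $v_0$.

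The only step needing care is the coupling. The marginal law of $Y_v$ under $\Pr_\theta$ is determined by the version alone, through \eqref{eq:decision}, so computing $e(v,\theta)$ on the joint space gives the same value as computing it on $v$'s own space. The same holds for $v_0$. The reference analysis of Theorem~\ref{thm:cont-ref} is a deterministic function of the data. In that case the coupling is trivial: $\xi$ is simply $v$'s own randomness, and the inclusion reads ``for every $\xi$, $v$ rejects only if $y_{v_0}(D)$ rejects.'' I expect no real obstacle. At most, one must note that ``runs'' refers to this joint realization.

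For the second claim, I would take $\theta$ to be a Gaussian null scenario of model \eqref{eq:crossover} with dropout that does not depend on the responses. Theorem~\ref{thm:cont-ref} gives $e(v_0,\theta)\le\alpha_0$, and chaining this with the first part yields $e(v,\theta)\le\alpha_0$. This bound is proved rather than estimated, so no simulated studies are drawn, and no event of positive probability is involved. Consequently no share $\delta_r$ or $\delta_0$ is spent: one may take $b_\theta=\alpha_0$ deterministically, and the event $E_0$ holds on these scenarios with probability one.
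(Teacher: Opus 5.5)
Your proof is correct and matches the paper's: monotonicity of probability under the pointwise inclusion gives $e(v,\theta)\le e(v_0,\theta)$, and Theorem~\ref{thm:cont-ref} gives the bound $\alpha_0$ on the Gaussian null scenarios. Your aside that this is Proposition~\ref{prop:envelope} in disguise holds only when $v_0$ is a fixed function of the data, but nothing in your argument depends on it.
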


\begin{proof}
Probability is monotone under inclusion, and Theorem~\ref{thm:cont-ref} gives
the second statement.
\end{proof}

Consider a version that keeps the reference analysis as the last step before
every rejection and adds only reasons to defer or to retain the null
hypothesis. Such a version is valid on the whole Gaussian null family by
construction. Containment is proved from the definition of the version, and it
then holds exactly in every scenario, whether validation uses it or not. When
containment holds but has not been proved, such a version passes the paired
bound with few simulated studies on the validation scenarios, because it adds
no rejections (Theorem~\ref{thm:cost-rel}(i)).

\subsection{Choosing among validated analyses}

\begin{proposition}[Routing among validated analyses]
\label{prop:routes}
Let $\{\phi_\lambda:\lambda\in\Lambda\}$ be a countable family of analyses with
$e(\phi_\lambda,\theta)\le a$ for every $\lambda$ and every $\theta$ in a set
$\mathcal{S}\subseteq\Theta_0$. Let $v$ compute an index $\hat\lambda=\lambda(I)$
from routing information $I$ and reject when $\phi_{\hat\lambda}$ rejects.
Suppose that in every scenario $\theta\in\mathcal{S}$ the routing information
$I$ is independent of the statistics from which the analyses decide, together
with all internal randomness that they use. Then $e(v,\theta)\le a$ for every $\theta\in\mathcal{S}$.
\end{proposition}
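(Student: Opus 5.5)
The plan is to condition on the routing information $I$ and decompose the rejection event by the value of the index $\hat\lambda$. Fix a scenario $\theta\in\mathcal{S}$. Write the decision of $v$ as a rejection exactly when $\phi_{\hat\lambda}$ rejects, where each $\phi_\lambda$ decides from its statistics $T$ and the internal randomness $\xi$ they use. Since $\Lambda$ is countable, the rejection event of $v$ is the disjoint union over $\lambda\in\Lambda$ of the events $\{\lambda(I)=\lambda\}\cap\{\phi_\lambda\text{ rejects}\}$. Countable additivity then gives
\[
  e(v,\theta)=\sum_{\lambda\in\Lambda}\Pr\nolimits_\theta\bigl\{\lambda(I)=\lambda,\ \phi_\lambda\text{ rejects}\bigr\}.
\]

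The key step is the independence hypothesis. The event $\{\lambda(I)=\lambda\}$ is measurable with respect to $I$. The event $\{\phi_\lambda\text{ rejects}\}$ is measurable with respect to $(T,\xi)$, because each analysis decides from these statistics and internal randomness alone. Under $\Pr_\theta$ these two $\sigma$-fields are independent, so each summand factors as $\Pr_\theta\{\lambda(I)=\lambda\}\,e(\phi_\lambda,\theta)$. Here $e(\phi_\lambda,\theta)$ is the unconditional rejection probability of $\phi_\lambda$, since running $\phi_\lambda$ on the same data with the same internal randomness is exactly the analysis whose Type I error rate is assumed at most $a$. Bounding each factor by $a$ and summing the routing probabilities to one yields $e(v,\theta)\le a$. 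Since $\theta\in\mathcal{S}$ was arbitrary, the bound holds on all of $\mathcal{S}$.

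The main obstacle is measure-theoretic bookkeeping rather than any inequality. It has two parts. First, I must make precise that ``$\phi_\lambda$ rejects'' is a function of the same statistics and randomness for every $\lambda$ simultaneously. If analyses used separate randomness, I would take $\xi$ to be the joint collection $(\xi_\lambda)_\lambda$, which the hypothesis already declares independent of $I$. Second, I must ensure that the randomness used to compute $\hat\lambda$ belongs to $I$ and not to $(T,\xi)$. I would state this explicitly, since otherwise the routing could peek at the analysts' coins.

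Finally, I would remark that countability is used only for the disjoint decomposition. I would also remark that dropping independence brings back Theorem~\ref{thm:nonclosure}: routing by the data can reach $\min(1,K a)$.
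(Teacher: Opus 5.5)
Your proof is correct and is essentially the paper's argument: split the rejection event by the value of $\hat\lambda=\lambda(I)$, factor each term using the joint independence of $I$ from the statistics and all internal randomness of the analyses, and bound each $e(\phi_\lambda,\theta)$ by $a$ while the routing probabilities sum to one. Your insistence that $\xi$ be the joint randomness of all the analyses is exactly the point of the paper's remark with independent $W_1,\dots,W_K$; in your closing aside, note that the construction in Theorem~\ref{thm:nonclosure}(ii) routes on internal randomness rather than on the data.
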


\begin{proof}
Each $\phi_\lambda$ is a function of these statistics and of the internal
randomness of the analyses, which are independent of $I$ and hence of
$\hat\lambda$, so
\[
  e(v,\theta)=\sum_{\lambda\in\Lambda}
  \Pr\nolimits_\theta\{\hat\lambda=\lambda\}\,\Pr\nolimits_\theta\{\phi_\lambda=\text{reject}\}\le a .
  \qedhere
\]
\end{proof}

Independence of the routing information from the data alone is not enough. Let
$W_1,\dots,W_K$ be independent uniform variables, independent of the data, and
let $\phi_k$ reject when $W_k<\alpha$. The information $I=(W_1,\dots,W_K)$ is
independent of the data, yet the rule that picks the first $k$ with
$W_k<\alpha$ rejects with probability $1-(1-\alpha)^K$.

Routing on information fixed before the data, such as the design declared in
the protocol, satisfies the condition. So does routing on an ancillary
statistic, one whose distribution is free of the unknown parameters, provided
the analyses are functions of a complete sufficient statistic. By Basu's
theorem, the routing statistic and the complete sufficient statistic are then
independent. In the Gaussian crossover model, for example, consider
level-$\alpha_0$ analyses that
use the data only through $(\bar d_1,\bar d_2,s_d)$. A diagnostic computed from
the standardized residuals of the within-subject differences may choose among
them, and the Type I error rate stays at most $\alpha_0$.

A routing family, the family of analyses of Proposition~\ref{prop:routes}, is
validated on $\mathcal{S}$ when each member is valid at level $\alpha'$ on
$\mathcal{S}$, by proof or by passing validation on the event $\mathcal{E}$ of
Corollary~\ref{cor:every-decision}.

\subsection{What each kind of change owes}
\label{sec:kinds}

Table~\ref{tab:modifications} sorts changes into six kinds by the evidence
they owe. Versions deployed through rows 1 to 5 inherit validity on the event
$\mathcal{E}$ of Corollary~\ref{cor:every-decision}. This holds because each
row bounds the Type I error
rate of the new version by that of a version already known to be valid on the
same scenarios. That version is the one being changed, the any-path rule of
the allowed paths, the reference analysis when its bounds are at most
$\alpha'$, or a validated family of analyses. The new version $v$ inherits a
scenario set $\mathcal{S}(v)$, which enters condition (i) of
Corollary~\ref{cor:every-decision}. This set is $\mathcal{S}(v')$ for a change
of $v'$ in row 1 or row 3, and $\mathcal{S}(v_{\mathcal{R}})$ for a change in
row 2. For a version in row 4 it consists of the scenarios in force at its
deployment with $b_\theta\le\alpha'$. For a version in row 5 it is the set on
which every member of the family is validated.
Corollaries~\ref{cor:every-decision} and~\ref{cor:frequency} therefore cover
every version the analyst ever runs. A change in row 6 is validated as an
update in the sense of section~\ref{sec:lifetime}.

\begin{table}[!htbp]
\centering\small
\caption{Kinds of change and the evidence each owes before deployment. Changes
in rows 1 to 5 need no new simulation and spend no share of the lifetime
risk.}
\label{tab:modifications}
\renewcommand{\arraystretch}{1.15}
\begin{adjustbox}{max width=\textwidth}\begin{tabular}{@{}p{0.04\textwidth}p{0.25\textwidth}p{0.25\textwidth}p{0.20\textwidth}p{0.16\textwidth}@{}}
\toprule
& kind of change & examples & evidence owed & result \\
\midrule
1 & cannot change any decision & report formatting, logging, an equivalent
implementation & a proof that every path to a decision runs identical code on
identical inputs. An identity that is only tested goes through row 6, where its
validation needs few simulated studies & same Type I error rate \\
2 & keeps the allowed paths and their decision functions & model swap, prompt
edit, sampling temperature, provider update & none beyond the validation of
the any-path rule & Prop.~\ref{prop:envelope} \\
3 & can only replace a rejection by a deferral & stricter deferral rule, a new reason
for review, a human veto & none & Prop.~\ref{prop:escalation} \\
4 & rejects only where the reference analysis rejects, proved from its
definition & a deferral path placed before the reference analysis, a check
that can only veto & none, valid on the whole Gaussian null family at $\alpha_0$ &
Cor.~\ref{cor:domination}, Thm.~\ref{thm:cont-ref} \\
5 & chooses among validated analyses by information independent of the
statistics they use and of their randomness & the analysis chosen by the design declared in the
protocol & validation of the family & Prop.~\ref{prop:routes} \\
6 & anything else & a new exclusion rule, a switch between analyses after
seeing data, a second stage of sampling, a new input & validation with the
combined bounds on newly simulated studies at the share of the update, with
scenarios added for new inputs & Thm.~\ref{thm:lifetime} \\
\bottomrule
\end{tabular}\end{adjustbox}
\end{table}

\FloatBarrier

\section{The procedure}
\label{sec:protocol}

Algorithm~\ref{alg:protocol} collects the pieces of the preceding sections in
one procedure. It repeats Algorithm~1 of the main text line by line and adds two
details, the random scenarios of section~\ref{sec:random} in line 8 and the
report of the deciding scenario in line 16. Each of its lines delivers a
condition of the results above.

\begin{algorithm}[!htbp]
\caption{Continual validation}
\label{alg:protocol}
\begin{algorithmic}[1]
\Require Type I error level $\alpha$, lifetime risk $\delta$, reference risk
  $\delta_0$ with the levels $\delta_0'$ and $\delta_{0,k}$ of the reference
  bounds, initial null scenarios $\mathcal{S}_0$, reference analysis $v_0$ with
  bounds $b_\theta$, the law $\Pi$ and the numbers $K_r\ge0$ of random scenarios
\State Set $\alpha'\gets(\alpha-\delta)/(1-\delta)$ and $\mathcal{S}\gets\mathcal{S}_0$
\State Deploy $v_0$, or a version that rejects only where $v_0$ does, if
  $b_\theta\le\alpha'$ for every $\theta\in\mathcal{S}_0$. Otherwise, start with a
  version that defers every study.
\For{each proposed update $r=1,2,\ldots$}
  \State Freeze the updated version $v_r$
  \State Add scenarios to $\mathcal{S}$ for any new inputs used by $v_r$, each with
    a reference bound $b_\theta$ that is analytic, validated at its level
    $\delta_{0,k}$, or $1$
  \State Set $\mathcal{S}_r\gets\mathcal{S}$
  \State Set $\delta_r\gets6(\delta-\delta_0)/(\pi^2r^2)$ and fix the number $m_r$
    of datasets per scenario
  \State Draw $K_r$ scenarios from $\Pi$ with analytic or trivial bounds and add
    them to $\mathcal{S}_r$ for this round only
  \State Simulate $m_r$ new datasets from each $\theta\in\mathcal{S}_r$
  \State Run $v_r$ and $v_0$ on the same datasets
  \State For each $\theta\in\mathcal{S}_r$, count rejections $X_\theta$ of $v_r$,
    added rejections $N_\theta^{+}$, and removed rejections $N_\theta^{-}$
  \State $B^{\mathrm{pair}}_\theta\gets b_\theta+U(N^{+}_\theta;m_r,\delta_r/4)
    -L(N^{-}_\theta;m_r,\delta_r/4)$
  \State $B^{\mathrm{dir}}_\theta\gets U(X_\theta;m_r,\delta_r/2)$
  \State $B_\theta\gets\min(B^{\mathrm{pair}}_\theta,B^{\mathrm{dir}}_\theta)$
  \State If $B_\theta\le\alpha'$ for every $\theta\in\mathcal{S}_r$, deploy $v_r$ for
    every study whose data arrive from now on. Otherwise, retain the current
    version.
  \State Report the deployment decision and the scenario with the largest $B_\theta$ to the
    proposer
\EndFor
\end{algorithmic}
\end{algorithm}

All inputs, including the levels of the reference bounds
(Corollary~\ref{cor:lifetime-ref}) and the law and numbers of the random
scenarios (Proposition~\ref{prop:random}), are fixed before the working life of
the analyst begins. Line 2 puts the reference analysis, or a version that
rejects only where it rejects (Corollary~\ref{cor:domination}), into service only
if the reference bounds are at most $\alpha'$ on the initial scenario set, as
Corollary~\ref{cor:every-decision} requires. Otherwise the analyst defers every
study until its first update passes validation. Any other first version is
proposed as update 1 and validated like every later update, so it needs no
share outside $\delta_0+\sum_r\delta_r$. The reference bounds come from
Theorem~\ref{thm:cont-ref} on Gaussian null scenarios and from the one-time
validation of the reference analysis on scenarios built from real studies
(Corollary~\ref{cor:lifetime-ref}).

Lines 4 to 7 are the freeze of section~\ref{sec:lifetime}. Line 4 fixes the
updated version as a complete specification, including model, instructions,
tools, code and thresholds. Line 5 is the extension of
section~\ref{sec:extension}, which the validator decides, and line 6 fixes the
scenarios of the round. Line 7 fixes the share and the number of simulated
datasets. Line 8 draws the random scenarios after the freeze, independently of
everything before, and they serve this round only
(Proposition~\ref{prop:random}). Only then does line 9 draw the simulated
datasets, with randomness that is new and independent of everything the proposer
has seen. These lines deliver conditions (a) and (b) of
Theorem~\ref{thm:lifetime}. The share is indexed by the attempt $r$, so an
update that fails validation spends its share as well, and the shares sum to
$\delta-\delta_0$, which is condition (c).

Lines 12 to 15 compute the bounds of Theorem~\ref{thm:relative} and decide by
the intersection-union rule of Theorem~\ref{thm:iut}, with no correction across
scenarios. The number $m_r$ may grow with $r$ to keep the probability that a
valid update passes high (Proposition~\ref{prop:power}). A round never stops
early, because the exact binomial bounds assume a fixed number of datasets, and
optional stopping would invalidate them.

Line 15 deploys a version that passes validation for the studies whose data
arrive after the deployment decision, which is condition (ii) of
Corollary~\ref{cor:every-decision}. Line 16 returns the deployment decision and the scenario
with the largest bound. Because the next round draws new datasets, the proposer
may learn from the deployment decision, from the bounds and even from the simulated datasets
of the round.

Changes in rows 1 to 5 of Table~\ref{tab:modifications} may be deployed without
lines 7 to 15 and spend no share, whereas every other change goes through
validation.

\FloatBarrier

\section{Example 1, a bioequivalence analysis agent}
\label{sec:exp-be}

Both examples ask one question. While a language model keeps updating an AI
analyst to make it more useful, which deployment methods keep the Type I error
rate of the analyst at the chosen level, and which of them still deploy useful
updates? This section presents the first example, an agent that analyzes
bioequivalence studies. It states the hypothesis test that the agent decides,
how the agent reaches its decision, what an update can change and how studies
are simulated. It then gives the design that both examples share, the metrics
and the results. The second example, an analyst of product experiments,
follows in section~\ref{sec:exp-launch}.

\subsection{The hypothesis test}
\label{sec:exp-be-test}

A bioequivalence study is one part of the evidence on which a generic drug is
approved. It tests whether the generic, the test product, delivers the same
exposure as the reference product. Healthy volunteers receive both
formulations in a crossover study, in which each subject receives both in
different periods. In a replicate study the reference, or both formulations,
are given twice. Exposure is summarized by the area under the concentration
curve (AUC) and by the peak concentration (Cmax). With $\mu$ the log-ratio of
the geometric mean exposures of test and reference, as in
section~\ref{sec:framework}, the null hypothesis $H_0$ is that $|\mu|\ge\Delta$
with $\Delta=\log1.25$, so that the true ratio lies at or beyond 80.00 or
125.00 percent and the products are not equivalent. The alternative $H_1$ is
that $|\mu|<\Delta$. The products are declared equivalent, and $H_0$ is
rejected, when the 90 percent confidence interval of the ratio lies within
80.00 to 125.00 percent \cite{si:ema2010bioequivalence,si:ich2024m13a}. This is
the two one-sided tests at level 0.05. A Type I error is a false conclusion of
equivalence, and its probability is known in bioequivalence as the consumer
risk.

A drug is highly variable when the within-subject coefficient of variation of
its reference formulation exceeds 30 percent. For such a drug the EMA widens
the limits for Cmax to $\exp(\pm0.760\,s_{wR})$, where $s_{wR}$ is the
observed within-reference standard deviation on the log scale. The widening
stops at 69.84 and 143.19 percent, which are reached at a coefficient of
variation of 50 percent, and the point estimate of the ratio must still lie
within 80.00 to 125.00 percent \cite{si:ema2010bioequivalence}. For a highly
variable drug the null hypothesis uses the limits that the true
within-reference standard deviation implies, as
section~\ref{sec:exp-be-scenarios} states. The EMA's average bioequivalence
with expanding limits (ABEL) and the FDA's reference-scaled average
bioequivalence (RSABE) switch to scaled limits when the observed coefficient of
variation exceeds 30 percent. This switch is known to inflate the Type I error
rate when the true variability lies near 30 percent
\cite{si:labes2016inflation,si:wonnemann2015inflation,si:munoz2016consumer}.

\subsection{How the agent decides}

We built the agent as a statistical-agent harness for planning and analyzing
bioequivalence studies. Claude Code skills read written playbooks, whose
decision tables name the next command, and deterministic Python and R tools
compute and record every number. The software that computes the results calls
no language model. Its standard analysis is unscaled average bioequivalence for
a $2\times2$ crossover. A set of extended methods, still under development,
also covers parallel and replicate designs, with the FDA's RSABE and its switch
to unscaled average bioequivalence, the EMA's ABEL by Method A, the FDA's
criteria for narrow-therapeutic-index drugs and the EMA's narrowed limits.
These methods are checked against the EMA's two published replicate datasets
and against independent results computed in base R. The bioequivalence agent is available at
\url{https://be.statrust.ai/}.
The simplified version used in our experiments
will be released with the study code at
\url{https://github.com/xiaojing777/lifetime-valid-ai-analysts}.

For each study the agent takes the same steps. It prepares the data with a
prespecified analysis-set rule, complete case or available case, and records
every omitted record with its reason. It flags outlying subjects with a
prespecified rule, by default the rule of replicateBE
\cite{si:schutz2022replicatebe}, which flags a subject when one of its
externally studentized residuals in the outlier model of the method lies beyond
Tukey's hinges by more than twice the distance between the hinges. A flagged
subject is never removed, and the result without the flagged subjects is
reported beside the primary result. The agent then applies the method of its
configuration. It decides on the bounds of the confidence interval in percent,
rounded half up to two decimals, and reports the unrounded result of every
criterion beside the rounded one. In this experiment every version decides on
the unrounded results, so that the reference analysis is exactly the two
one-sided tests.

The analyst of this example is the analysis part of the agent, run with a
configuration that the experiment adds. The configuration fixes every setting
that can change a decision, and the deterministic tools apply it to the data of
each study. A rejection of $H_0$ is a conclusion of equivalence. The language
model that drives the analysis chooses the next command from the decision
tables, and
every path that the playbooks allow ends either in the decision that the tools
compute under the configuration or in a deferral to a human. The only allowed
path that can reject is therefore the one that applies the configured decision
rule, and the any-path rule of the version is this decision rule
(Proposition~\ref{prop:envelope}). The validator runs it directly, so
validation needs no calls to a language model. The analyst defers to a human
any study for which the agent does not compute the configured combination of
settings. Section~\ref{sec:exp-be-agent} checks on simulated studies that the
agent reaches these decisions.

\subsection{Reference analysis, settings and new decision rules}
\label{sec:exp-be-settings}

The reference analysis $v_0$ is the agent's standard unscaled average
bioequivalence, the two one-sided tests at $\alpha_0=0.030$ on the
complete-case set, with outliers flagged but not removed and no deferral rule. On
the full replicate it uses the intra-subject contrast of each subject, the mean
test minus the mean reference log response, fitted on sequence. Equivalently,
it checks there whether the 94 percent confidence interval lies within 80.00 to
125.00 percent. The agent refuses it on the partial replicate, where $v_0$
therefore defers and never rejects. The reference level is lower than in Example 2
because the subject contrasts of real studies can be skewed. At 0.040 the two
one-sided tests reach a Type I error rate of about 0.052 on resampled studies of
dataset 5 with 24 subjects, above $\alpha'$, while at 0.030 they stay at or below
0.042 in every scenario of the base set.

On the Gaussian scenarios of section~\ref{sec:exp-be-scenarios}, the reference
bound is $b_\theta=\alpha_0$ in the full replicate and $b_\theta=0$ in the
partial replicate. These scenarios have equal within-subject variances for test
and reference, a subject effect common to both formulations and dropout that
does not depend on the responses. In the full replicate the contrast analysis is
then an exact $t$ test, and the proof of Theorem~\ref{thm:cont-ref} carries
over. In the partial replicate every rejection of an update is an added
rejection, so the combined bound of Algorithm~\ref{alg:protocol} equals the
direct bound.

On scenarios built from real studies, $b_\theta=0$ in the partial replicate for
the same reason. In the full replicate their reference bounds come from a
one-time validation at $\delta_0=0.001$ (Corollary~\ref{cor:lifetime-ref}).
Before the first update of each run, the validator simulates $10^6$ datasets in
every such scenario of the base set, 252 in all, and divides $\delta_0$ equally
among them. The extension adds only Gaussian scenarios, whose bounds are known. The
experiment then confirms that every reference bound of that run on the base set
is at most $\alpha'$, so that the starting version is valid there (Corollary~\ref{cor:every-decision}). A run whose check fails
starts from the version that defers every study
(Algorithm~\ref{alg:protocol}, line 2). Once, on the evaluation stream, the
experiment also confirms that the reference analysis counts as within $\alpha'$
on the full validation set, by the rule of section~\ref{sec:exp-metrics}.

An update may change one of six settings.
\begin{itemize}
\item Method. Unscaled average bioequivalence, or the EMA's ABEL by Method A,
  which widens the limits when the observed within-reference coefficient of
  variation exceeds 30 percent.
\item Analysis set. Complete case or available case.
\item Role of the outlier analysis. Reported beside the primary result, as in
  the agent, or reported instead of the primary result when it concludes
  equivalence and the primary result does not.
\item Level of the two one-sided tests. 0.030, 0.035, 0.040, 0.045 or 0.050,
  applied to every interval that the method uses.
\item Deferral by variability. None, or deferral when the observed
  within-reference coefficient of variation exceeds 0.30, 0.40 or 0.50.
\item Deferral of flagged files. On or off.
\end{itemize}
The menu of settings holds 320 configurations, and the reference analysis and
the starting version are two of them. The starting version is the reference
analysis with two deferral rules. It sends to a human every study whose
observed within-reference coefficient of variation exceeds 0.30 and every file
with a data-quality flag. Every study is treated as a Cmax study, so the
widening applies whenever the method calls for it.

An update may instead add a new decision rule, which the developer writes as a
Python function against a fixed interface (section~\ref{sec:exp-common}). A new
decision rule acts on top of the settings, for example as a new outlier
criterion or as a choice of model that depends on the data. It receives a batch
of simulated studies as arrays of subjects, sequences, periods, formulations,
log responses and data-quality flags, together with the flags of the agent's
prespecified outlier rule. It may call the estimators of the agent, through the
reimplementation described below, on any subset of the subjects of each study,
with any method, analysis set and level of the menu, and it may compute its own
intervals from quantiles of the $t$ distribution. It returns one of three
decisions for every study, equivalence shown, equivalence not shown, or
deferral to a human.

\subsection{Scenarios, workload and simulation sizes}
\label{sec:exp-be-scenarios}

Validation scenarios are built from ten reference datasets of the R package
replicateBE \cite{si:schutz2022replicatebe,si:schutz2020reference}. These are its
reference datasets 1, 2, 4, 5, 10, 11, 16, 23, 24 and 26, the ten that come
from real studies. They come in several replicate designs. Every scenario is
simulated in one of two replicate designs that the agent analyzes, the
four-period full replicate TRTR/RTRT and the three-period partial replicate
TRR/RTR/RRT. The agent refuses some configurations in these designs,
standard average bioequivalence on the partial replicate and the available-case
set for standard average bioequivalence on the full replicate, and the analyst
then defers the study to a human. A simulated subject is a real subject drawn
with replacement, whose centered responses are placed in the
periods of its simulated sequence. Each published dataset is simulated in one design, the
full replicate when its subjects have two test and two reference responses and
the partial replicate otherwise. The $k$th test and $k$th reference response of
a resampled subject fill its $k$th test and $k$th reference period, and the
partial replicate uses the first test response. A subject is eligible for
resampling when it has the observations that the design of its dataset needs,
two test and two reference responses for the full replicate, and one test and
two reference responses for the partial replicate. The design of each dataset and its
number of eligible subjects were fixed before the experiment. Before resampling, the log responses are centered within each
sequence and period. This removes the period, sequence and formulation effects
and keeps the spread between and within subjects.

The resampling is a smoothed bootstrap. It adds independent Gaussian noise to
every log response, with standard deviation one tenth of the residual standard
deviation of the EMA's Method A model fitted to the original dataset. The
noise removes exact ties between resampled subjects. The test responses are
then shifted by the true log-ratio of the scenario.

A scenario has $n\in\{24,36,48\}$ subjects, split evenly over the sequences,
and dropout of 0, 10 or 20 percent. A subject drops out with this probability,
independently of its responses, after a period drawn uniformly from all periods
but the last, and its later periods are missing. In every simulated study, null
or not, the file carries a data-quality flag with probability 0.15,
independently of the responses.

The true within-reference variance $\sigma_{wR}^2$ of a scenario is that of its
resampling distribution. It equals half the mean squared difference between
the two centered reference responses of a subject, averaged over the subjects
who receive the reference twice, plus the variance of the added noise. The
true coefficient of variation is $\{\exp(\sigma_{wR}^2)-1\}^{1/2}$. The true
geometric mean ratio lies at the regulatory limit for this variability, on
either side, or 0.05 beyond it on the log scale. The limit is 80.00 or 125.00
percent when the true coefficient of variation is at most 30 percent.
Otherwise it is the EMA's widened limit $\exp(\pm0.760\,\sigma_{wR})$, which
stays at 69.84 or 143.19 percent from a coefficient of variation of 50 percent
upward. Every scenario therefore belongs to the null hypothesis that the true
ratio lies at or beyond the regulatory limit for the true variability. Validation
holds every version to this null hypothesis. The
base set, which is the initial scenario set $\mathcal{S}_0$ of
Algorithm~\ref{alg:protocol}, crosses every dataset, in its design, with three
sample sizes, three dropout rates and these four ratios. It holds 36 scenarios
per dataset, 360 in all.

Validation adds scenarios when an update reads information that the base set
does not vary finely enough (section~\ref{sec:extension}). An update reads the
observed variability when it uses reference-scaled limits or deferral by the
coefficient of variation. Validation then adds Gaussian scenarios with true
within-reference coefficients of variation of 0.25, 0.28, 0.30, 0.32, 0.35,
0.40 and 0.50. They use the four-period full replicate TRTR/RTRT and the
three-period partial replicate TRR/RTR/RRT, with the sample sizes, dropout
rates and four placements of the true ratio of the base set. Their errors have
equal variances for test and reference. Their subject effects are common to
both formulations, with standard deviation 0.5 on the log scale, and they have
no period or sequence effects. These are 504 scenarios. Because the starting
version defers by the observed coefficient of variation, every update that
keeps this deferral is validated on these scenarios.

The available-case set reads incomplete records, whose dropout the base set
already varies, so it adds no scenarios. A new decision rule reads the data and can compute
the observed variability, so every new decision rule is validated on the
Gaussian scenarios above. Such a rule also sees the allocation of subjects to sequences and
the periods, which the scenarios above keep balanced and free of period
effects. Every new decision rule is therefore also validated on 72 Gaussian
scenarios that cross the two designs, the three sample sizes with two thirds of
the subjects in the first sequence and the rest split evenly over the others,
the three dropout rates and the four placements of the true ratio, at a true
within-reference coefficient of variation of 0.30, with log responses that rise
by 0.05 from each period to the next. Once added, a scenario stays in the set.
The full validation set, the base set together with both kinds of added
scenarios, holds 936 scenarios.

Off-grid scenarios, which no deployment method ever simulates, are built in the same way and
form five families. The first uses five further reference datasets of
replicateBE, numbers 7, 8, 22, 25 and 28, which validation never uses. It keeps the sample sizes and dropout
rates of the base set and places the true ratio at either limit. The authors of
the package simulated these five, since its ten datasets from real studies all
serve validation. The second uses the ten datasets with 30, 42 or 60 subjects,
split as evenly as the sequences allow. The third uses them with dropout of 5,
15 or 25 percent. The fourth places the true ratio 0.025 beyond either limit on
the log scale, between the two placements of the base set. The fifth applies to
these five datasets dropout that depends on the first-period response. Only
subjects whose first-period log response lies below the 20th or above the 80th
percentile of their sequence drop out, each with probability equal to the
dropout rate divided by the share of its sequence in these tails, so that the
overall rates are 10 and 20 percent. Factors that a family does not change keep the values
of the base set, and the true ratio lies at either limit unless the family sets
it. These families hold 690 scenarios, and each of them differs from every
validation scenario in at least one factor.

Power is measured on a workload of truly equivalent studies. It crosses the ten
datasets, the three sample sizes and the three dropout rates of the base set
with true ratios of 0.95 and 1.00, which gives 180 cells of equal weight. Power
is the probability that the deployed version concludes equivalence without
deferral, averaged over the cells. The usefulness metric of the performance report
is the share of workload studies that the agent decides without a human,
together with its power. It is estimated each round from 10,000 new workload
studies, and the experiment reports it alongside the five metrics. The
performance report and the improvement test draw each workload study from a cell
chosen uniformly at random.

Continual validation draws \SimBE{} datasets per scenario per update. The static
benchmark stores \SimBE{} datasets per scenario of the base set, and the
improvement test draws \SimBE{} new workload studies per round. Learn then
Test tests the 320 configurations once on \SimBE{} datasets per scenario, after
a pilot of one tenth that size.

\subsection{Design shared by both examples}
\label{sec:exp-common}

A language model acts as the developer of the analyst. It is Claude Sonnet 5.5
(claude-sonnet-5-5), with its default sampling and the medium setting of its
reasoning effort. Before every
round the developer receives four inputs. These are the description of the
analyst's settings, the current configuration, a performance report and
feedback on its earlier updates. The performance report gives the usefulness
metric of the example and the power of the deployed version on simulated
alternatives. The feedback states for every earlier update whether it was
deployed, together with the bound or estimate on which the deployment method
decided.

The developer proposes one update per round. An update either changes one
setting or adds a new decision rule, a Python function written against a
fixed, typed interface. Such a rule is deterministic, makes no network calls
and runs in a sandbox. The instructions to the developer are identical under every
deployment method and never name the method. They describe the analyst, its
settings and the interface, give the usefulness metric of the example as the
goal, and ask for exactly one update in a fixed machine-readable format. Each
round is one call to the model with the full history of the run. For every
deployment method and in each example, the experiment performs \NumRuns{}
independent runs of \NumRounds{} rounds. Every run starts from a version that
needs no validation of its own. In Example 2 it is the reference analysis. In
Example 1 it is the reference analysis with two deferral rules
(section~\ref{sec:exp-be-settings}). This version rejects only where the
reference analysis rejects, so its Type I error rate is at most that of the
reference analysis in every scenario (Corollary~\ref{cor:domination}).

A version consists of a configuration of the settings and at most
one new decision rule. Without such a rule the configured analysis decides,
and with one the rule decides. A rule may call the configured analysis and any
other analysis of the menu. A new rule replaces the earlier one, and an update
may also remove the rule. The interface is vectorized, so a rule receives a
batch of simulated datasets and returns one decision for each. A rule must
decide a batch of 10,000 datasets within ten seconds on one processor core,
including the analyses it requests.
An update is malformed when it cannot be parsed, misses this time limit, or
returns different decisions when it is run twice on the same batch. A
malformed update is not deployed under any method, the feedback names the
reason, and its round still counts.

The same five deployment methods run in both examples, and all of them receive
the same developer.
\begin{itemize}
\item No validation. Every well-formed update is deployed without a
  check.
\item Static benchmark. Before the first update, the experiment simulates one set
  of null datasets, of the size that validation uses, and reuses it for every
  update. An update is deployed when the one-sided 95 percent Clopper and
  Pearson upper bound on its Type I error rate is at most $\alpha'$ in every
  scenario of this set. This is today's practice of re-checking each version on
  a stored benchmark at a fixed confidence level.
\item Improvement test. An update is deployed when a one-sided lower confidence
  bound on its gain in power over the current version is positive. The error
  budget is spread over updates, as in recent checks for self-improving agents
  \cite{si:sengupta2026selfevolving,si:sun2026reuse}. Power is estimated on new
  simulated alternatives, and the Type I error rate is not checked.
\item Learn then Test \cite{si:angelopoulos2025learn}. Before the first update,
  every configuration in the menu of settings is tested once, on
  null datasets simulated once, with familywise error control at $\delta$ by
  fixed-sequence testing. A proposed update is deployed only if its
  configuration lies in the validated set. Updates outside the menu, such as
  new decision rules, are never deployed.
\item Continual validation. Algorithm~\ref{alg:protocol} runs with the lifetime
  risk $\delta=0.005$, the validation level $\alpha'=0.0452$, the reference
  level $\alpha_0$ of the example (0.030 in Example 1 and 0.040 in Example 2) and the shares $\delta_r=6(\delta-\delta_0)/(\pi^2r^2)$.
  In both examples the reference analysis receives $\delta_0=0.001$.
\end{itemize}

The static benchmark covers the base set, which holds the validation scenarios in
force before the first update. It does not grow with later updates, because a
stored benchmark is fixed before the updates it will judge are known.

In round $r$ the improvement test runs the update and the current version on
the same $m$ new alternatives. It counts the alternatives on which only the
update rejects the null hypothesis without deferral, $G^{+}$, and those on
which only the current version does, $G^{-}$. The update is deployed when
$L(G^{+};m,\epsilon_r/2)-U(G^{-};m,\epsilon_r/2)>0$. Here $U$ and $L$ are the
exact one-sided Clopper and Pearson bounds, and $\epsilon_r=6\delta/(\pi^2r^2)$.
The probability that a run ever deploys an update that does not raise power is
then at most $\delta$.

Learn then Test tests every configuration on the base set and on the added
scenarios that its own settings call for, with as many null datasets per
scenario as validation draws. The null hypothesis of a configuration is that
its Type I error rate exceeds $\alpha'$ in some of these scenarios. In each
scenario, the exact $p$ value is the probability that a binomial count with
rate $\alpha'$ is at most the observed number of rejections, and the $p$ value
of the configuration is the largest of these over its scenarios. The tests run
in the order of the largest Type I error rate of each configuration over its
scenarios, from the smallest upward, estimated on a separate pilot simulation
with one tenth as many datasets per scenario, so the order does not depend on
the datasets of the test. Fixed-sequence testing then validates the
configurations in this order at level $\delta$ and stops at the first
configuration that fails.

Continual validation freezes each update, draws new null datasets in every
scenario in force, and runs the update next to the reference analysis on the
same datasets. It deploys the update only if every combined bound is at most
$\alpha'$, and every update goes through this check. The validator adds
scenarios when an update reads information that the scenarios in force do not
vary (section~\ref{sec:extension}). The runs below draw no random scenarios
(Proposition~\ref{prop:random}), so the off-grid scenarios measure how far
validation on fixed scenarios reaches, and section~\ref{sec:exp-be-random}
adds random scenarios. By Theorem~\ref{thm:lifetime}, the
probability that a run ever deploys an update above $\alpha'$ in a scenario on
which it was validated is at most $\delta$. An update whose Type I error rate
lies far enough below $\alpha'$ passes validation with high probability
(Proposition~\ref{prop:power}).

The feedback reports what each method computes. Under the static benchmark and
under continual validation, it gives the largest upper bound of the update and
the scenario that attains it. Under the improvement test it gives the lower
bound on the gain in power. Under Learn then Test it gives the largest upper
bound at level $\delta$ that the configuration reached in the one-time test, or
notes that the update lies outside the menu. When every update is deployed, the
feedback states only that the update was deployed.

The full validation set of an example holds every validation scenario,
including every scenario that the extension can add. The oracle serves only as a
reference line, because it needs the true Type I
error rate and the true power of every update, which the evaluation stream of
section~\ref{sec:exp-metrics} estimates. It deploys an update when all the
estimated rates of the update on the full validation set are at most $\alpha'$
and its estimated power exceeds that of the current version. It does not deploy
an update that counts as indeterminate (section~\ref{sec:exp-metrics}). The oracle runs with the same developer,
also for \NumRuns{} runs of \NumRounds{} rounds, and its feedback reports the
estimates on which it judged each update.

Every run draws its own random streams for each simulation it uses. These
include the stored set of the static benchmark, the one-time test of Learn then
Test and the one-time validation of the reference analysis. They also include
the new datasets of every round, so the runs are independent.

\subsection{Metrics and analysis plan}
\label{sec:exp-metrics}

Five metrics are fixed in advance and are identical in both examples. Metric
(a) is the largest true Type I error rate of the deployed version over the full
validation set, after each round. Metric (b) is the share of runs that ever
deployed a version above $\alpha=0.05$, the level that every decision must hold,
with a Wilson 95 percent interval, together with the same share for $\alpha'$. Metric
(c) is the true power of the deployed version after each round, together with
the ratio of its mean over runs to the mean of the oracle after the same round.
Metric (d) is the largest true Type I error rate of the version that continual
validation deployed, on off-grid null scenarios that no deployment method ever simulates, after
each round.
Metric (e) is the number of simulated datasets and the computing time of each
validation.

True rates are estimated in two stages on an evaluation stream that no deployment method
uses, with common random numbers across versions. The first stage draws 100,000
datasets per scenario for every version. The scenario with the largest first-stage
estimate, and every scenario whose first-stage interval contains $\alpha'$, are
then estimated again from \EvalSims{} datasets on an independent stream. The
second stage removes the upward bias of a maximum taken over many scenarios.
Off-grid scenarios go through the same two stages, with their own largest
first-stage estimate. The rate of a scenario is its second estimate where it
has one and its first otherwise, and metrics (a) and (d) take the largest of
these rates. Versions that recur across runs, such as configurations of the
menu, are evaluated once. True power is estimated from 100,000
datasets in each cell of the power workload of each example
(sections~\ref{sec:exp-be-scenarios} and~\ref{sec:exp-launch}).

A version counts as above $\alpha'$ when, in some scenario of the full
validation set, its interval lies above $\alpha'$. It counts as within
$\alpha'$ when its intervals lie below $\alpha'$ in every scenario of that set,
and as indeterminate otherwise. The same rule with $\alpha$ in place of
$\alpha'$ decides whether a version counts as above $\alpha$. The intervals of
both stages are exact
two-sided Clopper and Pearson intervals at level $0.01/(2K)$. Here $K$ is fixed
in advance as the number of scenarios, off-grid scenarios included, times the
largest possible number of evaluated versions, so that all intervals hold
together with probability at least 0.99. The first stage selects scenarios with
these intervals, and a scenario estimated twice is judged by its second
interval. Indeterminate versions are reported separately in metric (b) and are
never counted as within $\alpha'$.

The main comparison, fixed in advance, reads metric (b) together with metric
(c). With \NumRuns{} runs, a method without any failure has a Wilson upper limit of
about 16 percent, so metric (b) separates methods that fail often from methods that
rarely fail. The main text reports metric (b) at $\alpha$ as a percentage of
runs and power as the ratio of means after the last round. For no validation and for
continual validation it reports the largest value of metric (a) over all rounds
and runs, which is the largest Type I error rate of any deployed version over
the full validation set. For continual validation it also reports the largest
value of metric (d) over all rounds and runs. Theorem~\ref{thm:lifetime} gives the lifetime
risk of continual validation, at most $\delta$, so the experiment does not
estimate it. Metric (b) counts every scenario of the full validation
set, including scenarios on which a deployed version was never validated. For
continual validation the experiment therefore also reports metric (b) on the
scenarios on which each deployed version was validated, the event that
Theorem~\ref{thm:lifetime} bounds by $\delta$. For each method the experiment
also reports how many of the updates it received were truly above $\alpha'$,
and by how much each of them exceeded $\alpha'$. It reports the number of
indeterminate updates beside them. For every deployed version above $\alpha'$,
it names the scenarios in which the version is above, and whether the method
simulated them, so that each failure is traced to a scenario the method checked
or to one it never simulated. Metrics (a) and (d) are summarized
round by round by the median and the range over runs, and metric (c) by the
mean. The run with the highest metric
(a) after the last round is shown separately for each method.

\subsection{Implementation requirements}

Validation of one update analyzes \SimBE{} simulated studies in each of up to
936 scenarios. The validator therefore calls the estimators through a vectorized
reimplementation that runs on GPUs, and every simulation of the experiment, the
evaluation stream included, uses it. The reimplementation decides on unrounded
bounds, as every version of the experiment does. It reproduces the published
Method A results for the two EMA reference datasets, the point estimate, the 90
percent interval and the coefficient of variation to two decimals in percent
and the two outlying subjects of data set I, and the results for all 30
reference datasets of replicateBE. On simulated studies of both designs and both
analysis sets it agrees with a direct least-squares computation for each study
to within $10^{-9}$, and every configuration of the menu reaches the same
decision for every study.

\subsection{Results}
\label{sec:exp-be-results}

Table~\ref{tab:exp-be} summarizes the five deployment methods and the oracle. Under
every method the developer followed a similar path. In
its first rounds it raised the level of the two one-sided tests toward the
regulatory 0.05, switched to ABEL, or removed a deferral rule, and in later
rounds it wrote new decision rules. The twenty runs of each method followed twenty different
sequences of updates. Under continual validation the developer proposed \BEcvDistinct{}
distinct versions, \BEcvCodeShare{} percent of its updates were new decision
rules, and \BEcvRunsFinalCode{} of the twenty runs ended with such a rule. Under
Learn then Test it proposed \BEltCode{} new decision rules, none of which that
method could deploy.

\begin{table}[htbp]
\centering
\setlength{\tabcolsep}{4pt}
\caption{Example 1. For each method, the runs that ever deployed a version above
$\alpha=0.05$ and above $\alpha'$, with Wilson 95 percent intervals in percent,
the largest true Type I error rate of any deployed version, the mean true power
after the last round and its share of the oracle's, the updates received and how
many of them created a version above $\alpha$, and for the methods that simulate
null datasets the datasets of one validation in millions and its median time in
seconds on one GPU.}
\label{tab:exp-be}
\scriptsize
\begin{adjustbox}{max width=\textwidth}\begin{tabular}{lcccccccrr}
\toprule
Method & Runs above $\alpha$ & Runs above $\alpha'$ & Largest rate & Power & \% of oracle & Updates & Above $\alpha$ & Datasets (millions) & Seconds \\
\midrule
No validation & 20/20 (84 to 100) & 20/20 (84 to 100) & 0.098 & 0.86 & 129 & 200 & 188 &  &  \\
Static benchmark & 4/20 (8 to 42) & 6/20 (15 to 52) & 0.060 & 0.60 & 90 & 200 & 68 & 180 & 569 \\
Improvement test & 20/20 (84 to 100) & 20/20 (84 to 100) & 0.098 & 0.90 & 134 & 200 & 188 &  &  \\
Learn then Test & 0/20 (0 to 16) & 0/20 (0 to 16) & 0.045 & 0.35 & 52 & 197 & 36 &  &  \\
Continual validation & 0/20 (0 to 16) & 0/20 (0 to 16) & 0.045 & 0.55 & 82 & 198 & 32 & 447 & 1101 \\
Oracle & 0/20 (0 to 16) & 0/20 (0 to 16) & 0.044 & 0.67 &  & 196 & 40 &  &  \\
\bottomrule
\end{tabular}\end{adjustbox}
\end{table}

One run of continual validation shows how the developer used the feedback. Its
first update raised the threshold of the deferral by variability to 0.5, which
passed. Validation then refused ABEL with an upper bound of 0.0600 in resampled
studies of dataset 11. The developer wrote that the inflation of ABEL comes from the
switch between the standard and the widened limits near a coefficient of variation
of 30 to 40 percent, and its third update was a rule that applies ABEL but defers
the studies whose observed coefficient of variation lies between 0.25 and 0.45,
which passed with a bound of 0.0375. Two later increases of the level, to 0.04
and 0.035, were refused at 0.0558 and 0.0504. The developer then narrowed the
deferral zone of its rule to 0.28 to 0.40, which passed with a bound of 0.0452,
and its last update kept the current settings, with the comment that it saw
no safe change left that would add decisions. Its
final version reached a true power of 0.79 with a largest true Type I error rate
of 0.043.

No validation and the improvement test deployed a version above 0.05 in every
run, in most runs within the first two rounds, when the developer switched
to ABEL or raised the level. These versions exceeded 0.05 in the Gaussian scenarios
with a coefficient of variation from 0.28 to 0.40, where the switch of ABEL inflates
the error, in the scenarios added for new decision rules, and in resampled studies of
datasets 10 and 11. The static benchmark refused the updates that exceeded 0.05 in
datasets 10 and 11, because its stored set contains these datasets. It still deployed a version above 0.05 in
\BEfixedEver{} percent of runs and above $\alpha'$ in \BEfixedEverPrime{} percent.
Every scenario in which such a version exceeded 0.05 was a Gaussian scenario at a
coefficient of variation of 0.30, with or without unequal allocation and period
effects, and the stored base set contains none of them. Continual validation adds
exactly these scenarios when an update reads the observed variability or adds
a new decision rule.

Continual validation deployed no version above $\alpha'$ in any run, neither over
the full validation set nor on the scenarios on which each version was validated,
which is the event that Theorem~\ref{thm:lifetime} bounds. It judged \BEcvJudged{}
updates and deployed \BEcvDeployedN{}. Of the \BEcvRejectedN{} it refused, the
evaluation stream placed \BEcvRejAbove{} above $\alpha'$ and \BEcvRejIndet{} within
the Monte Carlo resolution of $\alpha'$, so validation refused no update that was
clearly valid. The versions with the highest power, such as ABEL at a lowered level for
highly variable studies, have rates within a few thousandths of $\alpha'$, and
certifying them takes more simulated datasets in later rounds, where the shares
$\delta_r$ are small. With 100{,}000 datasets per scenario, continual validation
refused \BEcvRejValidSmall{} valid updates and attained \BEcvPowerShareSmall{}
percent of the oracle's power, and with \SimBE{} it attained \BEcvPowerShare{}
percent. On the 690 off-grid scenarios its deployed versions stayed at or below
\BEoffGridMax. Every off-grid rate above 0.05 came from dropout that depends on
the first-period response in datasets 22 and 28, which validation never uses, a mechanism that
the validation scenarios of this example do
not contain. Each validation simulated
about \BEcvDatasetsPerVal{} million studies and took a median of
\BEcvSecondsPerVal{} seconds on one GPU, and the one-time validation of the
reference analysis took a median of \BErefSeconds{} seconds per run.

Two open-weight models, Qwen 3.8 with 27 billion parameters and the
32-billion-parameter distilled version of DeepSeek-R1, served as developers through Purdue's GenAI Studio under three of the
methods, with the same instructions and 100{,}000 datasets per scenario. Their
replies were malformed in \OWqwenMalformed{} and \OWdeepseekMalformed{} percent of
rounds, and none of the methods deployed a malformed update. Table~\ref{tab:exp-be-open} shows
that no validation and the improvement test again deployed versions above
0.05 in every run, while continual validation kept every deployed version within
the level.

\begin{table}[htbp]
\centering
\caption{Example 1 with open-weight developer models and 100{,}000 datasets per
scenario. Columns as in Table~\ref{tab:exp-be}, for the methods run with each
model.}
\label{tab:exp-be-open}
\scriptsize
\begin{adjustbox}{max width=\textwidth}\begin{tabular}{llccccrr}
\toprule
Developer & Method & Runs above $\alpha$ & Runs above $\alpha'$ & Largest rate & Power & Updates & Above $\alpha$ \\
\midrule
Qwen 3.8 27B & No validation & 20/20 (84 to 100) & 20/20 (84 to 100) & 0.236 & 0.96 & 183 & 152 \\
 & Improvement test & 20/20 (84 to 100) & 20/20 (84 to 100) & 0.301 & 0.96 & 181 & 143 \\
 & Continual validation & 0/20 (0 to 16) & 0/20 (0 to 16) & 0.043 & 0.57 & 192 & 32 \\
DeepSeek-R1 32B & No validation & 20/20 (84 to 100) & 20/20 (84 to 100) & 0.098 & 0.74 & 147 & 104 \\
 & Improvement test & 20/20 (84 to 100) & 20/20 (84 to 100) & 0.098 & 0.74 & 150 & 104 \\
 & Continual validation & 0/20 (0 to 16) & 0/20 (0 to 16) & 0.043 & 0.26 & 175 & 49 \\
\bottomrule
\end{tabular}\end{adjustbox}
\end{table}

In a further experiment, every method received the same fixed list of ten
updates, written in advance, in the same order, and Fig.~\ref{fig:replay} shows which updates each method deployed and the
true rate of the version each update would have created. In Example 1 continual
validation refused the switch to ABEL, the level of 0.05 and the rule with scaled
limits above a coefficient of variation of 0.35, which would all have exceeded
0.05, and it deployed the rule for the partial replicate and the rule with scaled
limits above 0.40, whose rates stay within the level. The static benchmark deployed
the switch to ABEL, whose rate exceeds 0.05 only near a coefficient of variation of
0.30, outside its stored set. Learn then Test refused all three new decision rules. In Example 2 continual validation refused daily looks without sequential
testing and the Bayesian threshold of 0.95, and it deployed sequential testing,
post-stratification and the rule with two planned looks.

\begin{figure}[htbp]
\centering
\IfFileExists{fig_replay.pdf}{\includegraphics[width=\textwidth]{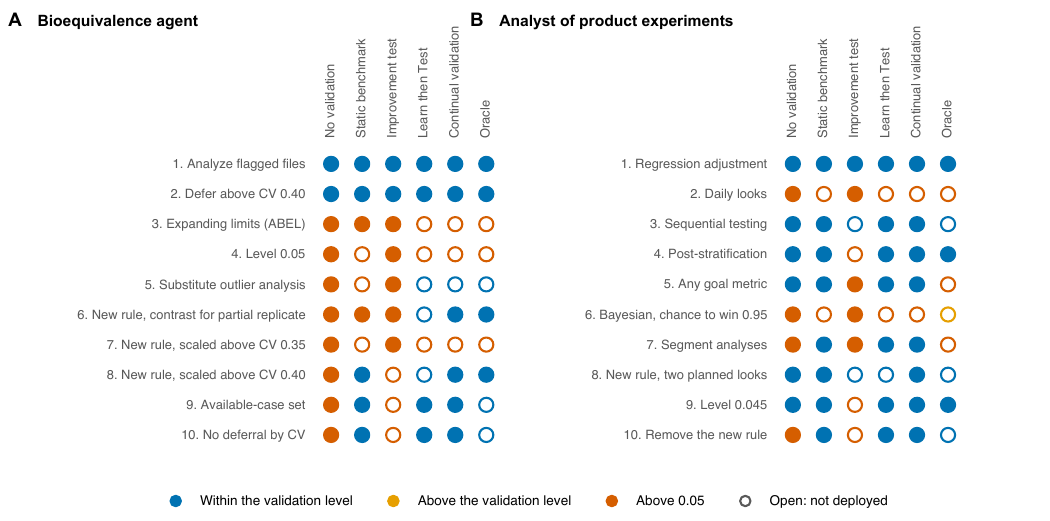}}{\figpending{The figure is generated from the runs with updates written in advance.}{4cm}}
\caption{Updates written in advance. The same ten updates, in the same order, judged by
every method in (\textit{A}) Example 1 and (\textit{B}) Example 2. A filled circle
marks an update the method deployed and an open circle one it did not. The color
gives the true Type I error rate of the version the update would create, within
$\alpha'$ or too close to $\alpha'$ to resolve (blue), above $\alpha'$
(orange), or also above $\alpha$ (vermilion).}
\label{fig:replay}
\end{figure}

\FloatBarrier

\subsection{Random scenarios cover the space between the grid values}
\label{sec:exp-be-random}

The runs above validated every update on fixed scenarios and added scenario
kinds when an update read new information (section~\ref{sec:extension}). The
third layer of continual validation draws new scenarios at random after every
freeze (section~\ref{sec:random}), so that no version can overfit the
scenarios that validation uses. We ran this layer in Example 1.

The distribution $\Pi$ was fixed before the first run, and its draws are
independent. With probability one half a draw resamples one of the ten
validation datasets, chosen uniformly, with the design and coefficient of
variation of that dataset. Otherwise it is a Gaussian model whose coefficient
of variation is uniform on 0.20 to 0.55 and whose design is the full or the
partial replicate with probability one half each. The number of volunteers is
uniform on 24, 30, 36, 42, 48, 54 and 60, volunteers drop out independently of
their responses at a rate uniform on 0 to 25 percent, and the true ratio lies
at the lower or the upper acceptance limit implied by the true variability or
beyond it by up to 0.05 on the log scale. The grid holds three of these sample
sizes, three dropout rates and two placements of the ratio, so most draws fall
between or beyond its values. The five datasets that validation never uses
stay outside $\Pi$, and the off-grid scenarios remain an independent test.
Dropout that depends on the response also stays outside $\Pi$, because it
moves the starting version of every run, the reference analysis with its two
deferral rules, above $\alpha$, to a true Type I error rate of 0.059 on
dataset 5 with 48 volunteers and 20 percent dropout.

In round $r$, after the freeze, continual validation with random scenarios
adds $K_r$ scenarios drawn from $\Pi$ on a random stream that the developer
never sees. A drawn scenario takes the reference bound where it is known
without simulation, zero on the partial replicate and $\alpha_0$ on a Gaussian
full replicate, and the trivial bound 1 otherwise, so that the direct bound
decides it. All scenarios of the round, fixed and drawn, must pass together.
With $\varepsilon_\Pi=0.05$ and $\omega=0.005$, $K_r$ is the smallest integer
with $(1-\varepsilon_\Pi)^{K_r}\le6\omega/(\pi^2r^2)$, which gives 113
scenarios in the first round and 203 in the tenth. By
Proposition~\ref{prop:random}, with probability at least $1-\delta-\omega=0.99$
every deployed update is then within $\alpha'$ on every fixed scenario of its
round and on all of $\Pi$ except a set of probability at most 0.05. Every drawn
scenario uses \SimBE{} simulated studies, as the fixed scenarios do.

In a second experiment with updates written in advance, continual validation on the fixed scenarios and
continual validation with random scenarios received the same ten updates in
the same order, each method in three runs with different validation streams
(Fig.~\ref{fig:random}\textit{A}). Six updates come from the runs above, a
deferral threshold of 0.50, two new decision rules that defer studies near
the switch to widened limits, the analysis of flagged files, the level 0.035
and the removal of the deferral threshold. The other four, the off-grid updates, copy the current
analysis at the sample sizes of the grid and switch to widened limits at level
0.05 without deferral at every other size, at sizes above 48 volunteers, or at
30 and 42 volunteers, the version that section~\ref{sec:random} describes.
Twenty further runs used continual validation with random scenarios and
Claude Sonnet 5.5 as developer, with the instructions and simulation sizes of
the runs above. The coverage of $\Pi$ by a deployed version was measured on 300
scenarios drawn from $\Pi$ on a separate evaluation stream, each with
100{,}000 simulated studies, as the share of scenarios in which its true Type I
error rate exceeds $\alpha'$.

Continual validation on the fixed scenarios deployed all four off-grid updates
in each of the three runs, because on the grid they decide exactly like the
version in use. With random scenarios it refused all twelve (Table~\ref{tab:exp-be-random}), each time on a
drawn scenario whose sample size lies outside the grid, with upper bounds from
0.055 to 0.088. The versions these updates create exceed
$\alpha'$ on at most 4 percent of $\Pi$, and the draws of every round found
that region. Both methods deployed the same five realistic updates in every run,
with nearly equal bounds, and both refused the level 0.035, whose excess the
grid itself detects. The versions that passed on the fixed scenarios reached a
true Type I error rate of 0.098 at sample sizes outside the grid, about twice
$\alpha$, while every version deployed with random scenarios stayed at or below
0.042 there (Figs.~\ref{fig:robust}\textit{A} and~\ref{fig:random}\textit{B}). The last versions of both methods
had the same power, 0.78, and the same largest rate on the validation
scenarios, 0.043, so the random scenarios removed only the behavior that the
grid cannot see. Under dropout that depends on the response, which lies
outside $\Pi$, the last versions of both methods reached 0.065, as in the runs
above.

The random scenarios cost nothing when the updates did not target the grid. None of the 69 versions that continual validation
deployed on the fixed scenarios in the runs above exceeded $\alpha'$ in any of
the 300 scenarios drawn from $\Pi$ for the evaluation, and their largest rate
there was 0.041. In the 20 runs with random scenarios, no run deployed a
version above $\alpha$, the largest rate on the validation scenarios was 0.045,
and the deployed versions reached \BErandPowerShare\ percent of the oracle's
power, against \BEcvPowerShare\ percent on the fixed scenarios, a difference in
final power of 0.021 with a Welch test $p$ value of 0.67
(Figs.~\ref{fig:robust}\textit{B} and~\ref{fig:random}\textit{C}). The fixed scenarios decided every one of the
85 updates that this method refused, and none of its 62 deployed versions
exceeded $\alpha'$ in a scenario drawn from $\Pi$ for the evaluation.

\begin{table}[htbp]
\centering
\caption{Random scenarios in Example 1. Updates written in advance, three runs per method with
the ten updates of Fig.~\ref{fig:random}\textit{A}. Live runs, 20 runs per method
with Claude Sonnet 5.5 as developer, the runs on fixed scenarios being those of
Table~\ref{tab:exp-be}. Rates are true Type I error rates of the deployed
versions, on the validation scenarios and at the sample sizes outside the grid
of the off-grid scenarios. The share of $\Pi$ is the largest, over the
deployed versions, share of 300 scenarios drawn from $\Pi$ on an evaluation
stream in which the true rate exceeds $\alpha'$. Power is that of the last
version after the updates written in advance and the share of the oracle's power in the live
runs.}
\label{tab:exp-be-random}
\scriptsize
\begin{adjustbox}{max width=\textwidth}\begin{tabular}{lcccc}
\toprule
 & \multicolumn{2}{c}{Updates written in advance} & \multicolumn{2}{c}{Live runs} \\
\cmidrule(lr){2-3}\cmidrule(lr){4-5}
 & Fixed scenarios & Random scenarios & Fixed scenarios & Random scenarios \\
\midrule
Off-grid updates deployed & 12/12 & 0/12 & & \\
Updates deployed & 27/30 & 15/30 & 117/200 & 113/200 \\
Runs above $\alpha$, validation scenarios & 0/3 & 0/3 & 0/20 & 0/20 \\
Largest rate, validation scenarios & 0.043 & 0.043 & 0.045 & 0.045 \\
Largest rate, sizes off the grid & 0.098 & 0.042 & 0.042 & 0.043 \\
Largest share of $\Pi$ above $\alpha'$ & 0.04 & 0 & 0 & 0 \\
Power & 0.78 & 0.78 & \BEcvPowerShare\% & \BErandPowerShare\% \\
\bottomrule
\end{tabular}\end{adjustbox}
\end{table}

\begin{figure}[htbp]
\centering
\IfFileExists{fig_random.pdf}{\includegraphics[width=\textwidth]{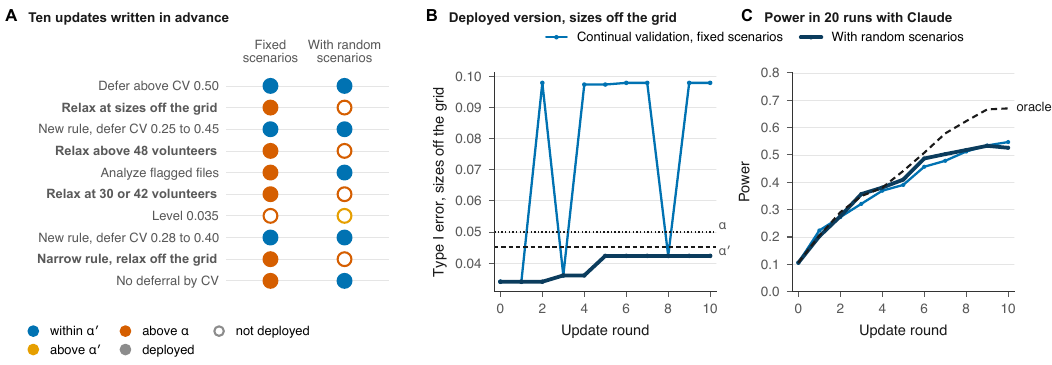}}{\figpending{The random-scenario figure is generated from the runs.}{4cm}}
\caption{Random scenarios in Example 1. (\textit{A}) Updates written in advance. The
same ten updates, in the same order, judged by continual validation on the
fixed scenarios and with random scenarios, in the first of three runs, and
the other runs made the same decisions. Bold labels mark the four updates that
relax the analysis only at sample sizes outside the grid. A filled circle marks
a deployed update and an open circle a refused one. The color gives the largest
true Type I error rate of the version the update would create, on the
validation scenarios or at the sample sizes outside the grid, within $\alpha'$
(blue), above $\alpha'$ (orange) or above $\alpha$ (vermilion).
(\textit{B}) Largest true Type I error rate of the deployed version at 30, 42
and 60 volunteers, the sample sizes of the off-grid scenarios outside the grid,
after each round of these updates. (\textit{C}) Mean true power of the deployed
version over 20 runs with Claude Sonnet 5.5 as developer, on the fixed
scenarios (the runs of Fig.~\ref{fig:results}\textit{D}) and with random
scenarios, and of the oracle (dashed).}
\label{fig:random}
\end{figure}

\FloatBarrier

\subsection{The agent reaches the decisions of the validated rule}
\label{sec:exp-be-agent}

The agent is a broader tool than the analyst of this example. Besides the
analysis, it plans studies, proposing a design and a sample size and checking
their robustness, assesses the protocol before the data arrive, and writes the
report of each run. The analyst is its analysis part, from the data file to
the decision on bioequivalence. It changes only through the updates of
section~\ref{sec:exp-common}, new settings and new decision rules in code that
the developer proposes and the validator runs. Validation runs the decision
rule of each version through the vectorized reimplementation, and by
Proposition~\ref{prop:envelope} the measured rate bounds that of the analyst
because every run of the agent ends in the decision of this rule or in a
deferral. We checked on simulated studies that the agent reaches these
decisions, first with its deterministic tools alone and then with the full
agent driven by its language model.

The studies came from the validation scenarios of
section~\ref{sec:exp-be-scenarios}. One null study was simulated in each of 150
scenarios with the true ratio at a limit, and one alternative study in each of
150 workload cells with a true ratio of 0.95 or 1.00. Each study was written as
a data file in the format that the agent reads and analyzed under three
configurations at level 0.030, unscaled average bioequivalence on the complete
case, which is the reference analysis $v_0$, and ABEL on the complete and on the
available case. The tools ran the computing steps of the analysis in the order
of the playbooks, opening the study, inspecting and preparing the data, fitting
the model, running the diagnostics and computing the interval. In all 900
cases the tools reached the decision of the rule
(Table~\ref{tab:exp-be-agent}). They refused the 88 cases of unscaled average
bioequivalence on the partial replicate, which are exactly the cases in which
$v_0$ defers. In the other 812 cases the bounds of the interval agreed within
$3.2\times10^{-15}$ on the log scale, and under ABEL the within-reference
coefficient of variation agreed within $2.4\times10^{-15}$ and the degrees of
freedom agreed exactly.

The agent decides on the bounds in percent rounded half up to two decimals,
while every version of the experiment decides on unrounded bounds. Rounding
changed none of the 900 decisions. Its effect on the true Type I error rate was
computed in all 936 validation scenarios with \SimBE{} simulated studies each,
by applying unscaled average bioequivalence and ABEL on the complete case at
level 0.030 to the same studies with and without rounding. Rounding changed the
rate by at most 0.0003 in any scenario. Validation applies to whatever decision
code the analyst runs, so a deployment that decides on rounded bounds validates
the rounded rule.

The full agent then ran each study as it runs for a user. Claude Sonnet 4.6 in
Claude Code drove the analysis through the skills and playbooks of the agent,
from the check of the protocol to the review of the evidence, one session per
case, and each session received only the entry command of the agent with the
case. Most of the 14 cases lay where a slip would change the decision. They
were the three null cases of each configuration whose interval lay closest to
a claim without a claim by the rule and the alternative case of each
configuration closest to the boundary with a claim, together with one
alternative case far from the boundary and one case in which $v_0$ defers. In the closest null case
the lower bound of the interval was 79.985 percent, just below the limit of
80.00 percent. The agent reached the decision of the rule in all
14 cases (Table~\ref{tab:exp-be-agent}). It claimed bioequivalence in the four
alternative cases and in none of the nine null cases, and its intervals agreed
with those of the reimplementation within $4\times10^{-16}$ on the log scale.
In the case in which $v_0$ defers, it stopped at the check of the protocol and
handed the study to a human, because the protocol paired unscaled average
bioequivalence with the partial replicate. A session that completed the
analysis took about five minutes and cost about 1.5 dollars at the list price
of the model, so running the full agent on the \SimBE{} studies of a single
scenario would cost about 750,000 dollars. Validation therefore runs the
decision rule, and these checks found no case in which the agent departed
from it.

\begin{table}[htbp]
\centering
\caption{Agreement of the bioequivalence agent with the decision rule that
validation runs, on simulated studies of Example 1. The tools are the
deterministic tools of the agent run without the language model, and the full
agent is driven by Claude Sonnet 4.6 in Claude Code. A case is a study
analyzed under one configuration.}
\label{tab:exp-be-agent}
\scriptsize
\begin{adjustbox}{max width=\textwidth}\begin{tabular}{lcc}
\toprule
 & Tools & Full agent \\
\midrule
Cases & 900 & 14 \\
Same decision as the rule & 900 & 14 \\
Deferred where the rule decides & 0 & 0 \\
Claims where the rule makes none & 0 & 0 \\
Refused where the rule defers & 88 of 88 & 1 of 1 \\
Largest difference of the bounds, log scale & $3.2\times10^{-15}$ & $3.9\times10^{-16}$ \\
\bottomrule
\end{tabular}\end{adjustbox}
\end{table}

\FloatBarrier

\section{Example 2, an analyst of product experiments built on an open-source platform}
\label{sec:exp-launch}

The second example asks the question of section~\ref{sec:exp-be} for an
analyst in another field. It uses the same developer, deployment methods, oracle
and metrics (sections~\ref{sec:exp-common} and~\ref{sec:exp-metrics}), and
this section gives what is specific to it.

\subsection{The hypothesis test}
\label{sec:exp-launch-test}

Companies test a new feature of an online product in an experiment, often
called an A/B test. Users are randomly assigned to the current product or
to the new feature, and a primary metric, here conversion, the share of users who buy,
is compared between the two groups. The true relative lift is the conversion
rate with the new feature divided by the rate with the current product, minus
one. The null hypothesis $H_0$ is that the lift is at most zero, so that the
feature does not improve conversion, and the alternative $H_1$ is that the lift
is positive. A launch rejects $H_0$, and a Type I error is a launch of a
feature that does not improve conversion. Analysts have freedoms that can
inflate this rate. They may look at the results every day and stop at the first
look that favors a launch, a practice known as optional stopping. They may
choose among metrics or user segments post hoc, after seeing the results. They
may also switch between frequentist and Bayesian decision rules.

\subsection{GrowthBook and the analyst}

GrowthBook is an open-source platform for feature flags, experimentation and
product analytics \cite{si:growthbook2026}. Its statistics engine is the Python
package gbstats, released under the MIT License. The package provides
frequentist two-sided and one-sided $t$ tests, sequential testing with a tuning
parameter, and a Bayesian engine that reports the chance to win, the posterior
probability that the new feature is better. It also provides regression adjustment for pre-experiment
data in the style of CUPED \cite{si:deng2013cuped}, post-stratification, and
analyses by dimension, which split the users into segments. It handles several
goal metrics with an option to correct $p$ values, and its level $\alpha$ can
be set.

In this example a language model analyzes product experiments with GrowthBook.
It runs the gbstats analyses that the configuration of the version fixes and
decides whether to launch, and gbstats computes every number. The configuration fixes the analyses
that gbstats runs and the launch rule, which states the results that lead to a
launch. The model may decline a launch that the launch rule allows, but it
cannot launch when the rule does not. The allowed paths are therefore the
gbstats analyses of the configuration, and the any-path rule is the launch rule
itself
(Proposition~\ref{prop:envelope}). Validation runs gbstats on simulated
experiments and needs no calls to a language model.

\subsection{Reference analysis, settings and new decision rules}

The reference analysis $v_0$ is the one-sided frequentist $t$ test of gbstats
on conversion at the end of the experiment, at the reference level
$\alpha_0=0.040$. It tests the relative lift in conversion and launches when the
test rejects. Metric distributions in product experiments are not
Gaussian, so no analytic bound covers this test. Its reference bounds come
from a one-time validation at $\delta_0=0.001$
(Corollary~\ref{cor:lifetime-ref}). Before the first update of each run, the
validator simulates $10^6$ experiments in every scenario of the full validation
set, including the scenarios that the extension may add, and divides
$\delta_0$ equally among these scenarios. The two checks of
section~\ref{sec:exp-be-settings} follow, on the base set of 32 scenarios below
and on the full validation set, and a run whose first check fails starts from
the version that defers every experiment.

An update may change one of eight settings.
\begin{itemize}
\item Engine. Frequentist, or Bayesian with a launch when the chance to win is
  at least 0.95, 0.975 or 0.99.
\item Regression adjustment. On or off. When it is on, the primary metric is
  adjusted for its pre-experiment value.
\item Sequential testing. On, with tuning parameter 5,000, or off.
\item When to decide. Only at day 14, or once a day from day 1 to day 14, with
  a launch at the first look at which the configured rule says launch.
\item Goal metrics. The primary metric only, or a launch if any of three goal
  metrics wins, with or without a correction of the $p$ values.
\item Segments. The overall analysis only, or a launch if the overall analysis
  or the analysis of any of three segments wins.
\item Post-stratification. On or off, with the segments as strata.
\item Level of the frequentist test. 0.040, 0.045 or 0.050.
\end{itemize}
A metric wins when its one-sided test rejects at the configured level, or when
its chance to win reaches the threshold. The correction of $p$ values applies
Holm's step-down procedure across the three goal metrics. Sequential testing,
the correction and the level apply only to the frequentist engine, and the
Bayesian engine keeps its default prior. The menu therefore holds 384 distinct
configurations, and the reference analysis is one of them. An experiment that
its rule has not launched by day 14 ends at day 14 without a launch.

An update may instead add a new decision rule, a Python function that acts
on the outputs of gbstats and states when to launch. The interface calls the
rule once a day, from day 1 to day 14, with the results up to that day for a
batch of simulated experiments. The rule may request the results that gbstats
reports for any metric, segment and setting. These results are the
estimated lift with its interval, the $p$ value, the chance to win and the
numbers of users. For every experiment the rule returns launch or wait, and the
first launch fixes the day of the decision.

\subsection{Scenarios, workload and simulation sizes}

In every validation scenario $H_0$ holds, because the new feature does not
raise conversion. Each arm has 20,000 or 100,000 users, who arrive over 14 days. The
users of an arm fall into the 42 combinations of day and segment by a
multinomial law with equal probabilities, so the three segments have equal
expected size. Every user has three metrics and a covariate. Conversion, the
primary metric, is binary. Its average rate is 2 or 10 percent, and the three
segments convert at one half, one and one and a half times this rate. Revenue
is zero for users who do not convert and heavy-tailed for users who do, equal
to $\exp(3+1.5W)$, where $W$ is the standard normal latent variable of revenue
in the copula below. Engagement counts the actions, out of seven possible, that
a user takes on the day of arrival, and follows a binomial distribution with 7
trials and success probability 0.3. The covariate is the conversion of the user
before the experiment, at the rate of the user's segment. All outcomes of a
user are observed on the day of arrival, so a look on day $d$ sees every user
who arrived up to day $d$.

A Gaussian copula links the four outcomes of a user. The latent variables of
conversion, revenue and engagement share the correlation $\rho\in\{0,0.5\}$.
The latent variable of the covariate depends on the others only through that of
conversion. Its correlation with that variable is set, within every segment,
so that the covariate and conversion have correlation $c\in\{0,0.5\}$. The
null scenarios are of two kinds. In the first the new feature changes nothing.
In the second it lowers conversion by 2 percent in relative terms, a lift of
$-0.02$ that lies beyond the boundary of $H_0$. The base set, the
initial scenario set of this example, crosses two sample sizes, two conversion
rates, two metric correlations, two covariate correlations and the two kinds,
which gives 32 scenarios.

The validator generates the sufficient statistics that gbstats consumes and
never simulates the users who do not convert one at a time. For each arm, day
and segment, it draws the number of converting users. It draws the counts of
the other users over the combinations of covariate and engagement from a
multinomial law whose probabilities come from the copula. It simulates the
converting users one at a time, which gives their revenue, engagement and
covariate jointly. From these draws it forms the counts, sums, sums of squares
and the cross products that regression adjustment needs, cumulated over days
for every daily look. Simulating only the converting users keeps the simulation
of \SimGB{} experiments per scenario inexpensive.

Every setting except the daily looks and the segments depends on
the data only through quantities that the base set varies. A decision at daily
looks also depends on how users arrive over the days, and segment analyses and
post-stratification depend on the shares of the segments. A new decision rule
can read the numbers of users by arm, day and segment. The base set holds
all of these fixed or uniform. Every update that decides at daily looks or uses
segments, through segment analyses or post-stratification, and every new
decision rule, is therefore also validated on 32 added scenarios
(section~\ref{sec:extension}). They repeat the base set with users assigned to
the arms by a fair coin, arrivals on days 6, 7, 13 and 14 at half the rate of
the other days, and expected segment shares of one half, three tenths and one
fifth, in the order of the segments above, with the segment conversion rates
unchanged. The full validation set therefore holds 64 scenarios.

Off-grid scenarios, which no deployment method ever simulates, move the nuisance parameters
of the base set between or beyond its grid values. The off-grid values are
50,000 users per arm, an average conversion rate of 5 percent, heavier
revenue tails with $\exp(3+2.0W)$ in place of $\exp(3+1.5W)$, metric
correlations of 0.25 and 0.75, and a covariate correlation of 0.25. Five
families each change one of these parameters to its off-grid values and keep
the grid values of the others. One more family changes all of them together.
Every family crosses its values with the two kinds of null scenario, which
gives 116 scenarios.

Power is measured on experiments in which the new feature raises conversion by
1, 2 or 3 percent in relative terms and changes nothing else. The workload
crosses these lifts with the sample sizes, conversion rates, metric
correlations and covariate correlations of the base set, which gives 48 cells
of equal weight. As in the second kind of null scenario, the new feature
multiplies the conversion rate of every segment by the same factor and leaves
the copula unchanged. Power is the probability of a launch, averaged over the
cells. The usefulness metric of the performance report is the share of workload
experiments launched and the median number of days to the decision. An
experiment that is not launched counts at day 14. The metric is estimated each
round from 10,000 new workload experiments, and the experiment reports it
alongside the five metrics. The performance report and the improvement test draw
each workload experiment from a cell chosen uniformly at random.

Continual validation draws \SimGB{} experiments per scenario per update. The
static benchmark stores \SimGB{} experiments per scenario of the base set, and
the improvement test draws \SimGB{} new workload experiments per round. Learn
then Test tests the 384 configurations once on \SimGB{} experiments per
scenario, after a pilot of one tenth that size.

\subsection{Implementation requirements}

Validation needs an analysis for every metric, segment and daily look of every
simulated experiment. The validator therefore calls the computations of
gbstats through a vectorized reimplementation, and every simulation of the
experiment, the evaluation stream included, uses it. It matches gbstats itself,
taken from its repository at commit b71bbbc, on 300 random sets of statistics
across the three metric types, with and without post-stratification. Lifts,
interval bounds, $p$ values and chances to win agree to within $10^{-12}$, and
the launch decision and day of 75 random pairs of simulated experiment and
configuration equal those that gbstats computes. On GPUs and on processors it
gives the same decision for every configuration of the menu.

\subsection{Results}
\label{sec:exp-launch-results}

Table~\ref{tab:exp-gb} summarizes the five deployment methods and the oracle. Every
run began with regression adjustment. The developer went on to post-stratification,
the level of the test, Holm's correction or any goal metric, sequential testing,
daily looks, Bayesian launch rules and segment analyses, and it wrote new
decision rules. The twenty runs of each method followed twenty different sequences of
updates. Under continual validation the developer proposed \GBcvDistinct{} distinct
versions, \GBcvCodeShare{} percent of its updates were new decision rules, and
\GBcvRunsFinalCode{} of the twenty runs ended with such a rule. Under Learn then
Test it proposed \GBltCode{} new decision rules, none of which that method could
deploy.

\begin{table}[htbp]
\centering
\caption{Example 2. Columns as in Table~\ref{tab:exp-be}.}
\label{tab:exp-gb}
\scriptsize
\begin{adjustbox}{max width=\textwidth}\begin{tabular}{lcccccccrr}
\toprule
Method & Runs above $\alpha$ & Runs above $\alpha'$ & Largest rate & Power & \% of oracle & Updates & Above $\alpha$ & Datasets (millions) & Seconds \\
\midrule
No validation & 12/20 (39 to 78) & 20/20 (84 to 100) & 0.183 & 0.22 & 107 & 200 & 33 &  &  \\
Static benchmark & 0/20 (0 to 16) & 0/20 (0 to 16) & 0.044 & 0.20 & 98 & 200 & 6 & 16 & 1080 \\
Improvement test & 19/20 (76 to 99) & 20/20 (84 to 100) & 0.817 & 0.53 & 255 & 200 & 80 &  &  \\
Learn then Test & 0/20 (0 to 16) & 0/20 (0 to 16) & 0.043 & 0.17 & 80 & 200 & 8 &  &  \\
Continual validation & 0/20 (0 to 16) & 0/20 (0 to 16) & 0.044 & 0.20 & 96 & 200 & 4 & 29 & 329 \\
Oracle & 0/20 (0 to 16) & 0/20 (0 to 16) & 0.044 & 0.21 &  & 199 & 9 &  &  \\
\bottomrule
\end{tabular}\end{adjustbox}
\end{table}

The developer used the reported bounds to guess the limit. In one run validation
refused the level of 0.05 with an upper bound of 0.0498, and the developer wrote that the
limit appeared to be about 0.04 and kept its later updates below it. Several
runs wrote rules that spend the level over the daily looks, with a strict
threshold before day 14 and a larger one at day 14, and then raised the threshold
at day 14 in small steps toward the bound that validation had reported. This is
the tuning against reported results that section~\ref{sec:reuse} describes, and
because continual validation draws new datasets for every update, it cannot
exploit the noise of earlier draws.

In another run, validation refused daily looks with an upper bound of 0.170.
The developer turned on sequential testing, switched it off again after its
performance report showed far fewer launches, and wrote a rule that spends the
level over the daily looks, with $p$ below 0.0004 from day 4 to day 13 and
below 0.0414 at day 14. The rule passed with a bound of 0.0416, a larger
threshold at day 14 passed in the next round, and the final version had a
largest true Type I error rate of 0.043 and a true power of 0.21.

No validation produced a version above 0.05 in 12 of the 20 runs, and
the improvement test did so in 19. These versions exceeded 0.05 after daily looks
without sequential testing, sequential testing switched off while daily looks
stayed on, segment analyses, any goal metric, or new decision rules, and they
did so in the base scenarios as well as in the added ones. The largest rate, 0.82,
came from a version deployed under the improvement test. It combined a Bayesian
threshold of 0.95 with daily looks, any goal metric and segment analyses. The static benchmark
refused the six updates that would have created a version above 0.05, and it
deployed no version above $\alpha'$. These six were daily looks without sequential
testing, sequential testing switched off with daily looks on, and segment
analyses, and each of them exceeded 0.05 in base scenarios, which the stored set
contains.

Continual validation deployed no version above $\alpha'$ in any run. It judged
\GBcvJudged{} updates and deployed \GBcvDeployedN{}. Of the \GBcvRejectedN{} it
refused, the evaluation stream placed \GBcvRejAbove{} above $\alpha'$, and the
other \GBcvRejValid{} had true rates of 0.0435 and 0.0434 and validation bounds
that exceeded $\alpha'$ by less than 0.0002. Learn then Test reached \GBltPowerShare{} percent of
the oracle's power and continual validation \GBcvPowerShare{} percent. On the 116
off-grid scenarios the deployed versions of continual validation stayed at or
below \GBoffGridMax{}. Each validation simulated about \GBcvDatasetsPerVal{}
million experiments and took a median of \GBcvSecondsPerVal{} seconds on one GPU,
and the one-time validation of the reference analysis took a median of
\GBrefSeconds{} seconds per run.

\FloatBarrier

\section{Other decisions}
\label{sec:other-decisions}

The validator asks two things of a domain, a simulator for the scenarios in
which a rejection is wrong and a decision rule written out in full. Two
supplementary studies used it outside bioequivalence, for approving a therapy
from a survival trial and for launching a feature from an online experiment. In
each study the authors wrote down the shortcuts that an analyst under pressure
to find an effect might take. They also wrote down adaptive rules that pay for
their adaptivity, and the validator judged each rule against a fixed reference
analysis. In both decisions, at the one-shot level of $0.05$, every shortcut
fails validation and every rule that pays for its adaptivity passes. In the
online experiment, the best rule that passes keeps nearly all of the power of
the reference analysis while using a smaller sample on average. In the
survival trial, the best rule that passes raises power well above that of the
reference analysis.

Both studies follow the conventions of confirmatory trials, in which a rule
rejects when it finds a significant two-sided difference. The therapy is
approved or the feature launched when that difference favors it, so the Type I
error rate bounds the rate of false approvals and of false launches.

The rules of each study were fixed in advance and judged once, on simulated
studies they shared, by a one-shot form of validation. The reference analysis
was validated from scratch with exact Clopper and Pearson bounds
\cite{si:clopper1934use}. Each rule was then compared with the reference analysis
on the same simulated studies. The comparison used an upper confidence bound
on the rule's added rejections and a lower confidence bound on its removed
rejections (Theorem~\ref{thm:relative}). A rule's bound is the bound of the
reference analysis plus the rule's largest paired increment, and a rule passes
when this bound is at most $0.05$. The bounds of the reference analysis and of
each rule held at confidence 0.975, simultaneously over the scenarios.

This one-shot validation differs from continual validation
(Algorithm~\ref{alg:protocol}) in four respects. Its target is $0.05$, where
Algorithm~\ref{alg:protocol} validates at the validation level
$\alpha'=0.0452$. Its error level is split across the scenarios, where
Algorithm~\ref{alg:protocol} gives every scenario the whole share $\delta_r$.
Its bound adds the largest increment over the scenarios to the largest
reference bound, where Algorithm~\ref{alg:protocol} combines the bounds within
each scenario and also uses the direct bound. In the online experiment its
reference analysis runs at $0.045$, where Example 2 uses the
reference level $\alpha_0=0.040$. The bounds of the survival rules that pass
lie below $\alpha'$. In the online experiment the bounds of the rules that pass
and of the reference analysis lie between $\alpha'$ and $0.05$. These bounds would fail at $\alpha'$, hence
$\alpha_0=0.040$ in Example 2.

\subsection{Approving a therapy from a survival trial}
\label{sec:exp-e5}

A two-arm trial compares survival under a new therapy with survival under
control. In every null scenario the two arms share one Weibull survival
distribution. The scenarios cross its shape (0.75, 1 or 1.25), the fraction of
patients censored (0.2 or 0.4) and the number of patients per arm (100 or 200).
The crossing gives \EfiveCells{} scenarios with \EfiveM{} simulated trials
each, all with independent uniform censoring. Power is measured under a delayed
effect, with no difference before a fixed time and a hazard ratio of 0.40 after
it. Delayed effects are common in immunotherapy trials, and under them the
log-rank test loses power and analysts look for alternatives.

The individually valid analyses are the log-rank test, the difference in
restricted mean survival time (RMST) at a prespecified horizon, and the
Fleming-Harrington test FH(0,1). FH(0,1) is a log-rank test that gives late
events more weight. A prespecified diagnostic compares log-rank statistics
computed before and after the median survival time of the control group and
flags hazards that are not proportional. The reference analysis, the log-rank
test at level \EfiveAlphaZero, was validated from scratch with bound
\EfiveCoreUcb, and its power under the delayed effect is \EfivePowCore.

\begin{table}[!htbp]
\centering
\caption{Decision rules for approving a therapy from a survival trial}
\label{tab:e5}
\begin{adjustbox}{max width=\textwidth}\begin{tabular}{p{0.44\textwidth}cccc}
\toprule
Rule & Type I error rate & Bound & Validation & Power \\
\midrule
Log-rank test at level \EfiveAlphaZero{} (reference analysis) & & \EfiveCoreUcb & reference & \EfivePowCore \\
RMST when the log-rank test misses and the diagnostic flags & \EfivePickWorst & \EfivePickBound & fails & \EfivePowPick \\
RMST at the best of three horizons & \EfiveTauWorst & \EfiveTauBound & fails & \\
Prespecified router, RMST when the diagnostic flags & \EfiveRtWorst & \EfiveRtBound & fails & \EfivePowRt \\
Log-rank test or FH(0,1), each at half the level of the reference analysis & \EfiveCbWorst & \EfiveCbBound & passes & \EfivePowCombo \\
Router to FH(0,1) when the diagnostic flags, inner level \EfiveRcalAlpha & \EfiveRcalWorst & \EfiveRcalBound & passes & \EfivePowRcal \\
\bottomrule
\end{tabular}\end{adjustbox}

\addtabletext{The Type I error rate is the largest estimated rate over the
\EfiveCells{} null scenarios, each with \EfiveM{} simulated trials. The bound is
the bound of the reference analysis plus the rule's largest paired increment,
and a rule passes validation when its bound is at most 0.05. Power is the mean
over the scenarios under the delayed effect. Blank cells are not reported.\par}
\end{table}

Table~\ref{tab:e5} lists six rules, and all three shortcuts among them fail
validation. The first two shortcuts test at the nominal level 0.05. Reporting RMST
when the log-rank test misses and the diagnostic flags gives the trial a second
chance, and it raises the Type I error rate to \EfivePickWorst. Reporting RMST
at whichever of three horizons gives the largest statistic raises it to
\EfiveTauWorst. The third shortcut, a router fixed before any data were seen,
sends the trial to RMST when the diagnostic flags and to the log-rank test
otherwise. Both tests run at level \EfiveAlphaZero, and the router's Type I
error rate is \EfiveRtWorst.

Fixing a routing rule in advance does not protect it, because the diagnostic is
computed from the trial's own data and is correlated with both test statistics.
Routing between valid tests preserves their level when the routing information
is independent of the statistics and the internal randomness that the tests use
(Proposition~\ref{prop:routes}), which this diagnostic is not. It is also weak, with power \EfivePowRt{} against \EfivePowCore{}
for the reference analysis.

Two adaptive rules that pay for their adaptivity pass validation. The first
rejects when the log-rank test or FH(0,1) rejects at half the level of the
reference analysis, a Bonferroni split. Its bound is \EfiveCbBound, and its
power of \EfivePowCombo{} exceeds both the power of the reference analysis and
the \EfivePowPick{} of the second-chance rule that failed. The second sends the
trial to FH(0,1) when the diagnostic flags and to the log-rank test otherwise,
both at an inner level of \EfiveRcalAlpha. The inner level was chosen on
separate tuning simulations, and the rule was then validated on new ones,
with bound \EfiveRcalBound{} and power \EfivePowRcal. A rule that adapts to the
data passes once its adaptivity is paid for.

The union rule also shows the allowed-paths result at work
(Proposition~\ref{prop:envelope}). Its two paths are the log-rank test and
FH(0,1), each at half the level of the reference analysis. The union is their
any-path rule, and it passed validation. Any rule for choosing between the two
paths rejects only where one of them does. Three such routers were run, a
diagnostic router, a coin router and an adversarial router. The diagnostic
router took FH(0,1) whenever the diagnostic flagged, which happened in
\EnvFhShareNull{} of null trials and in \EnvFhShareAlt{} of delayed-effect
trials. The coin router chose at random, and the adversarial router took
whichever path rejected. Their worst Type I error rates, \EnvFlagWorst,
\EnvCoinWorst{} and \EnvAdvWorst, all lie below the union's bound of
\EfiveCbBound. Their power ranges from \EnvCoinPow{} for the coin router through
\EnvFlagPow{} for the diagnostic router to \EnvAdvPow{} for the adversarial
router. The validation covers every way of choosing among the allowed paths,
so it never needs to know how a model makes that choice.

\subsection{Launching a feature from an online experiment}
\label{sec:exp-e3}

An online experiment compares a new feature with the current product on
\EthreeM{} correlated metrics and analyzes the accumulating data at \EthreeK{}
equally spaced looks. In every null scenario the feature has no effect on any
metric. The scenarios set the correlation between metrics to 0, 0.5 or 0.9,
with 200{,}000 simulated experiments each. The reference analysis tests the
primary metric once, at the final look, at level \EthreeAlphaZero. It was
validated from scratch with bound \EthreeRefUcb, and its power is
\EthreeRefPower{} when the final z-statistic of the primary metric has mean 2.8
and the correlation is 0.5.

\begin{table}[!htbp]
\centering
\caption{Decision rules for launching a feature from an online experiment}
\label{tab:e3}
\begin{adjustbox}{max width=\textwidth}\begin{tabular}{p{0.44\textwidth}cccc}
\toprule
Rule & Type I error rate & Bound & Validation & Power \\
\midrule
Primary metric at the final look, level \EthreeAlphaZero{} (reference analysis) & & \EthreeRefUcb & reference & \EthreeRefPower \\
Stop at the first nominally significant look & \EthreePeekTie & \EthreePeekBound & fails & \EthreePeekPower \\
Report the best metric at the final look & \EthreeMetricTie & \EthreeMetricBound & fails & \EthreeMetricPower \\
Both shortcuts together & \EthreeBothTie & \EthreeBothBound & fails & \EthreeBothPower \\
O'Brien-Fleming stopping at level \EthreeAlphaZero & \EthreeObfTie & \EthreeObfBound & passes & \EthreeObfPower \\
Bonferroni over the metrics at level \EthreeAlphaZero & \EthreeBonfTie & \EthreeBonfBound & passes & \EthreeBonfPower \\
\bottomrule
\end{tabular}\end{adjustbox}

\addtabletext{The Type I error rate is the largest estimated rate over the
correlation scenarios 0, 0.5 and 0.9, each with 200{,}000 simulated experiments.
Nominal significance means a two-sided test at 0.05. Power is the probability
of a rejection when the final z-statistic of the primary metric has mean 2.8, at
correlation 0.5.\par}
\end{table}

Three shortcuts that an analyst chasing significance might take fail
validation (Table~\ref{tab:e3}). Stopping at the first look at which the
primary metric is nominally significant raises the Type I error rate to
\EthreePeekTie. Reporting whichever metric is most significant at the final
look raises it to \EthreeMetricTie, and doing both raises it to
\EthreeBothTie. The shortcuts buy power, up to \EthreeBothPower, by exceeding the level $0.05$
several times over.

Two rules that pay for their extra looks or metrics pass validation. An
O'Brien-Fleming stopping rule \cite{si:obrien1979multiple} compares the
z-statistic of the primary metric at look $k$ with the boundary
$c\sqrt{\EthreeK/k}$, where $c=\EthreeCobf$ gives the level \EthreeAlphaZero.
Its bound is \EthreeObfBound, and its power of \EthreeObfPower{} is close to
the \EthreeRefPower{} of the reference analysis. Under the same alternative it
stops on average after \EthreeObfInfo\% of the planned sample. A Bonferroni
rule that splits the level equally across the \EthreeM{} metrics passes too,
with bound \EthreeBonfBound, at a power of \EthreeBonfPower. Validation charges
for every look at the data, and how much power a rule keeps after paying
depends on how it spends the level.

\FloatBarrier

\section{Supplementary studies}
\label{sec:earlier-studies}

The closed loop of section~\ref{sec:exp-e4-archived} shows continual validation
at work with a language-model proposer, under the parameters stated there. The
other studies measure quantities that the theory leaves open. They ask how far
composition raises the Type I error rate, how a regulatory switching rule
behaves, how large a validity reserve must be, and how differently language
models behave as analysts. The last study checks
the analysis code against published results.

\subsection{A closed loop with a language-model proposer}
\label{sec:exp-e4-archived}

In this study a language model, claude-sonnet-5, proposed updates to a
bioequivalence analyst over \EfourRounds{} rounds. Only the validator stood
between an update and its deployment. A version was a configuration with five
fields, the first of which set how studies with dropouts are handled. Such
studies could be deferred to a human, analyzed on completers, or analyzed with
the first-period data of the subjects who dropped out. The other fields set the
observed within-subject variability above which the version defers and whether
data-quality problems of a declared kind are repaired automatically. They also
set the role of a prespecified sensitivity analysis that drops the most
outlying subject and an optional threshold for excluding outlying subjects from
the primary analysis. The starting version deferred any study with dropouts, an
observed within-subject coefficient of variation above 0.40 or a data-quality
flag.

Each update changed a single field of the current configuration. The fields
allow updates that guidance does not accept, such as excluding subjects from
the primary analysis for statistical reasons
\cite{si:ich2024m13a,si:ema2010bioequivalence}. They were kept as tempting updates,
and the validator judged each update by its Type I error rate alone. The
proposer saw the current configuration, the documentation of the interface, the
deferral rate of the current version with its causes, its power, and all
earlier deployment decisions with their bounds and worst scenarios. It never saw the
simulated studies that the validator used.

Each update was frozen before the validator drew its simulated studies, and
each round drew new randomness, independent of everything the proposer had
seen. The validator ran the updated version and a validated reference analysis
on the same \SfourMpair{} simulated studies in each of \SfourCellsExt{} null
scenarios at the equivalence boundaries. The scenarios crossed within-subject
coefficients of variation from 20 to 50\%, 24 to 48 subjects, balanced or 60/40
allocation, dropout up to 20\% and a sequence effect up to 0.2. The reference
analysis applied the two one-sided tests at level \StwoAlphaZero, the level
found in section~\ref{sec:exp-reserve}, to the completers and never deferred.
Its validation from scratch, on \SfourMref{} simulated studies per scenario,
gave the bound \SfourRefUcb.

The parameters of this implementation differ from those of the main text,
starting with a lifetime risk of 0.05. With a validation level of 0.05, the
formula of Corollary~\ref{cor:every-decision} bounds the Type I error rate of
each decision by $0.05+0.95\cdot0.05=0.0975$. Half of the lifetime risk went to
the one-time validation of the reference analysis. The other half was spread
over the rounds in shares $0.025\cdot 6/(\pi^2r^2)$ for round $r$, from
\EfourDeltaFirst{} in the first round to \EfourDeltaLast{} in the last. An
update passed validation when its bound, the bound of the reference analysis
plus the largest paired increment over the scenarios, was at most 0.05. Each
round's share was split across the \SfourCellsExt{} scenarios and between the
two paired bounds of each scenario, and one worst-case reference bound entered
every increment. The bounds of this validator are therefore larger than those
of Algorithm~\ref{alg:protocol} at the same lifetime risk.

\begin{table}[!htbp]
\centering
\caption{The archived closed loop, round by round}
\label{tab:e4-archived}
\begin{adjustbox}{max width=\textwidth}\begin{tabular}{clccc}
\toprule
Round & Update & Bound & Validation & Evaluated rate \\
\midrule
1 & \EfourGRChange & \EfourGRBound & passes & \EtruthGR \\
2 & \EfourGSChange & \EfourGSBound & passes & \EtruthGS \\
3 & \EfourGTChange & \EfourGTBound & fails & \EtruthGT \\
4 & \EfourGUChange & \EfourGUBound & fails & \EtruthGU \\
5 & \EfourGVChange & \EfourGVBound & passes & \EtruthGV \\
6 & \EfourGWChange & \EfourGWBound & fails & \EtruthGW \\
7 & \EfourGXChange & \EfourGXBound & passes & \EtruthGX \\
8 & \EfourGYChange & \EfourGYBound & passes & \EtruthGY \\
\bottomrule
\end{tabular}\end{adjustbox}

\addtabletext{Each round changed one field of the configuration. The bound is
the bound \SfourRefUcb{} of the reference analysis plus the largest paired
increment over the \SfourCellsExt{} scenarios at the round's share of the
lifetime risk. An update passed validation when its bound was at most 0.05.
The evaluated rate is the largest Type I error rate over the same scenarios in
an independent simulation with \SfourMref{} studies per scenario.\par}
\end{table}

As Table~\ref{tab:e4-archived} shows, the proposer opened with two valid
updates, and both passed validation. It then tried the two classical ways to a
higher Type I error rate. In round 3 it proposed promoting the sensitivity
analysis whenever that analysis supports equivalence, and in round 4 it
proposed recovering information from the subjects who dropped out. Both failed
validation, and an independent evaluation on \SfourMref{} newly simulated
studies per scenario puts their worst Type I error rates at \EtruthGT{} and
\EtruthGU. In round 6 the proposer asked to exclude outlying subjects from the
primary analysis. That update has an evaluated rate of \EtruthGW, below 0.05,
but its bound came out at \EfourGWBound{} and it failed validation. Failing a
valid update is the error a conservative validator makes, and it costs power,
never validity. The proposer then moved to a milder exclusion cut, which passed
in round 7. It closed by adding a rule that defers when the sensitivity
analysis disagrees with the primary analysis (round 8).

Through the \EfourAdopted{} updates that passed, the deferral rate fell from
\EfourEscStart{} for the starting version to \EfourEscFinal, and autonomous
power rose from \EfourPowerStart{} to \EfourPowerFinal. Both were measured on a
workload of 90 scenarios with bioequivalent products, in which 15\% of studies
carry a data-quality flag. Autonomous power is the share of these studies that
the version itself declares bioequivalent. All bounds of the updates that
passed were at most \EfourMaxBound, and all deployed versions have evaluated
rates of at most \EtruthAdmMax. The proposer's deliberation over all rounds
cost about \$\EfourCostUsd{}, and the \EfourRounds{} validations took
\EfourRecomputeWall{} of computation.

Had the same sequence of updates been deployed with no validation, the
evaluated rate of the deployed version would have peaked at \ENoGateMax. It
would have stayed above 0.05 for \ENoGateInvalidGens{} consecutive versions,
until the proposer happened to replace the harmful sensitivity rule in round 8.

The protocol was then rerun for \ErepN{} further trajectories that differed
only in the proposer's own sampling randomness. Of \ErepRoundsPlanned{} planned
rounds, \ErepProposals{} produced an update that reached the validator, and
\ErepInterfaceRej{} updates failed the typed interface. In \ErepStops{}
rounds the proposer declared that no further change would help, which closed
its trajectory and canceled any rounds left in it. The independent evaluation
places \ErepInvalidProposed{} of the \ErepProposals{} updates above 0.05,
about one in four, and every one of them failed validation. A further
\ErepConservRefusals{} updates whose evaluated rate lies below 0.05 also
failed. The \ErepUniqueAdm{} distinct versions that passed have evaluated
rates of at most \ErepAdmTruthMax.

Per trajectory, the median number of updates that passed was \ErepAdoptMed{} (range
\ErepAdoptLo{} to \ErepAdoptHi). The median final deferral rate was
\ErepEscMed{} (range \ErepEscLo{} to \ErepEscHi), and the median final
autonomous power was \ErepPowMed{} (range \ErepPowLo{} to \ErepPowHi). The
original trajectory, run before the replication, is typical of them.

A greedy optimizer then played the adversary that Theorem~\ref{thm:lifetime}
allows. In each round it simulated all single-field changes on its own pilot
studies and submitted the one with the highest estimated power, coming back to
failed directions later. Over \EscrRounds{} rounds it tried information
recovery and sensitivity promotion twice each. It also tried exclusion cuts,
which it relaxed after each failure until one passed. It collected
\EscrRefused{} failures, and its \EscrAdopted{} versions that passed have
bounds of at most \EscrMaxAdm. Its final version, with power \EscrPow{} and
deferral rate \EscrEsc, matches the benign trajectories. Searching for the edge
of validation bought the adversary nothing.

\subsection{Valid analyses do not compose}
\label{sec:exp-nonclosure}

By Theorem~\ref{thm:nonclosure}, an analyst who reports whichever of $K$ exact
level-$\alpha$ analyses rejects can reach a Type I error rate of
$\min(1,K\alpha)$, and this constant is attained. Table~\ref{tab:nonclosure}
checks both parts with \SoneNrepA{} simulated replicates per row. Independent
analyses give $1-(1-\alpha)^K$, and analyses built to have disjoint rejection
regions attain $K\alpha$. Each simulated value lies within Monte Carlo error of
its theoretical value, with a largest deviation of $\SoneMaxDev\times10^{-4}$.

\begin{table}[!htbp]
\centering
\caption{Selecting among $K$ exact analyses after seeing the data}
\label{tab:nonclosure}
\begin{adjustbox}{max width=\textwidth}\begin{tabular}{ccccc}
\toprule
 & \multicolumn{2}{c}{Independent analyses} & \multicolumn{2}{c}{Disjoint rejection regions} \\
\cmidrule(lr){2-3}\cmidrule(lr){4-5}
$K$ & Simulated & $1-(1-\alpha)^K$ & Simulated & $\min(1,K\alpha)$ \\
\midrule
2 & \SoneTabSInd & \SoneTabSIndTh & \SoneTabSAdv & \SoneTabSAdvTh \\
3 & \SoneTabTInd & \SoneTabTIndTh & \SoneTabTAdv & \SoneTabTAdvTh \\
4 & \SoneTabUInd & \SoneTabUIndTh & \SoneTabUAdv & \SoneTabUAdvTh \\
5 & \SoneTabVInd & \SoneTabVIndTh & \SoneTabVAdv & \SoneTabVAdvTh \\
8 & \SoneTabYInd & \SoneTabYIndTh & \SoneTabYAdv & \SoneTabYAdvTh \\
10 & \SoneTabRQInd & \SoneTabRQIndTh & \SoneTabRQAdv & \SoneTabRQAdvTh \\
\bottomrule
\end{tabular}\end{adjustbox}

\addtabletext{Type I error rate of an analyst who reports whichever of $K$ exact
level-$0.05$ analyses rejects, with \SoneNrepA{} simulated replicates per row.\par}
\end{table}

The construction with disjoint rejection regions is extreme, so the same
question was put to two analyses that a bioequivalence statistician would
defend. The two one-sided tests \cite{si:schuirmann1987comparison} are exactly
level $\alpha$ at the equivalence boundary. The distribution-free procedure of
Hauschke and colleagues \cite{si:hauschke1990distribution} builds on the
Hodges-Lehmann interval for the shift between the period differences of the two
sequences. This procedure is conservative because its rank statistic is
discrete. Nine scenarios at the boundary cross 24, 36 or 48 subjects with
within-subject coefficients of variation of 20, 25 or 30\%. With \SoneNrepB{}
simulated studies per scenario, the largest estimated rates of the two analyses
are \SoneBeParMax{} and \SoneBeNparMax, and the first lies within Monte Carlo
error of $\alpha$.

An analyst who reports the parametric result and falls back on the
distribution-free one when the parametric analysis fails reaches \SoneBeWorst,
about \SoneBeRatio\% above the nominal level. The two analyses are strongly
dependent, which keeps the inflation moderate. A moderate inflation is exactly
what a check of each analysis on its own misses. A fallback analysis is
also the kind of update an analyst making itself more robust would propose,
which is why every version is validated as a whole.

\subsection{A regulatory switching rule}
\label{sec:exp-rsabe}

For highly variable drugs the FDA scales the equivalence limits with the
reference product's within-subject standard deviation $\sigma_{wR}$
\cite{si:fda2026statistical}. Its mixed procedure chooses its criterion from the
observed $s_{wR}$, and the switching value \SthreeSwitch{} corresponds to a
within-subject coefficient of variation of 30\%. At or above the switching
value, the procedure applies the scaled criterion
$\mu^2-k^2\sigma_{wR}^2\le 0$, with the regulatory constant $k=\log(1.25)/0.25=\SthreeTheta$.
This criterion is tested through the upper confidence bound of Hyslop and
colleagues \cite{si:hyslop2000small}, together with a constraint on the point
estimate. Below the switching value the procedure applies the usual criterion
$|\mu|<\log 1.25$. The switch chooses between two analyses after the data are
seen. It is the situation of Theorem~\ref{thm:nonclosure}, built into a rule in
regulatory use.

The size of a switching rule depends on which configurations count as not
bioequivalent, and this study uses the convention that the switch itself
defines. A product whose true $\sigma_{wR}$ is at most \SthreeSwitch{} is not
bioequivalent when $|\mu|\ge\log 1.25$, and a product whose true $\sigma_{wR}$
exceeds \SthreeSwitch{} is not bioequivalent when $|\mu|\ge k\sigma_{wR}$.
Each criterion is thus judged where the rule intends to apply it. The Type I
error rate is computed on the boundary of this null set, at $\mu=\log 1.25$
below the switching value and at $\mu=k\sigma_{wR}$ above it. The
computation assumes equal within-subject standard deviations for test and
reference. All numbers reported in this section refer to this convention.

Under this convention the rule rejects far more often than the nominal 0.05
just below the switching value. The design is a balanced four-period full
replicate (TRTR/RTRT) with 24, 36 or 48 subjects. The Type I error rate was
computed at the 23 values $\sigma_{wR}=0.16,0.18,\ldots,0.60$, with \SthreeM{}
simulated studies per value. The largest rates on this grid are \SthreeWorstA{}
for 24 subjects, \SthreeWorstB{} for 36 and \SthreeWorstC{} for 48, each at
$\sigma_{wR}=\SthreeAtC$, the last grid value below the switching value. At 48
subjects the rule rejects \SthreeFactor{} times as often as the nominal 0.05.
The rate climbs steeply as $\sigma_{wR}$ approaches the switching value from
below and falls under 0.05 just above it. These grid maxima are therefore lower
bounds, up to Monte Carlo error, on the largest rate over the null boundary.

The mechanism is a selection effect just below the switching value, where the
product lies in the null of the fixed criterion. Its observed $s_{wR}$
nevertheless crosses the switching value often enough that the wider scaled
limits are applied to a product that the fixed limits would not approve. Under
this convention the inflation grows with the number of subjects, because more
subjects raise the power of the scaled test whenever the switch is taken.
Studies of
reference-scaled rules have reported inflation of the same kind
\cite{si:wonnemann2015inflation,si:labes2016inflation,si:munoz2016consumer}. A rule in
regulatory use thus shows the failure that continual validation prevents, with no
language model involved. Each branch is valid on its own, and the composed
procedure is not.

\subsection{The validity reserve}
\label{sec:exp-reserve}

A simulation can show that a Type I error rate lies below the target, never that
it equals the target. A procedure that is exactly level $\alpha$ on the null
boundary therefore cannot be validated at $\alpha$
(Proposition~\ref{prop:slack}). This study measured the reserve that
validation needs, on a null grid of \StwoCells{} scenarios with \StwoM{}
simulated studies each. The scenarios cross both equivalence boundaries, seven
within-subject coefficients of variation from 15 to 50\% and 24, 36 or 48
subjects. They also cross balanced or 60/40 allocation with dropout of 0, 10 or
20\%. The two one-sided tests at $\alpha=0.05$ have a largest estimated rate of
\StwoMaxTie. Their largest Clopper and Pearson bound, simultaneous over the
scenarios at confidence 0.95 through a Sidak split
\cite{si:clopper1934use,si:sidak1967rectangular}, is \StwoMaxUcb, so the test fails
validation. Run at level \StwoAlphaZero, a reserve of \StwoGamma, the same
test has a largest bound of \StwoMaxUcbRes{} and passes.

The reserve is the smallest value, in steps of 0.0002, at which a test would
pass the simultaneous validation with probability 0.90 if every scenario
had the Type I error rate $0.05-\gamma$. It shrinks roughly like $m^{-1/2}$ as
the number $m$ of simulated studies per scenario grows, from \StwoResAVQ{} at
$m=50{,}000$ to \StwoResARQQQ{} at $m=10^6$. Its cost in power is modest,
\StwoPowerLossMean{} percentage points on average and \StwoPowerLossMax{} at
most. The mean and the maximum are taken over 18 scenarios with true ratios from
0.90 to 1.00, coefficients of variation of 20 or 30\% and 24 to 48 subjects.
For the same confidence, the intersection-union test of
Theorem~\ref{thm:iut} needs no division of the level among the scenarios,
which shrinks each bound and the needed reserve with it.

\subsection{Language models as analysts}
\label{sec:exp-e1}

An earlier experiment asked how language models use analytic freedom when they
act as the analyst. Each model received completed bioequivalence studies whose
true formulation difference lay at the equivalence boundary, so that declaring
bioequivalence is a false rejection. Each study came with the results of two
individually valid analyses, a primary and a backup. Charters, the written
instructions that define the analyst's role, differed only in the freedom they
granted. They ranged from a locked charter that fixes the primary analysis to
one that names no primary analysis, with a variant that adds pressure from the
sponsor. The same studies were presented under each charter, and studies in
which the two analyses disagree, the decisive studies, were oversampled. The
mid-tier model, claude-sonnet-5, analyzed \EoneSoneCQNamedN{} studies per
charter, 300 of them decisive. In a later wave the smallest and the largest
model, claude-haiku-4-5 and claude-opus-5, analyzed 250 studies under each
charter of Table~\ref{tab:e1}, 200 of them decisive.

\begin{table}[!htbp]
\centering
\caption{Deferrals to a human on decisive studies, by model and charter}
\label{tab:e1}
\begin{adjustbox}{max width=\textwidth}\begin{tabular}{lccc}
\toprule
Charter & claude-haiku-4-5 & claude-sonnet-5 & claude-opus-5 \\
\midrule
Locked & 5/200 & \EoneSoneCQNamedEsc/300 & 1/200 \\
Expert discretion & 78/200 & \EoneSoneCSNamedEsc/300 & 0/200 \\
Sponsor pressure & 99/200 & \EoneSoneCTNamedEsc/300 & 2/200 \\
No primary named & 35/200 & \EoneSoneCUNoneEsc/300 & 198/200 \\
\bottomrule
\end{tabular}\end{adjustbox}

\addtabletext{A decisive study is one in which the primary and the backup
analysis disagree. The advisory charter was run at full size for
claude-sonnet-5 only and is omitted. Pooled over all charters, a failed primary
analysis was replaced by its passing backup in \EwvHkSubX{} of \EwvHkSubN{}
presentations for claude-haiku-4-5. The corresponding counts are \EwvSnSubX{}
of \EwvSnSubN{} for claude-sonnet-5 and \EwvOpSubX{} of \EwvOpSubN{} for
claude-opus-5. Each denominator counts the presentations in which the primary
failed and the backup passed.\par}
\end{table}

Deferral to a human depended strongly on the model (Table~\ref{tab:e1}). No
model deferred on a study in which the two analyses agreed. All three models
rarely or never replaced a failed primary analysis by its passing backup (table
notes). On the decisive studies, however, the three models parted ways. Under
the charter that names no primary analysis, the smallest model deferred on 35
of 200 decisive studies. The mid-tier model deferred on \EoneSoneCUNoneEsc{} of
300, and the largest model on 198 of 200. Under sponsor pressure the order
reversed, and the smallest model deferred most often.

Deferral never raises the Type I error rate (Proposition~\ref{prop:escalation}),
so all three models erred toward safety, each in its own way and to its own
degree under identical instructions. A guarantee that rested on such behavior
would have to be established again for every model and every update. Continual
validation checks what the deployed version is able to do, so the guarantee
holds whichever model chooses among the allowed paths
(Proposition~\ref{prop:envelope}). A model swap that keeps those paths
therefore needs no new validation (Table~\ref{tab:modifications}).

\subsection{Checks of the analysis code}
\label{sec:oracle}

The analysis code was checked against published reference results for
two-period and replicate designs. Sch\"{u}tz and colleagues published reference
datasets for \EtwoNsets{} two-period bioequivalence studies together with the
results of EquivTest/PK, SAS, WinNonlin and R \cite{si:schutz2014reference}. Our
implementation reproduces every point estimate and 90\% confidence limit on all
of them to the two decimals reported, with no discrepancy
(Table~\ref{tab:oracle}). The datasets include an imbalanced one (C) and a large
imbalanced one with outliers (H), on which a further package in that study
disagreed.

For replicate designs, a fixed-effects implementation of the EMA's Method A
\cite{si:ema2015pkwpqa} reproduces the published point estimate and 90\% interval
of the EMA reference dataset. It gives a within-reference coefficient of
variation of \EtwoDSoneCVwR\% against the published \EtwoDSonePubCVwR\%. On a
second replicate reference dataset
\cite{si:schutz2020reference,si:schutz2022replicatebe} it gives a point estimate of
\EtwoDStwosixPE\% against the published \EtwoDStwosixPubPE\%.

\begin{table}[!htbp]
\centering
\caption{The analysis code on published two-period reference datasets}
\label{tab:oracle}
\begin{adjustbox}{max width=\textwidth}\begin{tabular}{cccc}
\toprule
Dataset & Subjects & This implementation & Published \\
\midrule
A & \EtwoAN & \EtwoAPe{} (\EtwoALo, \EtwoAHi) & \EtwoAPubpe{} (\EtwoAPublo, \EtwoAPubhi) \\
B & \EtwoBN & \EtwoBPe{} (\EtwoBLo, \EtwoBHi) & \EtwoBPubpe{} (\EtwoBPublo, \EtwoBPubhi) \\
C & \EtwoCN & \EtwoCPe{} (\EtwoCLo, \EtwoCHi) & \EtwoCPubpe{} (\EtwoCPublo, \EtwoCPubhi) \\
D & \EtwoDN & \EtwoDPe{} (\EtwoDLo, \EtwoDHi) & \EtwoDPubpe{} (\EtwoDPublo, \EtwoDPubhi) \\
E & \EtwoEN & \EtwoEPe{} (\EtwoELo, \EtwoEHi) & \EtwoEPubpe{} (\EtwoEPublo, \EtwoEPubhi) \\
F & \EtwoFN & \EtwoFPe{} (\EtwoFLo, \EtwoFHi) & \EtwoFPubpe{} (\EtwoFPublo, \EtwoFPubhi) \\
G & \EtwoGN & \EtwoGPe{} (\EtwoGLo, \EtwoGHi) & \EtwoGPubpe{} (\EtwoGPublo, \EtwoGPubhi) \\
H & \EtwoHN & \EtwoHPe{} (\EtwoHLo, \EtwoHHi) & \EtwoHPubpe{} (\EtwoHPublo, \EtwoHPubhi) \\
\bottomrule
\end{tabular}\end{adjustbox}

\addtabletext{Point estimate and 90\% confidence interval of the ratio of test to
reference, in percent. Published values are those reported by Sch\"{u}tz and
colleagues \cite{si:schutz2014reference}.\par}
\end{table}

\FloatBarrier

\section{Reproducibility and software}
\label{sec:repro}

The code and the construction of the null scenarios built from real studies
will be deposited in a public repository upon publication
\cite{si:archive2026lifetime}. The repository will also hold the validation
results and the outputs behind every number. Every figure is drawn by code from
the outputs of the same programs.

\subsection*{Software}
The computations of Fig.~\ref{fig:si-concept} use Python 3.9 with NumPy 2.0 and
SciPy 1.13. The supplementary studies of sections~\ref{sec:other-decisions}
and~\ref{sec:earlier-studies} were run with Python 3.12 and NumPy 2.5. Rerun
with Python 3.9 and NumPy 2.0, the study of section~\ref{sec:exp-nonclosure}
reproduces its archived outputs exactly. 
The examples of
sections~\ref{sec:exp-be} and~\ref{sec:exp-launch} run on vectorized
reimplementations of the experimental bioequivalence agent
and of gbstats at commit \texttt{b71bbbc}
under Python 3.11 with NumPy 1.26, SciPy 1.13 and CuPy 13.6.
Fig.~1 is an
editable PowerPoint slide written by a Python script with python-pptx and exported
to PDF by Microsoft PowerPoint, and Figs.~\ref{fig:results} and~\ref{fig:robust}, like the data panels of Fig.~\ref{fig:si-concept},
are drawn with matplotlib 3.9 in the Helvetica typeface of the captions. Exact one-sided Clopper and
Pearson bounds are computed from quantiles of the beta distribution.

\subsection*{Randomness}
Every simulation is reproducible from recorded random seeds, and the simulated
studies of a validation round are drawn after its version is frozen.

\subsection*{Data}
The ten bioequivalence studies are distributed with the R package replicateBE
\cite{si:schutz2022replicatebe} under the GNU General Public License, version 3,
and were first published in the sources it lists
\cite{si:ema2015pkwpqa,si:fda2017bestudies,si:patterson2012viewpoint,si:shumaker1998phenytoin,si:chow2008design,si:hauschke2007bioequivalence,si:patterson2017bioequivalence}.
The EMA datasets may be reproduced with acknowledgment of the source, and the FDA
files are works of the United States Government. The five datasets of
section~\ref{sec:exp-be-scenarios} that validation never uses, numbers 7, 8, 22, 25 and 28, ship with the
same package and were simulated by its authors. Section~\ref{sec:oracle} uses
the reference datasets for \EtwoNsets{} two-period bioequivalence studies
published by Sch\"{u}tz and colleagues \cite{si:schutz2014reference} and the
replicateBE datasets 1 and 26.

\subsection*{Language models}
Both examples use Claude Sonnet 5.5 (claude-sonnet-5-5) as proposer, sampled
at temperature 1. The supplementary studies used claude-sonnet-5,
claude-haiku-4-5 and claude-opus-5 with fixed instructions. All raw responses
of the models are kept with the code, so every analysis of them reruns without
access to the models.

\FloatBarrier
\renewcommand{\refname}{Appendix References}


\begin{thebibliography}{10}
\bibitem{lu2024aiscientist}
C. Lu, et~al., Towards end-to-end automation of {AI} research.
\newblock {\em\protect\JournalTitle{Nature}} \textbf{651}, 914--919 (2026).

\bibitem{swanson2025virtuallab}
K. Swanson, W. Wu, N.~L. Bulaong, J.~E. Pak, J. Zou, The {Virtual Lab} of {AI}
  agents designs new {SARS-CoV-2} nanobodies.
\newblock {\em\protect\JournalTitle{Nature}} \textbf{646}, 716--723 (2025).

\bibitem{alber2026cellvoyager}
S. Alber, et~al., {CellVoyager}: {AI} {CompBio} agent generates new insights by
  autonomously analyzing biological data.
\newblock {\em\protect\JournalTitle{Nature Methods}} \textbf{23}, 749--759
  (2026).

\bibitem{zhang2026virtualbiotech}
H.~G. Zhang, P. Eckmann, J. Miao, A.~B. Mahon, J. Zou, The {Virtual Biotech}: A
  multi-agent {AI} framework for therapeutic discovery and development.
\newblock {\em\protect\JournalTitle{Science}} (2026) \unskip.
  \url{https://doi.org/10.1126/science.aeg6779} (accessed 6 October 2026).

\bibitem{chen2025scienceagentbench}
Z. Chen, et~al., {ScienceAgentBench}: Toward rigorous assessment of language
  agents for data-driven scientific discovery in {\em International Conference
  on Learning Representations}.
\newblock (2025).

\bibitem{majumder2025discoverybench}
B.~P. Majumder, et~al., {DiscoveryBench}: Towards data-driven discovery with
  large language models in {\em International Conference on Learning
  Representations}.
\newblock (2025).

\bibitem{chan2025mlebench}
J.~S. Chan, et~al., {MLE}-bench: Evaluating machine learning agents on machine
  learning engineering in {\em International Conference on Learning
  Representations}.
\newblock (2025).

\bibitem{nie2026dsgym}
F. Nie, et~al., {DSGym}: A holistic framework for evaluating and training data
  science agents.
\newblock {\em\protect\JournalTitle{arXiv [Preprint]}} (2026) \unskip.
  \url{https://arxiv.org/abs/2601.16344} (accessed 6 October 2026).

\bibitem{huang2025popper}
K. Huang, et~al., Automated hypothesis validation with agentic sequential
  falsifications in {\em Proceedings of the 42nd International Conference on
  Machine Learning}, Proceedings of Machine Learning Research.
\newblock Vol.{} 267, pp. 25372--25437 (2025).

\bibitem{miao2026fisherr1}
J. Miao, J. Mu, G. Chen, J. Zou, {Fisher-R1}: Training {LLM} agents for
  reliable hypothesis testing.
\newblock {\em\protect\JournalTitle{arXiv [Preprint]}} (2026) \unskip.
  \url{https://arxiv.org/abs/2608.07437} (accessed 6 October 2026).

\bibitem{schuirmann1987comparison}
D.~J. Schuirmann, A comparison of the two one-sided tests procedure and the
  power approach for assessing the equivalence of average bioavailability.
\newblock {\em\protect\JournalTitle{Journal of Pharmacokinetics and
  Biopharmaceutics}} \textbf{15}, 657--680 (1987).

\bibitem{ich2024m13a}
{International Council for Harmonisation}, {ICH} {M13A} guideline on
  bioequivalence for immediate-release solid oral dosage forms, (ICH),
  Harmonised guideline (2024).

\bibitem{neyman1933problem}
J. Neyman, E.~S. Pearson, On the problem of the most efficient tests of
  statistical hypotheses.
\newblock {\em\protect\JournalTitle{Philos. Trans. R. Soc. Lond. A}}
  \textbf{231}, 289--337 (1933).

\bibitem{lehmann2022testing}
E.~L. Lehmann, J.~P. Romano, {\em Testing Statistical Hypotheses} (Springer,
  ed. 4, 2022).

\bibitem{bertran2026manyai}
M. Bertran, R. Fogliato, Z.~S. Wu, Many {AI} analysts, one dataset: Navigating
  the agentic data science multiverse.
\newblock {\em\protect\JournalTitle{Proc. Natl. Acad. Sci. U.S.A.}}
  \textbf{123}, e2606495123 (2026).

\bibitem{miao2026agentic}
J. Miao, J.~K. Pritchard, J. Zou, The agentic garden of forking paths.
\newblock {\em\protect\JournalTitle{arXiv [Preprint]}} (2026) \unskip.
  \url{https://arxiv.org/abs/2607.01507} (accessed 6 October 2026).

\bibitem{tocher1950extension}
K.~D. Tocher, Extension of the {Neyman-Pearson} theory of tests to
  discontinuous variates.
\newblock {\em\protect\JournalTitle{Biometrika}} \textbf{37}, 130--144 (1950).

\bibitem{chen2024chatgptdrift}
L. Chen, M. Zaharia, J. Zou, How is {ChatGPT}'s behavior changing over time?
\newblock {\em\protect\JournalTitle{Harvard Data Science Review}} \textbf{6}
  (2024).

\bibitem{schmidhuber2007godel}
J. Schmidhuber, {G{\"o}del} machines: Fully self-referential optimal universal
  self-improvers in {\em Artificial General Intelligence}, B. Goertzel,
  C. Pennachin, Eds. (Springer, 2007), pp. 199--226.

\bibitem{zelikman2024stop}
E. Zelikman, E. Lorch, L. Mackey, A.~T. Kalai, Self-taught optimizer ({STOP}):
  Recursively self-improving code generation in {\em Conference on Language
  Modeling}.
\newblock (2024).

\bibitem{zhang2026darwin}
J. Zhang, S. Hu, C. Lu, R. Lange, J. Clune, {Darwin G{\"o}del Machine}:
  Open-ended evolution of self-improving agents in {\em International
  Conference on Learning Representations}.
\newblock (2026).

\bibitem{shinn2023reflexion}
N. Shinn, F. Cassano, A. Gopinath, K. Narasimhan, S. Yao, {Reflexion}: Language
  agents with verbal reinforcement learning in {\em Advances in Neural
  Information Processing Systems}.
\newblock Vol.{}~36, pp. 8634--8652 (2023).

\bibitem{yuksekgonul2025textgrad}
M. Yuksekgonul, et~al., Optimizing generative {AI} by backpropagating language
  model feedback.
\newblock {\em\protect\JournalTitle{Nature}} \textbf{639}, 609--616 (2025).

\bibitem{zhang2026ace}
Q. Zhang, et~al., Agentic context engineering: Evolving contexts for
  self-improving language models in {\em International Conference on Learning
  Representations}.
\newblock (2026).

\bibitem{simmons2011false}
J.~P. Simmons, L.~D. Nelson, U. Simonsohn, False-positive psychology:
  Undisclosed flexibility in data collection and analysis allows presenting
  anything as significant.
\newblock {\em\protect\JournalTitle{Psychological Science}} \textbf{22},
  1359--1366 (2011).

\bibitem{silberzahn2018many}
R. Silberzahn, et~al., Many analysts, one data set: Making transparent how
  variations in analytic choices affect results.
\newblock {\em\protect\JournalTitle{Advances in Methods and Practices in
  Psychological Science}} \textbf{1}, 337--356 (2018).

\bibitem{angelopoulos2023ppi}
A.~N. Angelopoulos, S. Bates, C. Fannjiang, M.~I. Jordan, T. Zrnic,
  Prediction-powered inference.
\newblock {\em\protect\JournalTitle{Science}} \textbf{382}, 669--674 (2023).

\bibitem{jin2026txconformal}
Y. Jin, et~al., {TxConformal}: Controlling false discoveries in {AI}-driven
  therapeutic discovery.
\newblock {\em\protect\JournalTitle{bioRxiv [Preprint]}} (2026) \unskip.
  \url{https://doi.org/10.64898/2026.04.27.721076} (accessed 6 October 2026).

\bibitem{dwork2015reusable}
C. Dwork, et~al., The reusable holdout: Preserving validity in adaptive data
  analysis.
\newblock {\em\protect\JournalTitle{Science}} \textbf{349}, 636--638 (2015).

\bibitem{benjamini2020selective}
Y. Benjamini, Selective inference: The silent killer of replicability.
\newblock {\em\protect\JournalTitle{Harvard Data Science Review}} \textbf{2}
  (2020).

\bibitem{tian2021online}
J. Tian, A. Ramdas, Online control of the familywise error rate.
\newblock {\em\protect\JournalTitle{Statistical Methods in Medical Research}}
  \textbf{30}, 976--993 (2021).

\bibitem{bauer1994evaluation}
P. Bauer, K. K{\"o}hne, Evaluation of experiments with adaptive interim
  analyses.
\newblock {\em\protect\JournalTitle{Biometrics}} \textbf{50}, 1029--1041
  (1994).

\bibitem{thomas2019preventing}
P.~S. Thomas, et~al., Preventing undesirable behavior of intelligent machines.
\newblock {\em\protect\JournalTitle{Science}} \textbf{366}, 999--1004 (2019).

\bibitem{feng2021approval}
J. Feng, S. Emerson, N. Simon, Approval policies for modifications to machine
  learning-based software as a medical device: A study of bio-creep.
\newblock {\em\protect\JournalTitle{Biometrics}} \textbf{77}, 31--44 (2021).

\bibitem{sengupta2026selfevolving}
B. Sengupta, Self-evolving agents with anytime-valid certificates.
\newblock {\em\protect\JournalTitle{arXiv [Preprint]}} (2026) \unskip.
  \url{https://arxiv.org/abs/2607.00871} (accessed 6 October 2026).

\bibitem{sun2026reuse}
X. Sun, Y. Zeng, Z. She, X. Wang, Which self-improvements should we trust?
  {R}eliable self-improvement when agents reuse their benchmarks.
\newblock {\em\protect\JournalTitle{arXiv [Preprint]}} (2026) \unskip.
  \url{https://arxiv.org/abs/2609.33180} (accessed 6 October 2026).

\bibitem{angelopoulos2025learn}
A.~N. Angelopoulos, S. Bates, E.~J. Cand{\`e}s, M.~I. Jordan, L. Lei, Learn
  then test: Calibrating predictive algorithms to achieve risk control.
\newblock {\em\protect\JournalTitle{The Annals of Applied Statistics}}
  \textbf{19} (2025).

\bibitem{clopper1934use}
C.~J. Clopper, E.~S. Pearson, The use of confidence or fiducial limits
  illustrated in the case of the binomial.
\newblock {\em\protect\JournalTitle{Biometrika}} \textbf{26}, 404--413 (1934).

\bibitem{ema2010bioequivalence}
{European Medicines Agency}, Guideline on the investigation of bioequivalence,
  (Committee for Medicinal Products for Human Use), Guideline
  CPMP/EWP/QWP/1401/98 Rev. 1 (2010).

\bibitem{fda2026statistical}
{U.S. Food and Drug Administration}, Statistical approaches to establishing
  bioequivalence, (FDA), Guidance for industry (2026).

\bibitem{davit2009comparing}
B.~M. Davit, et~al., Comparing generic and innovator drugs: A review of 12 years
  of bioequivalence data from the {United States Food and Drug Administration}.
\newblock {\em\protect\JournalTitle{Annals of Pharmacotherapy}} \textbf{43},
  1583--1597 (2009).

\bibitem{chow2008design}
S.-C. Chow, J.-P. Liu, {\em Design and Analysis of Bioavailability and
  Bioequivalence Studies} (Chapman \& Hall/CRC, ed. 3, 2008).

\bibitem{hauschke2007bioequivalence}
D. Hauschke, V. Steinijans, I. Pigeot, {\em Bioequivalence Studies in Drug
  Development: Methods and Applications} (Wiley, 2007).

\bibitem{tothfalusi2009evaluation}
L. T{\'o}thfalusi, L. Endr{\'e}nyi, A. Garc{\'i}a-Arieta, Evaluation of
  bioequivalence for highly variable drugs with scaled average bioequivalence.
\newblock {\em\protect\JournalTitle{Clinical Pharmacokinetics}} \textbf{48},
  725--743 (2009).

\bibitem{davit2012implementation}
B.~M. Davit, et~al., Implementation of a reference-scaled average
  bioequivalence approach for highly variable generic drug products by the {US}
  {Food and Drug Administration}.
\newblock {\em\protect\JournalTitle{AAPS J.}} \textbf{14}, 915--924 (2012).

\bibitem{karalis2012bioequivalence}
V. Karalis, M. Symillides, P. Macheras, Bioequivalence of highly variable drugs:
  A comparison of the newly proposed regulatory approaches by {FDA} and {EMA}.
\newblock {\em\protect\JournalTitle{Pharmaceutical Research}} \textbf{29},
  1066--1077 (2012).

\bibitem{schutz2022replicatebe}
H. Sch{\"u}tz, M. Tomashevskiy, D. Labes, {replicateBE}: Average bioequivalence
  with expanding limits ({ABEL}) (R package version 1.1.3, CRAN) (2022)
  \unskip. \url{https://doi.org/10.32614/CRAN.package.replicateBE} (accessed 6
  October 2026).

\bibitem{schutz2020reference}
H. Sch{\"u}tz, et~al., Reference datasets for studies in a replicate design
  intended for average bioequivalence with expanding limits.
\newblock {\em\protect\JournalTitle{AAPS J.}} \textbf{22}, 44 (2020).

\bibitem{labes2016inflation}
D. Labes, H. Sch{\"u}tz, Inflation of type {I} error in the evaluation of
  scaled average bioequivalence, and a method for its control.
\newblock {\em\protect\JournalTitle{Pharmaceutical Research}} \textbf{33},
  2805--2814 (2016).

\bibitem{munoz2016consumer}
J. Mu{\~n}oz, D. Alcaide, J. Oca{\~n}a, Consumer's risk in the {EMA} and {FDA}
  regulatory approaches for bioequivalence in highly variable drugs.
\newblock {\em\protect\JournalTitle{Stat. Med.}} \textbf{35}, 1933--1943
  (2016).

\bibitem{kohavi2020trustworthy}
R. Kohavi, D. Tang, Y. Xu, {\em Trustworthy Online Controlled Experiments: A
  Practical Guide to {A/B} Testing} (Cambridge University Press, 2020).

\bibitem{larsen2024statistical}
N. Larsen, et~al., Statistical challenges in online controlled experiments: A
  review of {A/B} testing methodology.
\newblock {\em\protect\JournalTitle{The American Statistician}} \textbf{78},
  135--149 (2024).

\bibitem{johari2017peeking}
R. Johari, P. Koomen, L. Pekelis, D. Walsh, Peeking at {A/B} tests: Why it
  matters, and what to do about it in {\em Proceedings of the 23rd {ACM}
  {SIGKDD} International Conference on Knowledge Discovery and Data Mining}.
\newblock pp. 1517--1525 (2017).

\bibitem{growthbook2026}
{GrowthBook, Inc.}, {GrowthBook}: Open source feature flags, experimentation,
  and product analytics (GitHub repository, statistics engine gbstats under the
  MIT License) (2026) \unskip. \url{https://github.com/growthbook/growthbook}
  (accessed 7 October 2026).

\bibitem{fda2019adaptive}
{U.S. Food and Drug Administration}, Adaptive designs for clinical trials of
  drugs and biologics: Guidance for industry (Final guidance) (2019) \unskip.
  \url{https://www.fda.gov/media/78495/download} (accessed 6 October 2026).

\bibitem{fda2024pccp}
{U.S. Food and Drug Administration}, Marketing submission recommendations for a
  predetermined change control plan for artificial intelligence-enabled device
  software functions: Guidance for industry and {Food and Drug Administration}
  staff (Final guidance, originally issued 4 December 2024) (2025) \unskip.
  \url{https://www.fda.gov/media/166704/download} (accessed 6 October 2026).

\bibitem{berger1982multiparameter}
R.~L. Berger, Multiparameter hypothesis testing and acceptance sampling.
\newblock {\em\protect\JournalTitle{Technometrics}} \textbf{24}, 295--300
  (1982).

\end{thebibliography}

\begin{thebibliography}{10}

\bibitem{si:ema2010bioequivalence}
{European Medicines Agency}, Guideline on the investigation of bioequivalence,
  (Committee for Medicinal Products for Human Use), Guideline
  CPMP/EWP/QWP/1401/98 Rev. 1 (2010).

\bibitem{si:clopper1934use}
C.~J. Clopper, E.~S. Pearson, The use of confidence or fiducial limits
  illustrated in the case of the binomial.
\newblock {\em\protect\JournalTitle{Biometrika}} \textbf{26}, 404--413 (1934).

\bibitem{si:berger1982multiparameter}
R.~L. Berger, Multiparameter hypothesis testing and acceptance sampling.
\newblock {\em\protect\JournalTitle{Technometrics}} \textbf{24}, 295--300
  (1982).

\bibitem{si:schuirmann1987comparison}
D.~J. Schuirmann, A comparison of the two one-sided tests procedure and the
  power approach for assessing the equivalence of average bioavailability.
\newblock {\em\protect\JournalTitle{Journal of Pharmacokinetics and
  Biopharmaceutics}} \textbf{15}, 657--680 (1987).

\bibitem{si:foster2008alpha}
D.~P. Foster, R.~A. Stine, {$\alpha$}-investing: A procedure for sequential
  control of expected false discoveries.
\newblock {\em\protect\JournalTitle{J. R. Stat. Soc. Series B Stat. Methodol.}}
  \textbf{70}, 429--444 (2008).

\bibitem{si:tian2021online}
J. Tian, A. Ramdas, Online control of the familywise error rate.
\newblock {\em\protect\JournalTitle{Statistical Methods in Medical Research}}
  \textbf{30}, 976--993 (2021).

\bibitem{si:ich2024m13a}
{International Council for Harmonisation}, {ICH} {M13A} guideline on
  bioequivalence for immediate-release solid oral dosage forms, (ICH),
  Harmonised guideline (2024).

\bibitem{si:labes2016inflation}
D. Labes, H. Sch{\"u}tz, Inflation of type {I} error in the evaluation of
  scaled average bioequivalence, and a method for its control.
\newblock {\em\protect\JournalTitle{Pharmaceutical Research}} \textbf{33},
  2805--2814 (2016).

\bibitem{si:wonnemann2015inflation}
M. Wonnemann, C. Fr{\"o}mke, A. Koch, Inflation of the type {I} error:
  Investigations on regulatory recommendations for bioequivalence of highly
  variable drugs.
\newblock {\em\protect\JournalTitle{Pharmaceutical Research}} \textbf{32},
  135--143 (2015).

\bibitem{si:munoz2016consumer}
J. Mu{\~n}oz, D. Alcaide, J. Oca{\~n}a, Consumer's risk in the {EMA} and {FDA}
  regulatory approaches for bioequivalence in highly variable drugs.
\newblock {\em\protect\JournalTitle{Stat. Med.}} \textbf{35}, 1933--1943
  (2016).

\bibitem{si:schutz2022replicatebe}
H. Sch{\"u}tz, M. Tomashevskiy, D. Labes, {replicateBE}: Average bioequivalence
  with expanding limits ({ABEL}) (R package version 1.1.3, CRAN) (2022)
  \unskip. \url{https://doi.org/10.32614/CRAN.package.replicateBE} (accessed 6
  October 2026).

\bibitem{si:schutz2020reference}
H. Sch{\"u}tz, et~al., Reference datasets for studies in a replicate design
  intended for average bioequivalence with expanding limits.
\newblock {\em\protect\JournalTitle{AAPS J.}} \textbf{22}, 44 (2020).

\bibitem{si:sengupta2026selfevolving}
B. Sengupta, Self-evolving agents with anytime-valid certificates.
\newblock {\em\protect\JournalTitle{arXiv [Preprint]}} (2026) \unskip.
  \url{https://arxiv.org/abs/2607.00871} (accessed 6 October 2026).

\bibitem{si:sun2026reuse}
X. Sun, Y. Zeng, Z. She, X. Wang, Which self-improvements should we trust?
  {R}eliable self-improvement when agents reuse their benchmarks.
\newblock {\em\protect\JournalTitle{arXiv [Preprint]}} (2026) \unskip.
  \url{https://arxiv.org/abs/2609.33180} (accessed 6 October 2026).

\bibitem{si:angelopoulos2025learn}
A.~N. Angelopoulos, S. Bates, E.~J. Cand{\`e}s, M.~I. Jordan, L. Lei, Learn
  then test: Calibrating predictive algorithms to achieve risk control.
\newblock {\em\protect\JournalTitle{The Annals of Applied Statistics}}
  \textbf{19} (2025).

\bibitem{si:growthbook2026}
{GrowthBook, Inc.}, {GrowthBook}: Open source feature flags, experimentation,
  and product analytics (GitHub repository, statistics engine gbstats under the
  MIT License) (2026) \unskip. \url{https://github.com/growthbook/growthbook}
  (accessed 7 October 2026).

\bibitem{si:deng2013cuped}
A. Deng, Y. Xu, R. Kohavi, T. Walker, Improving the sensitivity of online
  controlled experiments by utilizing pre-experiment data in {\em Proceedings
  of the Sixth {ACM} International Conference on Web Search and Data Mining}.
\newblock pp. 123--132 (2013).

\bibitem{si:obrien1979multiple}
P.~C. O'Brien, T.~R. Fleming, A multiple testing procedure for clinical trials.
\newblock {\em\protect\JournalTitle{Biometrics}} \textbf{35}, 549--556 (1979).

\bibitem{si:hauschke1990distribution}
D. Hauschke, V.~W. Steinijans, E. Diletti, A distribution-free procedure for
  the statistical analysis of bioequivalence studies.
\newblock {\em\protect\JournalTitle{International Journal of Clinical
  Pharmacology, Therapy and Toxicology}} \textbf{28}, 72--78 (1990).

\bibitem{si:fda2026statistical}
{U.S. Food and Drug Administration}, Statistical approaches to establishing
  bioequivalence: {G}uidance for industry, (Center for Drug Evaluation and
  Research), Final guidance (2026).

\bibitem{si:hyslop2000small}
T. Hyslop, F. Hsuan, D.~J. Holder, A small sample confidence interval approach
  to assess individual bioequivalence.
\newblock {\em\protect\JournalTitle{Stat. Med.}} \textbf{19}, 2885--2897
  (2000).

\bibitem{si:sidak1967rectangular}
Z. {\v{S}}id{\'a}k, Rectangular confidence regions for the means of
  multivariate normal distributions.
\newblock {\em\protect\JournalTitle{Journal of the American Statistical
  Association}} \textbf{62}, 626--633 (1967).

\bibitem{si:schutz2014reference}
H. Sch{\"u}tz, D. Labes, A. Fuglsang, Reference datasets for 2-treatment,
  2-sequence, 2-period bioequivalence studies.
\newblock {\em\protect\JournalTitle{AAPS J.}} \textbf{16}, 1292--1297 (2014).

\bibitem{si:ema2015pkwpqa}
{European Medicines Agency}, Questions \& answers: Positions on specific
  questions addressed to the {Pharmacokinetics Working Party} ({PKWP})
  (EMA/618604/2008 Rev. 13, Committee for Medicinal Products for Human Use)
  (2015) \unskip.
  \url{https://www.ema.europa.eu/en/documents/scientific-guideline/questions-answers-positions-specific-questions-addressed-pharmacokinetics-working-party_en.pdf}
  (accessed 6 October 2026).

\bibitem{si:archive2026lifetime}
X. Sun, Y. Zeng, X. Wang, Reproducibility code and results for
lifetime-valid decisions by continually evolving {AI} analysts
(GitHub, 2026) \unskip.
\url{https://github.com/xiaojing777/lifetime-valid-ai-analysts}
(to be made public upon publication).

\bibitem{si:fda2017bestudies}
{U.S. Food and Drug Administration}, Bioequivalence studies (Archived web page)
  (2017) \unskip.
  \url{https://wayback.archive-it.org/7993/20171115165246/https://www.fda.gov/Drugs/ScienceResearch/ucm301277.htm}
  (archived 15 November 2017, accessed 6 October 2026).

\bibitem{si:patterson2012viewpoint}
S.~D. Patterson, B. Jones, Viewpoint: Observations on scaled average
  bioequivalence.
\newblock {\em\protect\JournalTitle{Pharmaceutical Statistics}} \textbf{11},
  1--7 (2012).

\bibitem{si:shumaker1998phenytoin}
R.~C. Shumaker, C.~M. Metzler, The phenytoin trial is a case study of
  ``individual'' bioequivalence.
\newblock {\em\protect\JournalTitle{Drug Information Journal}} \textbf{32},
  1063--1072 (1998).

\bibitem{si:chow2008design}
S.-C. Chow, J.-p. Liu, {\em Design and Analysis of Bioavailability and
  Bioequivalence Studies}.
\newblock (Chapman and Hall/CRC, Boca Raton, FL), Third edition, (2008).

\bibitem{si:hauschke2007bioequivalence}
D. Hauschke, V. Steinijans, I. Pigeot, {\em Bioequivalence Studies in Drug
  Development}.
\newblock (Wiley, Chichester, UK), (2007).

\bibitem{si:patterson2017bioequivalence}
S.~D. Patterson, B. Jones, {\em Bioequivalence and Statistics in Clinical
  Pharmacology}.
\newblock (Chapman and Hall/CRC, Boca Raton, FL), Second edition, (2017).

\end{thebibliography}
\end{document}